%% file: TAES2026.tex
\documentclass{IEEEtaes}

\usepackage{color,amsthm}
\usepackage{cite}
\usepackage{amsmath,amssymb,amsfonts}
\usepackage{algorithmic}
\usepackage{algorithm}
\usepackage{graphicx}
\usepackage{textcomp}
\usepackage[dvipsnames]{xcolor}
\usepackage{paralist}
\usepackage{eurosym}
\usepackage{flushend}
\usepackage{url}
\usepackage[utf8]{inputenc}
\usepackage{soul}
\usepackage{caption}
\usepackage[list=true]{subcaption}
\usepackage{booktabs}
\usepackage{tabu}
\usepackage{wrapfig}
\usepackage{arydshln}
\usepackage[normalem]{ulem}
\usepackage{multirow}
\usepackage{siunitx}
\usepackage{adjustbox}
\usepackage{array}
\usepackage{diagbox}

\jvol{XX}
\jnum{XX}
\jmonth{September}
\paper{1234567}
\pubyear{2026}
\doiinfo{TAES.2020.Doi Number}

\newtheorem{theorem}{Theorem}
\newtheorem{lemma}{Lemma}
\newtheorem{proposition}{Proposition}
\newtheorem{corollary}{Corollary}
\theoremstyle{definition}
\newtheorem{assumption}{Assumption}
\newtheorem{definition}{Definition}
\newtheorem{remark}{Remark}
\theoremstyle{plain}
\providecommand{\E}{\mathbb{E}}
\providecommand{\Prob}{\mathbb{P}}
\providecommand{\ind}{\mathbf{1}}
\providecommand{\inner}[2]{\langle #1,\,#2\rangle}

\providecommand{\SNR}{\mathrm{SNR}}

\providecommand{\gagg}{\gamma_{\mathrm{agg}}}
\providecommand{\gzero}{\gamma_{0}}

\providecommand{\Coll}{\mathcal{C}}          % colluder set
\providecommand{\nt}{n_{\mathrm{t}}}         % Tardos code length (shared block)
\providecommand{\eKD}{\varepsilon_{\mathrm{KD}}}% RMS feature-distillation residual
\providecommand{\wt}{\operatorname{wt}}

\providecommand{\Img}{\operatorname{Im}}
\providecommand{\HD}{\mathrm{HD}}
\providecommand{\Com}{\mathsf{Com}}

\makeatletter\def\p@subsection{\thesection-}\makeatother
\begin{document}

\title{ZK-Trace: Certified Collusion Tracing with Zero-Knowledge Credentials for Federated GNSS Interference Monitoring}

\input{00_authors.tex}

\maketitle

\input{00_abstract.tex}

\begin{IEEEkeywords}
  Federated Learning, Model Attribution, Traitor Tracing, Watermarking, Zero-Knowledge Proof, Few-Shot Learning, Global Navigation Satellite System, Interference Classification
\end{IEEEkeywords}

\input{01_introduction.tex}
\input{02_related_work.tex}
\input{03_preliminaries.tex}
\input{04_method.tex}
\input{05_theory.tex}
\input{06_experimental_setup.tex}
\input{07_results.tex}
\input{08_conclusion.tex}

\bibliographystyle{IEEEtran}
\bibliography{references}

\input{09_appendix.tex}

\end{document}

%% file: 00_authors.tex
\author{Redwanul Karim}
%\affil{Fraunhofer Institute for Integrated Circuits IIS, 90411 Nürnberg} 

\author{Nisha L. Raichur}
%\affil{Fraunhofer Institute for Integrated Circuits IIS, 90411 Nürnberg} 

\author{Lucas Heublein}
%\affil{Fraunhofer Institute for Integrated Circuits IIS, 90411 Nürnberg} 

\author{Tobias Feigl}
%\affil{Fraunhofer Institute for Integrated Circuits IIS, 90411 Nürnberg} 

\author{Christopher Mutschler}
%\affil{Fraunhofer Institute for Integrated Circuits IIS, 90411 Nürnberg} 

\author{Felix Ott}
\affil{Fraunhofer Institute for Integrated Circuits IIS, 90411 Nürnberg\\University of Technology Nürnberg (UTN), 90461 Nürnberg}

\receiveddate{Manuscript received September 08, 2026; revised XXXXX 00, 0000; accepted XXXXX 00, 0000.\\
This work has been carried out within the DARCII project, funding code 50NA2401, sponsored by the German Federal Ministry for Economic Affairs and Climate Action (BMWK), and supported by the German Space Agency at DLR, the Bundesnetzagentur (BNetzA), and the Federal Agency for Cartography and Geodesy (BKG). We used ChatGPT 5.4 solely for writing improvements.}
\accepteddate{XXXXX XX XXXX}
\publisheddate{XXXXX XX XXXX}

\corresp{{\itshape (Corresponding author: F. Ott)}.}

\authoraddress{All authors are with the Fraunhofer Institute for Integrated Circuits IIS, 90411 Nürnberg, Germany 
(e-mail: \{\href{mailto:redwanul.karim@iis.fraunhofer.de}{redwanul.karim}, \href{mailto:nisha.lakshmana.raichur@iis.fraunhofer.de}{nisha.lakshmana.raichur}, \href{mailto:lucas.heublein@iis.fraunhofer.de}{lucas.heublein}, \href{mailto:tobias.feigl@iis.fraunhofer.de}{tobias.feigl}, \href{mailto:christopher.mutschler@iis.fraunhofer.de}{christopher.mutschler}, \href{mailto:felix.ott@iis.fraunhofer.de}{felix.ott}\}@iis.fraunhofer.de).}

\markboth{KARIM ET AL.}{ZK-Trace: Collusion-Secure Traitor Tracing with Zero-Knowledge Exculpation}

%% file: 00_abstract.tex
\begin{abstract}
Federated global navigation satellite system (GNSS) monitoring distributes a proprietary classifier to partly trusted stations, any of which may leak its copy. ZK-Trace combines public identity marks, recipient-specific Tardos fingerprints, and zero-knowledge credential verification. The registry supports offline tracing without the leaker's cooperation. We establish conditional false-accusation bounds for arbitrary recovered bit patterns, a finite completeness bound under a hidden-bias residual channel, and a deterministic tracing-score bound for correlated feature-distillation errors. An interval-arithmetic checker makes the conditional bound executable and allocates a common budget across accusation and tamper decisions. Under innocent-row independence, the certificate-based evaluation uses a false-naming budget of $10^{-3}$ per investigation. It isolates all 160 single-owner copies and traces 712 of 720 two-owner mixtures without naming an innocent. Experiments use a simulated GNSS federation and CIFAR-10. Feature matching preserves the feature mark in $20/20$ runs and cross-architecture transfer in $19/20$, at copy-accuracy costs of 4.8 and 6.1 percentage points on GNSS and CIFAR-10. Function-only distillation erases the feature mark, and distillation also removes weight-space marks. These results support verifiable tracing under explicit statistical and cryptographic assumptions. Credential knowledge and recipient evidence serve distinct roles.
\end{abstract}

%% file: 01_introduction.tex
\section{INTRODUCTION}
\label{label_introduction}

Federated learning (FL)~\cite{Fedavg} allows sensor stations to train a shared model while keeping their measurements local. This is a natural fit for global navigation satellite system (GNSS) interference monitoring~\cite{heublein_feigl_crpa}, but it also distributes a proprietary classifier to partly trusted participants. A station that leaks its issued controlled-reception-pattern antenna model exposes the operator's detection capabilities. An adversary can then inspect which jamming and spoofing patterns evade detection, with consequences for positioning, navigation, and timing (PNT).

Tracing such a leak requires evidence linking the model to an enrolled station. Existing federated watermarks provide only part of that evidence. Zero-knowledge ownership schemes such as FedZKP~\cite{yang_yin_zhu} certify group membership without identifying a particular client. Per-client fingerprints identify recipients, but are assigned by the server and read in plaintext, without binding to a client secret~\cite{shao_yang_fedtracker,yu_hong_duw}. Collusion adds a further difficulty: stations can combine their copies to weaken individual marks. Collusion-secure fingerprinting codes address this attack~\cite{boneh1998collusion,tardos2008optimal}, but their guarantees do not directly cover the extraction errors introduced by federated averaging (FedAvg).

ZK-Trace separates two questions: which credentials are represented in the shared model, and which enrolled recipient is associated with a leaked copy (Fig.~\ref{figure_method_overview}). Public identity codewords address the first question, while a recipient-specific Tardos fingerprint~\cite{tardos2008optimal,skoric2008symmetric} addresses the second question. The registry supports offline comparison, and a zero-knowledge proof authenticates the claimant during a dispute. The central analysis turns tracing into an auditable decision: it bounds an innocent score conditional on the recovered bits, then separately bounds missed colluders under a specified channel. The construction supports a BN-scale carrier~\cite{li_fan_gu} and a feature carrier whose score can be stable under feature matching.

We evaluate ZK-Trace for federated few-shot GNSS interference monitoring, where labeled events are scarce~\cite{gaikwad_heublein}. Prototypical networks~\cite{snell_swersky} provide the classifier. Experiments use ten seeds, ten simulated stations partitioned from real GNSS recordings, and CIFAR-10 as a transfer check. This setting tests whether tracing remains useful when data are imbalanced, participation varies, and attackers modify their issued copies.

\textbf{Contributions.}
\textbf{(C1) Finite and executable tracing guarantees.} We give a conditional false-accusation bound that does not assume independent extraction errors (Theorem~\ref{thm:tardos}), together with a finite hidden-bias completeness bound allowing coalition decisions across positions (Theorem~\ref{thm:finite-completeness}). An interval checker enforces the bound and a shared error budget. Exhaustive evaluation of two-owner mixtures traces $712/720$ mixtures and isolates all $160$ single-owner copies without naming an innocent (Sec.~\ref{sec:eval_tracing}).

\textbf{(C2) Credential verification and score stability.} We extend FedZKP's credential mechanism~\cite{yang_yin_zhu} with per-client identity blocks, recipient fingerprints, and a registry. The artifact-bound verifier authenticates a credential statement about a complete model state. It does not claim authorship. A deterministic weighted-score bound establishes when feature perturbations preserve tracing even with correlated or targeted errors (Theorem~\ref{thm:score-stability}).

\textbf{(C3) A GNSS evaluation with explicit robustness limits.} A common attack suite evaluates five baselines on their native mark metrics. Distillation erases all tested weight-space marks. Our feature carrier survives feature-matching distillation and nineteen of twenty cross-architecture runs, but function-only distillation erases it. Its dispatched-copy accuracy cost is $4.8$ percentage points on GNSS and $6.1$ on CIFAR-10. No innocent is accused in the coalition sweep, whereas a DeepMarks-family comparator~\cite{chen_deepmarks} produces false accusations (Sec.~\ref{sec:eval_robustness}).

%% file: 02_related_work.tex
\section{RELATED WORK}
\label{label_related_work}

\textbf{Watermarking and fingerprinting codes.} Model ownership, recipient identification, and survival under model modification are distinct capabilities. White-box watermarks encode information in model parameters, whereas black-box watermarks use queryable triggers. Collusion-secure fingerprinting identifies a source when recipients combine their copies~\cite{boneh1998collusion}. Tardos codes achieve length $O(c^2\log(N/\varepsilon_1))$ for coalition size $c$ and false-accusation target $\varepsilon_1$~\cite{tardos2008optimal}. Symmetric scoring~\cite{skoric2008symmetric} and later refinements improve the score and length analysis~\cite{oosterwijk2013optimal,skoric2008better,laarhoven2014optimal,skoric2015revisited}. Our analysis complements noisy-Tardos work on additive white Gaussian noise~\cite{kuribayashi2010awgn} with conditional score certification and a finite residual-channel bound for model fingerprints.

\textbf{Federated model ownership.} FedIPR~\cite{li_fan_gu} and WAFFLE~\cite{tekgul_xia_marchal} embed ownership marks in models redistributed through FedAvg~\cite{Fedavg}. Their verification establishes mark presence but may expose the secret or depend on its holder. Proofs of knowledge address credential disclosure. Our verification uses the $\Sigma$-protocol~\cite{damgard} for exact-weight Learning Parity with Noise (xLPN) given by Jain et al.~\cite{jain_krenn_pietrzak}, following V\'eron's identification formulation~\cite{veron}. This cryptographic layer authenticates a credential. Recipient tracing additionally requires a per-copy identifier.

\textbf{Per-client attribution.} The closest federated methods provide different parts of the required evidence. FedZKP~\cite{yang_yin_zhu} derives a group watermark from all clients' xLPN public inputs and proves ownership in zero knowledge, but does not identify a recipient. FedTracker~\cite{shao_yang_fedtracker} and DUW~\cite{yu_hong_duw} provide per-client fingerprints assigned by a trusted party, without a collusion-secure code. DeepMarks~\cite{chen_deepmarks} embeds anti-collusion codes centrally, but assumes exact extraction and lacks a false-accusation bound at arbitrary coalition sizes.

Cryptographic participation evidence serves a different purpose. FedPoP~\cite{isler_fedpop} proves participation anonymously and leaves no in-model tracing artifact. FedAaT~\cite{liu_fedaat} adds per-client output-space sequences, but reads them in plaintext and confines zero knowledge to the credential layer. BlackCATT~\cite{rodriguezlois2026blackcatt} uses per-client Tardos labels in a federated trigger channel. It does not establish their composition through the aggregation channel considered here.

\textbf{Distillation and the remaining gap.} Removal attacks include extraction-based erasure~\cite{shafieinejad2021robustness} and knowledge distillation~\cite{hinton_vinyals_dean}, within the taxonomy of~\cite{lukas2022sok}. DAWN~\cite{szyller2021dawn} marks prediction-API responses and evaluates coalition resistance empirically. Entangled watermarks~\cite{jia2021entangled} can survive distillation but do not identify individual leakers. ZK-Trace combines recipient tracing with credential authentication and certificate-gated adjudication (Table~\ref{tab:baselines}). ZK-Trace builds on FedZKP's credential-based watermarking framework~\cite{yang_yin_zhu}, using BN scaling parameters as the embedding carrier, as in FedIPR~\cite{li_fan_gu}. It analyzes Tardos tracing through FedAvg and adds a feature carrier whose survival depends on what the distiller reproduces.

\textbf{Federated few-shot GNSS monitoring.} Gaikwad et al.~\cite{gaikwad_heublein} combine episodic prototypical learning with FedAvg for GNSS interference classification under non-i.i.d.\ data. We build on this learning setup to address model ownership and tracing. The evaluation studies watermarking on an episodic prototypical model.

%% file: 03_preliminaries.tex
\section{PRELIMINARIES}
\label{label_preliminaries}

\textbf{Notation.} For binary vectors $\mathbf{x},\mathbf{y}\in\{0,1\}^m$, $\lVert\mathbf{x}\rVert_1$ denotes Hamming weight and $\mathbf{x}\oplus\mathbf{y}$ denotes bitwise XOR. Their Hamming distance is $\lVert\mathbf{x}\oplus\mathbf{y}\rVert_1$. We write $\mathbf{x}\,||\,\mathbf{y}$ for concatenation and $x\xleftarrow{R}X$ for a uniform draw. The watermark bit subset $\mathcal{B}\subseteq\{1,\dots,n\}$ differs from the Bernoulli noise law $\mathcal{B}_\tau$. Table~\ref{tab:notation} summarizes the symbols. Section~\ref{label_method_system_threat} defines the threat model.

\renewcommand{\arraystretch}{1.02}
\begin{table}[t]
\setlength{\tabcolsep}{4pt}
    \centering
    \caption{Principal notation and deployed dimensions. Identity length $n$ and tracing length $n_{\mathrm{t}}$ describe separate watermark layers.}
    \label{tab:notation}
    \footnotesize
    \resizebox{\columnwidth}{!}{%
    \begin{tabular}{ll}
    \toprule
    \textbf{Symbol} & \textbf{Meaning (space / value)} \\
    \midrule
    $N_c$ & number of clients ($=10$) \\
    $c_i,\ \mathcal{D}_i$ & client $i$ and its private dataset \\
    $\mathbf{W},\mathbf{W}_i$ & global / local model parameters \\
    $\lambda_k$ & FedAvg weights ($\textstyle\sum_k\lambda_k{=}1$) \\
    $\boldsymbol{\gamma},\boldsymbol{\gamma}_{\mathrm{agg}}$ & BN scale carrier ($\in\mathbb{R}^{\omega}$) \\
    $\omega$ & carrier dimension ($=4800$) \\
    $n$ & per-client codeword length ($=128$) \\
    $\mathbf{w}_i,\ \hat{\mathbf{h}}_i$ & codeword / extracted component ($\in\{0,1\}^{n}$) \\
    $E,\ E_i$ & shared / client projection ($\mathbb{R}^{\omega\times N_c n}$, $\mathbb{R}^{\omega\times n}$) \\
    $\rho$ & identity-column load $N_c n/\omega$ \\
    $\mathbf{A}_i,\mathbf{y}_i$ & xLPN public input ($\{0,1\}^{m\times l}$, $\{0,1\}^{m}$) \\
    $\mathbf{s}_i,\mathbf{e}_i$ & xLPN witness ($\{0,1\}^{l}$, $\{0,1\}^{m}$; $\|\mathbf{e}_i\|_1=w_\tau$) \\
    $\mathcal{B}_\tau,\ \tau$ & i.i.d.\ Bernoulli noise, rate $\tau$ \\
    $\mathcal{B}$ & per-episode WM bit subset ($\subseteq\{1,\dots,n\}$) \\
    $\mathsf{err}_n,\ p_r$ & near-collision radius, detection threshold \\
    $f_\theta,\ d$ & embedding net, feature dim ($\mathbb{R}^{d}$, $d{=}512$) \\
    $N,K,Q$ & episodic way / shot / query \\
    $\mathbf{c}_k$ & class prototype ($\in\mathbb{R}^{d}$) \\
    $X_i,\ n_{\mathrm{t}}$ & tracing row / length ($512$ weight, $2048$ feature) \\
    $c,\ k$ & collusion design target / tested coalition size \\
    $S_i,\ Z,\ \varepsilon_1$ & tracing score / threshold / false-accusation target \\
    $q$ & residual flip probability relative to intended output \\
    \bottomrule
    \end{tabular}}
\end{table}

\textbf{Federated learning.} A server coordinates $N_c$ clients, each with private data $\mathcal{D}_i$ and local parameters $\mathbf{W}_i$. FedAvg forms $\mathbf{W}=\sum_{k=1}^{N_c}\lambda_k\mathbf{W}_k$, where $\lambda_k\ge0$ and $\sum_k\lambda_k=1$~\cite{Fedavg}. The same weighted average applies to the BN scales that carry the identity marks. The survival analysis asks whether those marks remain distinguishable after averaging.

\textbf{Prototypical networks.} An $N$-way $K$-shot episode samples $N$ classes, with $K$ labeled support examples and $Q$ query examples per class. Let $S_k$ be class $k$'s support set. The embedding $f_\theta:\mathcal{X}\to\mathbb{R}^d$ forms its prototype as the mean $\mathbf{c}_k=|S_k|^{-1}\sum_{(\mathbf{x}_j,y_j)\in S_k}f_\theta(\mathbf{x}_j)$. A query is classified by a softmax over negative squared Euclidean distances to these prototypes~\cite{snell_swersky}. Training minimizes the query negative log-likelihood over episodes. We use the federated few-shot setup of Gaikwad et al.~\cite{gaikwad_heublein}.

\textbf{Credentials.} In search LPN, the public pair $(\mathbf{A},\mathbf{y})$ satisfies $\mathbf{y}=\mathbf{A}\mathbf{s}\oplus\mathbf{e}$, with noise rate $0<\tau<\tfrac12$. Recovering the secret is an average-case random-code decoding problem~\cite{esser2022syndrome}. The xLPN variant fixes the error weight to $w_\tau=\lfloor m\tau+0.5\rfloor$~\cite{jain_krenn_pietrzak}. Client $i$ publishes $(\mathbf{A}_i,\mathbf{y}_i)$ and retains the witness $(\mathbf{s}_i,\mathbf{e}_i)$.

Verification establishes knowledge of this witness using a three-move commitment--challenge--response $\Sigma$-protocol~\cite{damgard}. Its non-interactive form binds the challenges to the extracted model component and the registered credential (Sec.~\ref{label_method_verification}). Appendix~\ref{app:prop-security} specifies the commitment and random-oracle assumptions.

%% file: 04_method.tex
\section{ZK-TRACE CONSTRUCTION}
\label{label_method}

\begin{figure*}[!t]
    \centering
    \includegraphics[width=\textwidth]{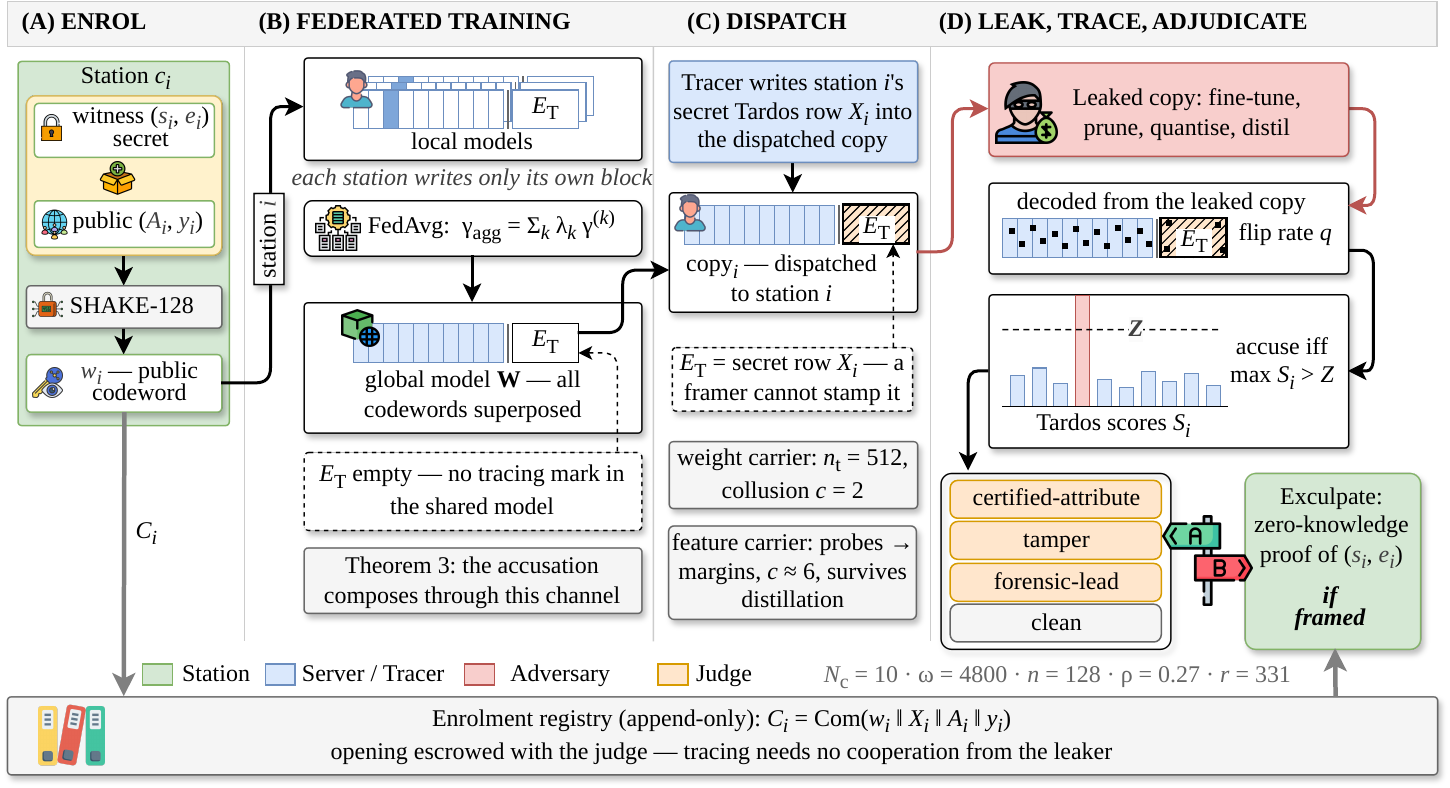}
    \caption{ZK-Trace life cycle and carrier layout. The bar is the projection codebook read from
    the batch-normalization scales $\boldsymbol{\gamma}$, partitioned into per-client identity
    blocks $E_1,\dots,E_{N_c}$ and the tracing overlay block $E_T$. Solid cells carry the public
    codeword $\mathbf{w}_i$, hatched cells the tracer-secret Tardos row $X_i$. $S_i$ is the
    \v{S}kori\'{c}-symmetric score, $Z$ the accusation threshold, and $q$ the per-bit flip rate
    induced by aggregation.}
    \label{figure_method_overview}
\end{figure*}

\textbf{Pipeline.} ZK-Trace has two watermark layers (Fig.~\ref{figure_method_overview}). The identity layer extends FedZKP~\cite{yang_yin_zhu}: each client's xLPN public input determines a codeword embedded in the shared model's BN scales. The tracing layer adds a secret Tardos fingerprint only to the copy dispatched to that client. It supports attribution after modifications invalidate an exact dispatch-log hash, providing evidence for trace-and-revoke~\cite{naor2001revocation}.

Fig.~\ref{figure_method_overview} follows one station through four stages. At~(A), station $c_i$ registers $(\mathbf{A}_i,\mathbf{y}_i)$ and its derived codeword $\mathbf{w}_i$, keeping the witness $(\mathbf{s}_i,\mathbf{e}_i)$ private. At~(B), federated training aggregates the identity codewords into $\mathbf{W}$ while leaving the tracing block $E_T$ empty. At~(C), the tracer embeds the secret row $X_i$ into the copy issued to station~$i$. At~(D), the tracer extracts the leaked copy's overlay, scores each registered row, and accuses a station when its score exceeds $Z$.

\textbf{Binding intuition.} Each client has an assigned set of projection directions at which to read its public codeword. Agreement with that codeword is statistical evidence of a mark. Possession of its private xLPN witness is a separate credential fact. A valid proof authenticates the witness holder. The verifier also includes a digest of the entire model state in its context, so equal extracted bits do not permit replay onto different model bytes. Neither presence nor authentication establishes who trained the model: public marks can be copied. Recipient tracing therefore uses the separate secret row and registry.

\subsection{Setting and Threat Model}
\label{label_method_system_threat}

The server acts as tracer and issues each authorized client a copy bearing its secret Tardos row. The registry allows a recovered copy to be traced without contacting its holder. Clients keep their witnesses private, and the tracer holds the tracing rows.

\textbf{Threat model.} We consider authorized clients that hold valid credentials but may misuse their dispatched copies. Access control for unauthorized parties is a separate concern.

\setlength{\leftmargini}{14pt}
\begin{itemize}
\setlength\itemsep{2pt}
\item \textbf{T1, weight-space post-processor.} A leaker applies utility-preserving edits such as fine-tuning, pruning, quantization, or noise before redistributing its copy. We evaluate both identity attribution and Tardos tracing under these edits. Their guarantees require the extraction conditions stated in Sec.~\ref{sec:theory-survival}.

\item \textbf{T2, bounded coalition.} Clients may average their copies to attenuate the marks~\cite{boneh1998collusion}. The design targets are two colluders on the weight carrier and three on the feature carrier. Soundness requires innocent-row independence, A5(a). Finite channel completeness additionally requires A5(b). A coalition may base its intended symbols on its entire row matrix under that theorem. The experiments distinguish model averaging from decoder-space adaptive steering.

\item \textbf{T3, framing adversary.} An adversary may copy a victim's public identity mark. It lacks the victim's witness and secret tracing row. Certified false naming is bounded under A5(a), while credential verification authenticates the claimant. Authentication alone does not exculpate a recipient: the judge must examine the accusation evidence.

\item \textbf{T4, distiller.} A leaker may train a fresh student whose BN parameters contain no identity codeword. This erases the weight carrier in our benchmark~\cite{lukas2022sok}. The feature carrier can survive when the student copies the teacher's feature geometry closely enough (Proposition~\ref{prop:kd-robust}). Function-only distillation falls outside Assumption~\ref{ass:kd} and erases this carrier too.
\end{itemize}

\textbf{Out of scope.} ZK-Trace does not detect or prevent poisoned updates. It can be combined with robust aggregation to address this threat (Sec.~\ref{sec:eval_robustness}). Credential redistribution is also outside its scope. The fingerprint identifies the enrolled recipient of the issued copy, and subsequent revocation is handled through the registry~\cite{naor2001revocation}.

\textbf{Server as leaker.} The server and tracer are honest-but-curious. The server holds the shared model and issued copies, so the scheme cannot distinguish a server leak from a recipient leak of the same copy. Registry commitments prevent post-hoc row substitution. Authenticated delivery and server accountability require additional evidence (App.~\ref{label_method_registry}).

\subsection{Codeword Embedding}
\label{label_method_watermark_embedding}

Before training, initialization fixes the shared Gaussian matrix $E\in\mathbb{R}^{\omega\times N_c n}$, with $\omega{=}4{,}800$ and $n{=}128$, and the detection threshold $p_r$ and radius $\mathsf{err}_n$. Following FedZKP~\cite{yang_yin_zhu}, we select an xLPN instance with a syndrome-decoding work estimate reported near $2^{152}$~\cite{esser2022syndrome} as a separate computational-hardness estimate (Sec.~\ref{label_method_verification}; dimensions in App.~\ref{app:method-detail}). Each client's public input determines its identity codeword,
\begin{equation}
    \mathbf{w}_i = \text{SHAKE-128}(\mathbf{A}_i \,|\, \mathbf{y}_i) \in \{0,1\}^{n}.
\label{eq_watermark_hash}
\end{equation}
Client $i$ uses block $E_i\in\mathbb{R}^{\omega\times n}$, comprising columns $[(i{-}1)n,in)$ of $E$. The server retains the concatenated public inputs $(\mathbf{A}_{\mathrm{agg}},\mathbf{y}_{\mathrm{agg}})$ so a verifier can recompute any station's codeword.

\textbf{Carrier.} The vector $\boldsymbol{\gamma}\in\mathbb{R}^{\omega}$ concatenates the scales of all $L{=}20$ BN layers. Each projection spreads a bit across these scales, so editing one layer affects only part of its carrier. At $N_c{=}10$, the identity-codebook load is $N_c n/\omega\approx0.27$. BN scales rescale activations after normalization. Section~\ref{sec:eval_robustness} evaluates how this affects feature geometry and classification.

\textbf{Projection and objective.} For bit $b$, the projection $z_{i,b} = \boldsymbol{\gamma}^\top E_{i,b}$ gives the extracted bit $\hat{h}_{i,b} = \mathbb{1}[z_{i,b} > 0]$, with signed target $t_{i,b} = 2w_{i,b} - 1$. The hinge loss penalizes a wrong sign or a signed projection smaller than the target margin $\mu>0$,
\begin{equation}
    \mathcal{L}_{\mathrm{wm}} = \frac{1}{|\mathcal{B}|} \sum_{b \in \mathcal{B}} \max\!\big(0,\; \mu - t_{i,b} \cdot z_{i,b}\big),
    \label{eq_watermark_loss}
\end{equation}
with the active subset $\mathcal{B} \subset \{1, \ldots, n\}$ drawn each step by bit-dropout ($p_{\mathrm{drop}} = 0.5$, $\mu = 1.0$). The local objective adds this to the prototypical task loss, $\mathcal{L} = \mathcal{L}_{\mathrm{task}} + \lambda_{\mathrm{wm}} \mathcal{L}_{\mathrm{wm}}$, with $\lambda_{\mathrm{wm}} = 0.1$ trading accuracy against watermark strength.

\textbf{Federated training.} Cross-entropy pre-training on base classes produces an unwatermarked initial model~\cite{gaikwad_heublein}. Federated episodic training then follows Sec.~\ref{label_preliminaries}, with $\lambda_k$ proportional to local sample count. Each client fixes its public input and codeword in the first round and trains under $\mathcal{L}$ in subsequent rounds. Projection onto $E_i$ recovers client $i$'s component from the averaged BN scales, subject to cross-talk from the other clients (Sec.~\ref{sec:theory-survival}).

\subsection{Zero-Knowledge Verification}
\label{label_method_verification}

Algorithm~\ref{alg_verify} verifies credential knowledge using a Stern-type non-interactive proof~\cite{yang_yin_zhu}. All commitments and the public context determine the challenge vector. In the verifier, $\mathsf{ctx}=\mathsf{tag}\|H_{256}(\mathbf W)\|\hat h_i\|A_i\|y_i$, where $H_{256}$ hashes a canonical full state including buffers. We use $r=331$ repetitions. Appendix~\ref{app:prop-security} separates the grinding calibration from end-to-end witness-recovery security.

\begin{algorithm}[!t]
\caption{Non-interactive credential and model-presence verification, from FedZKP's Stern-type xLPN $\Sigma$-protocol~\cite{yang_yin_zhu}.}
\label{alg_verify}
\footnotesize
\begin{algorithmic}[1]
\REQUIRE witness $(\mathbf{s}_i,\mathbf{e}_i)$, $\mathbf{y}_i{=}\mathbf{A}_i\mathbf{s}_i{\oplus}\mathbf{e}_i$; block $(\mathbf{A}_i,\mathbf{y}_i)$; model $\mathbf{W}$; rounds $r{=}331$
\ENSURE verify credential knowledge and mark presence, else reject
\STATE $\hat{\mathbf{h}}_i \gets \mathbb{1}[\boldsymbol{\gamma}(\mathbf{W})^{\!\top}E_i > 0]$,\ \ $\mathsf{ctx} \gets \mathsf{tag}\|H_{256}(\mathbf W)\|\hat{\mathbf{h}}_i\|\mathbf A_i\|\mathbf y_i$
\STATE \textbf{Prover} (offline), for $k{=}1,\dots,r$ with fresh $\pi,\mathbf{v},\mathbf{f}$:
\STATE \quad $t_0{\gets}\mathbf{A}_i\mathbf{v}{\oplus}\mathbf{f}$, $t_1{\gets}\pi(\mathbf{f})$, $t_2{\gets}\pi(\mathbf{f}{\oplus}\mathbf{e}_i)$
\STATE \quad $C_0{\gets}\mathrm{com}(\pi,t_0)$, $C_1{\gets}\mathrm{com}(t_1)$, $C_2{\gets}\mathrm{com}(t_2)$
\STATE $(c^{(k)})_{k=1}^{r} \gets \mathrm{SHAKE\text{-}128}\big(C_0^{(1)}\|\cdots\|C_2^{(r)}\,\|\,\mathsf{ctx}\big) \bmod 3$
\STATE open the two commitments per $c^{(k)}$ and send transcript $\tau$
\STATE \textbf{Verifier} (offline). Recompute $(c^{(k)})$ from $\tau,\mathsf{ctx}$ and check per round:
\STATE \quad $c^{(k)}{=}0$:\ \ $t_0 \oplus \pi^{-1}(t_1) \in \mathrm{img}(\mathbf{A}_i)$
\STATE \quad $c^{(k)}{=}1$:\ \ $t_0 \oplus \pi^{-1}(t_2) \oplus \mathbf{y}_i \in \mathrm{img}(\mathbf{A}_i)$
\STATE \quad $c^{(k)}{=}2$:\ \ $\mathrm{wt}(t_1 \oplus t_2) = w_\tau$
\STATE \textbf{accept} iff all $r$ checks pass and $\mathrm{HD}(\hat{\mathbf{h}}_i,\mathbf{w}_i) \le \mathrm{err}_n$, else \textbf{reject}
\end{algorithmic}
\end{algorithm}

\textbf{Attribution and acceptance.} From the public block $E_i$ and codeword $\mathbf{w}_i$ alone, a verifier extracts $\hat{\mathbf{h}}_i = \mathbb{1}\!\big[\boldsymbol{\gamma}(\mathbf{W})^\top E_i > 0\big]$ and tests codeword presence,
\begin{equation}
    \mathrm{HD}\big(\hat{\mathbf{h}}_i, \mathbf{w}_i\big) \leq \mathsf{err}_n.
    \label{eq_acceptance}
\end{equation}
Against an independent codeword at $n{=}128$, calibration to $2^{-128}$ requires $\mathsf{err}_n{=}0$. Exact presence is restrictive for the measured noisy extraction, so the deployed identity metric uses maximum codeword agreement (Prop.~\ref{prop:presence-honest}, App.~\ref{app:proofs}). A match is statistical evidence of a codeword, but it does not authenticate its holder because the public mark can be copied. Algorithm~\ref{alg_verify} adds proof of the private witness and retains the explicit presence condition in its acceptance rule. Recipient tracing instead uses the secret overlay below.

\subsection{Security Guarantees}
\label{label_method_security}

Here $Q$ denotes the number of random-oracle queries, distinct from query examples per episode.

\begin{proposition}[Credential knowledge and artifact binding]
\label{prop:security}
Under the random-oracle and commitment assumptions in Appendix~\ref{app:prop-security}, the ideal Stern/Fiat--Shamir protocol is complete for a witness holder whose codeword passes the requested presence test, and admits a zero-knowledge simulation. With ideal binding commitments and uniform challenges, its knowledge error is at most $(Q+1)(2/3)^r$~\cite{attema2022fiatshamir}. The artifact-bound context ties a transcript to the registered credential, extracted component, and full-state digest. Transfer to a different state requires a digest collision or a new-context challenge coincidence. These are knowledge and binding properties, not a proof of authorship or a numerical bound on all xLPN-recovery attacks.
\end{proposition}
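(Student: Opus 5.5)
The plan is to verify the four claims of Proposition~\ref{prop:security} one at a time: completeness, zero knowledge, knowledge error, and artifact binding. Each reduces to a standard property of Stern's identification protocol~\cite{veron,jain_krenn_pietrzak} plus the Fiat--Shamir analysis of~\cite{attema2022fiatshamir}. The new ingredient is only the context string $\mathsf{ctx}$.

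\textbf{Completeness.} For an honest witness holder with $\mathbf{y}_i=\mathbf{A}_i\mathbf{s}_i\oplus\mathbf{e}_i$ and $\wt(\mathbf{e}_i)=w_\tau$, I check the three branches of Algorithm~\ref{alg_verify} directly.
\begin{itemize}
\item For $c=0$, $t_0\oplus\pi^{-1}(t_1)=\mathbf{A}_i\mathbf{v}\in\Img(\mathbf{A}_i)$.
\item For $c=1$, $t_0\oplus\pi^{-1}(t_2)\oplus\mathbf{y}_i=\mathbf{A}_i(\mathbf{v}\oplus\mathbf{s}_i)$.
\item For $c=2$, $t_1\oplus t_2=\pi(\mathbf{e}_i)$, which has weight $w_\tau$.
\end{itemize}
The final conjunct $\HD(\hat{\mathbf{h}}_i,\mathbf{w}_i)\le\mathsf{err}_n$ holds by the hypothesis that the codeword passes the presence test. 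So acceptance has probability one.

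\textbf{Zero knowledge.} In the programmable random oracle model, the simulator first samples challenges $c^{(k)}$ uniformly. For each challenge value it builds openings with the correct distribution without knowing the witness:
\begin{itemize}
\item for $c=0$, a uniform $\mathbf{v}$, $\mathbf{f}$ and $\pi$;
\item for $c=1$, a uniform $\mathbf{v}'$ and $\mathbf{g}$, with $t_0=\mathbf{A}_i\mathbf{v}'\oplus\pi^{-1}(t_2)\oplus\mathbf{y}_i$;
\item for $c=2$, uniform $t_1$ and a uniform weight-$w_\tau$ vector $\mathbf{u}$, with $t_2=t_1\oplus\mathbf{u}$.
\end{itemize}
It commits to dummy values in the unopened slots and programs the oracle at $C\|\mathsf{ctx}$ to return the chosen challenges. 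Hiding of the commitments, together with the fact that fresh high-entropy commitments make a prior oracle query at that point unlikely, gives indistinguishability.

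\textbf{Knowledge error.} Special soundness holds per round. If the three commitments of one round are opened for all three challenges, binding forces consistent values: $\pi^{-1}(t_1)\oplus\pi^{-1}(t_2)$ has weight $w_\tau$, and the two image relations yield $\mathbf{s},\mathbf{e}$ with $\mathbf{y}_i=\mathbf{A}_i\mathbf{s}\oplus\mathbf{e}$. The interactive $r$-fold parallel protocol therefore has knowledge error $(2/3)^r$. I would then invoke the Fiat--Shamir result of~\cite{attema2022fiatshamir} for $\Sigma$-protocols with $k$-special soundness in parallel repetition, which gives at most $(Q+1)(2/3)^r$ for a $Q$-query prover. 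The main obstacle is matching hypotheses: the challenge is derived as SHAKE output $\bmod 3$, which is not exactly uniform. I would state uniformity as the assumption of the idealized protocol, as the Proposition does, and note that the modular bias is absorbed in the appendix calibration.

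\textbf{Binding.} Because $\mathsf{ctx}$ enters the hash, a transcript valid for $(\mathbf{W},\hat{\mathbf{h}}_i,\mathbf{A}_i,\mathbf{y}_i)$ can verify under another state $\mathbf{W}'$ in only two ways. Either $H_{256}(\mathbf{W}')=H_{256}(\mathbf{W})$ with the other fields equal, which is a digest collision. Or the contexts differ and the oracle on the new input happens to return the same challenge vector, which has probability $3^{-r}$ per query and is bounded by a union bound over $Q$ queries. Finally, the disclaimer about authorship needs no proof. It follows from observing that the statement proved concerns only the witness and the artifact, and says nothing about training provenance.
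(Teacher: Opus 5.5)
Your proposal is correct and follows the paper's proof essentially step for step. It covers completeness from the three algebraic checks plus the presence hypothesis, a programmed-oracle simulator that commits dummies in the unopened slots, three-transcript Stern extraction for the knowledge error, and binding through either a digest collision or a new-context challenge coincidence. The main difference is that the paper obtains the $(Q+1)(2/3)^r$ bound through generalized special soundness (Lemma~\ref{lem:afk}, where non-extracting challenge sets lie in products of two-element sets) instead of citing a parallel-repetition Fiat--Shamir theorem directly; also, its extraction lemma explicitly requires the opened permutation encoding to be valid, which your weight-preservation step assumes without stating.
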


\noindent The proof, with extractor, simulator, and grinding accounting, is in Appendix~\ref{app:proofs}.

\textbf{Two soundness layers.} Credential verification and recipient tracing answer different questions. The former proves knowledge of a private witness and binds the statement to an artifact. The latter bounds naming an innocent recipient under a conditional code model. The $331$-round grinding calibration, commitment failures, computational witness recovery, and statistical tracing budget must be accounted for separately (Table~\ref{tab:soundness-split}).

\subsection{Tracing Overlay and Two Carriers}
\label{label_method_tracing}

\textbf{Dispatch and readout.} The identity layer cannot isolate a recipient because the aggregate contains every contributor's mark. At dispatch, the tracer fine-tunes a copy on server-held proxy data, combining the task loss with a hinge that embeds the recipient's secret Tardos row $X_i$. The weight carrier reads each bit from a BN-scale projection. The feature carrier reads the sign of a feature projection on a fixed probe input, with co-training used to align those signs with $X_i$. Its design target is $c{=}3$, compared with $c{=}2$ for the weight carrier. Feature-matching distillation can preserve these probe responses under Assumption~\ref{ass:kd}.

\textbf{Tracing and adjudication.} The tracer scores the recovered bits against registered rows. The adjudication rule certifies each positive or negative tail before issuing a certified decision, allocating one budget across both tails and any jointly used carriers (Prop.~\ref{prop:adjudicate}). A threshold exceedance without a passing certificate remains an uncertified lead. If neither threshold is exceeded, the outcome is no certified evidence. Threshold-only decisions are evaluated separately to measure score separation. An independent judge checks the enrollment opening and reconstructs the score evidence. A credential proof authenticates the claimant but does not decide whether that claimant leaked.

%% file: 05_theory.tex
\section{THEORETICAL ANALYSIS}
\label{sec:theory-survival}

The analysis separates three questions: when identity bits survive averaging, when a tracing score supports a bounded false-accusation probability, and when feature matching preserves probe bits. The identity probability model is idealized. The deterministic decoding and probe-margin results do not require that model. Appendix~\ref{sub:assumptions} states the assumptions and Appendix~\ref{app:proofs} gives the proofs.

\textbf{Identity survival.} Write $\gamma^{(i)}=\gzero+\Delta_i$, $G=\|\gzero\|$, and $u_i=\sum_{k\ne i}\lambda_k\gamma^{(k)}$. A local signed margin of at least $\mu$ gives
\begin{equation}
\label{eq:identity-proxy-main}
t_{i,b}\inner{\gagg}{E_{i,b}}\ge\lambda_i\mu+t_{i,b}\inner{u_i}{E_{i,b}}.
\end{equation}
A probabilistic recovery bound follows from this inequality when the signed directions are independent of the cross-vector.

\begin{theorem}[Conditional per-bit recovery]
\label{thm:perbit}
Assume A1--A3, $\omega>4$, and $\lambda_i>0$. Conditional on $(u_i,t_i)$, let $F_\omega$ be the CDF of $\sqrt\omega$ times the first coordinate of a uniform unit vector. For $u_i\ne0$, put $s_i=\lambda_i\mu\sqrt\omega/\|u_i\|$. Then
\begin{equation}
\label{eq:perbit-gauss}
\begin{split}
\Prob[\hat h_{i,b}=w_{i,b}\mid u_i,t_i]&\ge F_\omega(s_i),\\
F_\omega(s_i)&\ge\Phi(s_i)-b_\omega,\qquad b_\omega=\frac{8}{\omega-4}.
\end{split}
\end{equation}
The proxy events in \eqref{eq:identity-proxy-main} are conditionally independent across bits. If $u_i=0$, every bit is correct.
\end{theorem}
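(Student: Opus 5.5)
We start from the deterministic proxy inequality \eqref{eq:identity-proxy-main}. We bound the event that the proxy right-hand side is positive, since that event implies $t_{i,b}\inner{\gagg}{E_{i,b}}>0$ and hence $\hat h_{i,b}=w_{i,b}$. The assumptions A1--A3 are read as supplying the following. The columns $E_{i,b}$ are i.i.d.\ isotropic Gaussian, or at least rotation-invariant in distribution. Conditional on $(u_i,t_i)$, the signed directions $t_{i,b}E_{i,b}$ remain independent of $u_i$ and of each other. The local margin $\mu$ holds for client $i$. Under these assumptions the correctness event contains
\[
\{\lambda_i\mu+\inner{u_i}{t_{i,b}E_{i,b}}>0\}.
\]
If $u_i=0$, the right side equals $\lambda_i\mu>0$ deterministically, so every bit is correct. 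This disposes of the degenerate case.

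For $u_i\ne0$, we write $t_{i,b}E_{i,b}=R_b\,\theta_b$ with $R_b=\|E_{i,b}\|$ and $\theta_b$ uniform on the unit sphere. The factors $R_b$ and $\theta_b$ are independent, and $t_{i,b}\in\{\pm1\}$ preserves the law. Rotation invariance then gives
\[
\inner{u_i}{\theta_b}\overset{d}{=}\|u_i\|\,\theta_{b,1}.
\]
We take the normalization in A1 (columns scaled to unit norm, or the margin stated per unit direction) so that the event becomes $\sqrt\omega\,\theta_{b,1}>-s_i$, with $s_i=\lambda_i\mu\sqrt\omega/\|u_i\|$. If A1 instead has Gaussian columns with unnormalized norm, we condition on $R_b$ and use the version of the margin on normalized directions. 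By the symmetry of $\theta_{b,1}$, the probability of this event is $F_\omega(s_i)$, which gives the first inequality. Conditional independence across $b$ follows directly because each proxy event is a function of the independent $\theta_b$ alone, given $(u_i,t_i)$.

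The remaining step, and the main obstacle, is the quantitative bound $F_\omega(s)\ge\Phi(s)-8/(\omega-4)$. The density of $X=\sqrt\omega\,\theta_1$ is
\[
c_\omega(1-x^2/\omega)^{(\omega-3)/2}\quad\text{on } |x|<\sqrt\omega .
\]
Equivalently, $X$ can be represented as $\sqrt\omega\,g_1/\|g\|$ with $g\sim N(0,I_\omega)$.

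First, we would compare this to the representation $g_1/\sqrt{\chi^2_{\omega-1}/(\omega-1)}$, a rescaled Student-$t$ variable. Second, we would bound $\sup_s|F_\omega(s)-\Phi(s)|$ by controlling the ratio of the two densities. The constant and the log of the normalizer behave like $1+O(1/\omega)$. The exponent obeys $\tfrac{\omega-3}{2}\log(1-x^2/\omega)\ge -\tfrac{x^2}{2}\cdot\tfrac{\omega-3}{\omega-x^2}$. This yields a pointwise density deficit of at most $O(x^4/\omega+1/\omega)\phi(x)$ in the bulk $x^2\le\omega/2$. The tail beyond $\sqrt{\omega/2}$ is handled by a Gaussian tail bound, which is exponentially small. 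Third, integrating gives an additive error $C/(\omega-4)$. The $\omega-4$ denominator arises naturally from the moments $\E[\|g\|^{-4}]$ or from the Beta$(\tfrac12,\tfrac{\omega-1}{2})$ moments, which need $\omega>4$.

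To obtain the explicit constant 8, we would use the crude estimates $\E[x^4]=3$ and $\E[x^2]=1$. Only a one-sided bound is needed. For $s\le0$ we have $\Phi(s)-b_\omega\le 1/2-b_\omega$, and symmetry combined with a tail comparison covers that range. If the density-ratio route yields a weaker constant, the fallback is a direct Kolmogorov-distance bound via the $g_1/\|g\|$ coupling. This uses $\Prob[\,|\|g\|^2/\omega-1|>\delta\,]$ from a Chebyshev bound with variance $2/\omega$, plus the Lipschitz bound $\phi\le(2\pi)^{-1/2}$ on $\Phi$. The $\delta$ would be optimized to reach $8/(\omega-4)$.
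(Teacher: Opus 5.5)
Your reduction is the same as the paper's proof of Theorem~\ref{thm:perbit}. The proxy event $\{\lambda_i\mu+\inner{u_i}{t_{i,b}E_{i,b}}>0\}$ is contained in the correct-bit event. Given $(u_i,t_i)$, A3 makes $t_{i,b}E_{i,b}$ an independent uniform unit direction. Rotation and reflection then turn the proxy probability into $F_\omega(s_i)$, independence across bits comes from the independent directions, and $u_i=0$ is immediate. Your hedge about unnormalized Gaussian columns is unnecessary: A1 and A3 already give unit directions, and conditioning on $R_b$ would change $s_i$.

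The real difference is the normal-approximation step. The paper does no computation there. Lemma~\ref{lem:sphere} cites the Diaconis--Freedman finite-sphere bound, whose total-variation estimate for the first $k$ coordinates, $2(k+3)/(\omega-k-3)$, equals $8/(\omega-4)$ at $k=1$. Total variation dominates $\sup_a|F_\omega(a)-\Phi(a)|$. That bound, not the moments $\E[x^4]=3$ and $\E[x^2]=1$, is the source of both the constant $8$ and the denominator $\omega-4$.

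Your density-ratio route is a valid self-contained substitute, and it closes with room to spare.
\begin{itemize}
\item Only $s>0$ is needed, because $s_i>0$, so your $s\le0$ discussion can be dropped. For $s>0$, $F_\omega(s)-\Phi(s)=-\int_0^s(\phi-f_\omega)$.
\item Write $f_\omega=a_\omega\phi\,e^{(\cdot)}$ with $a_\omega=c_\omega\sqrt{2\pi}=\sqrt{2/\omega}\,\Gamma(\omega/2)/\Gamma((\omega-1)/2)$. Wendel's inequality gives $a_\omega\ge1-1/\omega$.
\item Your bound $\log(1-y)\ge-y/(1-y)$ gives $f_\omega\ge a_\omega\phi(x)\bigl(1-x^2(x^2-3)_+/\omega\bigr)$ for $x^2\le\omega/2$.
\item Hence $\int_0^s(\phi-f_\omega)\le 1/\omega+e^{-\omega/4}/\sqrt{\pi\omega}$, which lies below $8/(\omega-4)$ for $\omega\ge13$. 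For $4<\omega\le12$ the claim is vacuous, since then $b_\omega\ge1$.
\end{itemize}
This route buys a proof without the external theorem and a sharper one-sided constant. The citation gives the two-sided bound, indeed total-variation control, in one line.

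Your fallback, however, cannot deliver the stated bound. The coupling $F_\omega(s)\ge\Phi(s\sqrt{1-\delta})-2/(\omega\delta^2)$ loses $\sup_s[\Phi(s)-\Phi(s\sqrt{1-\delta})]$, which is of order $\delta$. Optimizing $c\delta+2/(\omega\delta^2)$ over $\delta$ yields an error of order $\omega^{-1/3}$, not $O(1/\omega)$. No choice of $\delta$ reaches $8/(\omega-4)$ for large $\omega$, and sharper $\chi^2$ concentration still leaves the first-order loss. That branch should be discarded rather than kept as a backup.
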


A deterministic bound $\|\Delta_k\|\le D_k$ gives $\|u_i\|\le(1-\lambda_i)G+\sum_{k\ne i}\lambda_kD_k$ by the triangle inequality. Random projection columns alone do not establish the smaller quadrature norm. For interpreting the measurements we retain the approximation
\begin{equation}
\label{eq:snr-scaling}
\SNR_{\mathrm{quad}}:=\frac{\mu\sqrt\omega}{\sqrt{(N_c-1)[(N_c-1)G^2+n\mu^2]}}.
\end{equation}
It assumes uniform weights, update norms near $\mu\sqrt n$, and negligible cross terms. It is a diagnostic model, not a proved floor for trained federated networks.

\textbf{Attribution by codeword separation.} Maximum agreement is nearest-neighbor decoding in Hamming distance. A rival participates in training, so its codeword need not be independent of the decoded aggregate. The following criterion avoids that independence assumption.

\begin{theorem}[Attribution from a decoding radius]
\label{thm:attr}
Let $d_i=\HD(\hat h_i,w_i)$ and $d_{\min}=\min_{i\ne j}\HD(w_i,w_j)$. If $2d_i<d_{\min}$ for every client, all maximum-agreement decisions are uniquely correct. Under A4, if $\Prob[\max_i d_i>rn]\le\eta$ for some $0\le r<1/4$, then
\begin{equation}
\label{eq:attr-thresh}
\Prob[\text{any attribution error}]
\le\eta+\binom{N_c}{2}e^{-2n(1/2-2r)^2}.
\end{equation}
The radius event may depend on the entire codebook.
\end{theorem}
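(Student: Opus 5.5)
The first claim is deterministic and follows from the triangle inequality in Hamming space. Suppose $2d_i<d_{\min}$ holds for every client. For any rival $j\ne i$,
\[
\HD(\hat h_i,w_j)\ge \HD(w_i,w_j)-\HD(\hat h_i,w_i)\ge d_{\min}-d_i>d_i .
\]
So $w_i$ is the unique nearest codeword to $\hat h_i$, which means maximum agreement decodes $\hat h_i$ to client $i$ with no tie. This holds for every $i$ at once and uses no distributional assumption. It is also why the radius event may depend on the whole codebook: nothing probabilistic enters this step.

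For the probabilistic bound, I reduce an attribution error to the union of two events. The first is the radius failure $\{\max_i d_i>rn\}$, whose probability is at most $\eta$ by hypothesis. The second is a small-distance event $\{d_{\min}\le 2rn\}$. On the complement of both events, $2d_i\le 2rn<d_{\min}$ for every $i$, so the deterministic part rules out any error. Hence
\[
\Prob[\text{error}]\le \eta+\Prob[d_{\min}\le 2rn],
\]
and no independence between the radius event and the codebook is needed. The condition $r<1/4$ guarantees $2rn<n/2$, so the second event is a lower-tail deviation.

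The step I expect to be the main obstacle is bounding the codebook term, and it is where A4 enters. I take A4 to model the codewords $w_i$ (SHAKE-128 outputs of distinct public inputs) as independent uniform strings in $\{0,1\}^n$. Then for each pair $i<j$, $\HD(w_i,w_j)$ is a sum of $n$ i.i.d.\ Bernoulli$(1/2)$ bits with mean $n/2$. Hoeffding's inequality gives
\[
\Prob\bigl[\HD(w_i,w_j)\le 2rn\bigr]\le \exp\!\bigl(-2n(1/2-2r)^2\bigr).
\]
A union bound over the $\binom{N_c}{2}$ pairs then yields the second term of \eqref{eq:attr-thresh}. The care needed is to keep this term purely a property of the codebook, and not to condition on the trained aggregate, since rivals influence the aggregate through training. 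The union-bound split above isolates exactly that dependence.
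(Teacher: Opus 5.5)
Your proof is correct and follows the paper's argument essentially step for step. Both use the triangle-inequality separation for the deterministic claim, reduce any error to the union of the radius-failure event and a close-codeword-pair event, and bound the codebook term by Hoeffding plus a union over the $\binom{N_c}{2}$ pairs, with no independence required between decoding errors and the codebook.
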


\begin{corollary}[A sufficient operating condition]
\label{cor:envelope}
Under A1--A4, suppose $F_\omega(s_i)\ge p$ uniformly over clients and conditioning values. Choose $\xi>0$ with $r=1-p+\xi<1/4$. All clients are attributed correctly with probability at least $1-\delta$ if
\begin{equation}
\label{eq:master}
N_c e^{-2n\xi^2}+\binom{N_c}{2}e^{-2n(1/2-2r)^2}\le\delta.
\end{equation}
\end{corollary}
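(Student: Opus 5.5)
The plan is to reduce the corollary to Theorem~\ref{thm:attr}. That theorem already supplies the codebook-separation term $\binom{N_c}{2}e^{-2n(1/2-2r)^2}$. What remains is to show that the radius event $\{\max_i d_i> rn\}$ has probability at most $\eta:=N_c e^{-2n\xi^2}$ under the stated choice $r=1-p+\xi<1/4$. Substituting this $\eta$ into \eqref{eq:attr-thresh} gives exactly the left-hand side of \eqref{eq:master}, so the condition forces the total attribution-error probability below $\delta$.

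\textbf{Step 1: a single client.} Fix client $i$ and condition on $(u_i,t_i)$. By Theorem~\ref{thm:perbit}, each bit is wrong with probability at most $1-F_\omega(s_i)\le 1-p$, where the second inequality uses the uniform hypothesis. Theorem~\ref{thm:perbit} also says the proxy events in \eqref{eq:identity-proxy-main} are conditionally independent across bits.

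\textbf{Step 2: the subtle point.} The extracted bit $\hat h_{i,b}$ is correct whenever the proxy event holds, so the error indicator of each bit is dominated by the failure indicator of its proxy event. Those failure indicators are independent Bernoulli variables with means at most $1-p$. Hence $d_i$ is stochastically dominated by a sum $B$ of $n$ independent Bernoulli$(\le 1-p)$ variables. Hoeffding's inequality then gives
\[
\Prob[d_i>rn\mid u_i,t_i]\le \Prob[B-\E B\ge n\xi]\le e^{-2n\xi^2},
\]
because $rn-\E B\ge n(1-p+\xi)-n(1-p)=n\xi$. The bound is uniform in the conditioning values, so it also holds unconditionally.

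\textbf{Step 3: all clients.} A union bound over the $N_c$ clients gives $\Prob[\max_i d_i>rn]\le N_c e^{-2n\xi^2}=\eta$. This requires no independence across clients, which matches the remark in Theorem~\ref{thm:attr} that the radius event may depend on the whole codebook.

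\textbf{Step 4: conclusion.} Check that $0\le r<1/4$ holds: $r\ge\xi>0$ since $p\le1$, and $r<1/4$ by hypothesis. Theorem~\ref{thm:attr} under A4 then applies with this $\eta$. Its bound is at most $\delta$ by \eqref{eq:master}, so all clients are attributed correctly with probability at least $1-\delta$.

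The main obstacle is the dominance argument in Step 2. Theorem~\ref{thm:perbit} gives independence of the proxy events rather than of the actual bit outcomes. Per-bit probabilities alone would not justify Hoeffding's inequality, so the coupling through the proxy indicators must be stated carefully.
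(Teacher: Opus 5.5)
Your proposal is correct and follows the paper's proof: you bound $d_i$ pointwise by the number of failed proxy bits, apply Hoeffding conditionally on $(u_i,t_i)$ using the conditional independence from A3, remove the conditioning, take a union over clients, and feed $\eta=N_c e^{-2n\xi^2}$ into Theorem~\ref{thm:attr}. The paper also leaves the degenerate case $u_i=0$ implicit; there $d_i=0$ trivially, so nothing is lost.
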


These are sufficient conditions, not necessary thresholds. A measured mean bit-accuracy cannot substitute for the uniform conditional $p$. The load $\rho=N_cn/\omega$ describes the number of identity constraints relative to scale dimension. $\rho=1$ is a rank boundary, not a universal attribution-failure point (Lemma~\ref{lem:capacity}). Partial participation changes the current averaging weights. Marks from earlier rounds can remain in a skipped client's absence. Presence additionally requires a specified absolute Hamming radius, calibrated in Proposition~\ref{prop:presence-honest}.

\textbf{Tracing soundness.} For a recovered tracing word $y$, biases $p_b$, and score $S_i=\sum_b U(X_{i,b},y_b,p_b)$, an innocent row has independent Bernoulli entries conditional on $(y,p)$ under A5(a). Its first two moments are unchanged by the output, although its full distribution depends on the output.

\begin{theorem}[Conditional tracing bounds]
\label{thm:tardos}
Assume A5(a) and Definition~\ref{def:overlay}. For any threshold $z>0$ and $\alpha>0$, define
\begin{equation}
\label{eq:mgf-chernoff}
\begin{split}
M_b(\alpha;y,p)&=(1-p_b)e^{-\alpha(2y_b-1)\sqrt{p_b/(1-p_b)}}\\
&\quad+p_b e^{\alpha(2y_b-1)\sqrt{(1-p_b)/p_b}},\\
E_\alpha(y,p;z)&=\alpha z-\sum_b\log M_b(\alpha;y,p).
\end{split}
\end{equation}
Then
\begin{equation}
\label{eq:tardos-sound}
\Prob[\exists\text{ innocent }i:S_i>z\mid y,p]
\le\min\{1,N e^{-E_\alpha(y,p;z)}\}.
\end{equation}
A separate a-priori bound holds uniformly over $(y,p)$ at
\begin{equation}
\label{eq:tardos-sound-proved}
z_B=\frac{BL}{3}+\sqrt{\left(\frac{BL}{3}\right)^2+2\nt L},
\end{equation}
where $B=\sqrt{(1-\delta_c)/\delta_c}$, $\delta_c=1/(300c)$, and $L=\ln(N/\varepsilon_1)$: the false-accusation probability is at most $\varepsilon_1$.
\end{theorem}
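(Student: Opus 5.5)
Both bounds rest on one structural fact, taken from A5(a) and Definition~\ref{def:overlay}: conditional on the recovered word $y$ and the biases $p$, an innocent row $X_i$ has independent entries $X_{i,b}\sim\mathrm{Bernoulli}(p_b)$. Moreover, $y$ is computed from the colluders' rows and the channel only, so it is independent of $X_i$. With the \v{S}kori\'c-symmetric score, each summand is
\[
U(X_{i,b},y_b,p_b)=(2y_b-1)\,g(X_{i,b},p_b),
\]
where $g(1,p)=\sqrt{(1-p)/p}$ and $g(0,p)=-\sqrt{p/(1-p)}$. The summands are therefore conditionally independent, have mean zero, and have variance one.

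\textbf{Conditional bound \eqref{eq:tardos-sound}.} For a fixed innocent $i$, I would compute the conditional moment generating function directly. Since $X_{i,b}=1$ with probability $p_b$ and $X_{i,b}=0$ with probability $1-p_b$, the factor $\E[e^{\alpha U_b}\mid y,p]$ is exactly $M_b(\alpha;y,p)$. The two terms of $M_b$ correspond to these two values of $X_{i,b}$, with the sign $(2y_b-1)$ absorbed into the exponent.

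By conditional independence the factors multiply, so $\E[e^{\alpha S_i}\mid y,p]=\prod_b M_b$. Markov's inequality applied to $e^{\alpha S_i}$ then gives
\[
\Prob[S_i>z\mid y,p]\le e^{-\alpha z}\prod_b M_b=e^{-E_\alpha(y,p;z)}.
\]
A union bound over at most $N$ innocents, together with the trivial bound $1$, yields \eqref{eq:tardos-sound}. No assumption on how $y$ arose is needed, since the argument only conditions on it. This is the sense in which the bound covers arbitrary recovered bit patterns.

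\textbf{A-priori bound at $z_B$.} Here I would use a Bernstein inequality that holds uniformly in $(y,p)$. With the Tardos cutoff $p_b\in[\delta_c,1-\delta_c]$, each $|U_b|\le\max\{\sqrt{(1-p_b)/p_b},\sqrt{p_b/(1-p_b)}\}\le B$. The summands are mean zero, their variances sum to $\nt$, and they are bounded by $B$. Bernstein's inequality then gives
\[
\Prob[S_i>z\mid y,p]\le\exp\!\left(-\frac{z^2}{2\nt+2Bz/3}\right).
\]
To make the right side at most $\varepsilon_1/N=e^{-L}$, I need $z^2-\tfrac{2BL}{3}z-2\nt L\ge0$. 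The positive root of this quadratic is exactly $z_B$ in \eqref{eq:tardos-sound-proved}. A union bound over $N$ innocents gives probability at most $\varepsilon_1$, first conditionally on $(y,p)$ and then, by averaging, unconditionally.

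\textbf{Expected obstacle.} The main difficulty is justifying that the conditional law of the innocent row is unaffected by $y$. This is a modeling point rather than a computation. I would handle it by stating explicitly that $y$ is measurable with respect to the colluder rows, $p$, and channel randomness independent of $X_i$, which is what A5(a) supplies. I would also check that the bias-density cutoff guarantees $p_b\in[\delta_c,1-\delta_c]$ for every $b$, since the uniform bound $B$ depends on it.
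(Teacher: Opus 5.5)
Your proposal is correct and follows the paper's proof essentially step for step. For \eqref{eq:tardos-sound} you use the exact two-point MGF $\prod_b M_b$, Markov's inequality on $e^{\alpha S_i}$, and a union bound over at most $N$ innocents. For $z_B$ you apply Bernstein's inequality with increments bounded by $B$ and total variance $\nt$, then solve the quadratic in $z$ at per-row target $e^{-L}=\varepsilon_1/N$.

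One small caution: your remark that no assumption on how $y$ arose is needed overstates things. A5(a), which conditions on $p$, is exactly such an assumption. The paper posits it rather than deriving it from $y$ being a function of the coalition rows.
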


The candidate threshold $Z=\sqrt{2\nt L}$ used in part of the evaluation is smaller than $z_B$. It supports a certificate only when an evaluated exponent passes the required budget. The coalition sweep measures threshold exceedances, while the certificate-based evaluation checks the probability bound for each decoded word (Table~\ref{tab:certified-audit}). Appendix~\ref{ssub:adjudicate} specifies a certificate-gated rule, including separate budgets for positive and negative decisions.

\textbf{Finite completeness.} Theorem~\ref{thm:finite-completeness} supplies a lower-tail guarantee for the actual cutoff, without assuming independent intended coalition symbols. For a coalition of size $s\le c$, its exponential-moment bound is
\begin{equation}
\label{eq:finite-main}
\Prob[\max_{i\in\Coll}S_i\le z]\le e^{tsz}J_s(t,q)^{\nt},\qquad t>0.
\end{equation}
The function $J_s$ integrates over the posterior secret bias given each coalition column, maximizing over permitted output bits. Conditioning on the entire coalition matrix allows the attacker to choose its intended word jointly across positions. Only the residual channel must flip these bits independently with probability $q$. Appendix~\ref{app:tardos-proofs} proves the bound and gives a compatible a-priori soundness threshold.

Interval integration establishes both guarantees at the two deployed lengths: for $N=10$ and false-accusation budget $10^{-3}$, the worst missed-coalition bounds over $s\le c$ are $0.0096$ at $(c,\nt,q)=(2,512,0.04)$ and $0.00201$ at $(3,2048,0.15)$ (Table~\ref{tab:finite-guarantees}). These are finite channel guarantees, not estimates of the networks' residual channel. Per-colluder disagreement after averaging cannot identify $q$.

For planning, the asymptotic length estimate is
\begin{equation}
\label{eq:tardos-length}
\nt^{\mathrm{design}}=\left\lceil\frac{\pi^2}{2}c^2(1-2q)^{-2}\ln(N/\varepsilon_1)\right\rceil.
\end{equation}
Its inverse-square channel cost~\cite{skoric2008symmetric,laarhoven2014optimal} describes a design trend. Equation~\eqref{eq:finite-main} establishes the finite guarantee for the specified threshold and cutoff. Figure~\ref{fig:channel-price} distinguishes that trend from threshold certification.

\begin{figure}[t]
\centering
\includegraphics[width=\columnwidth]{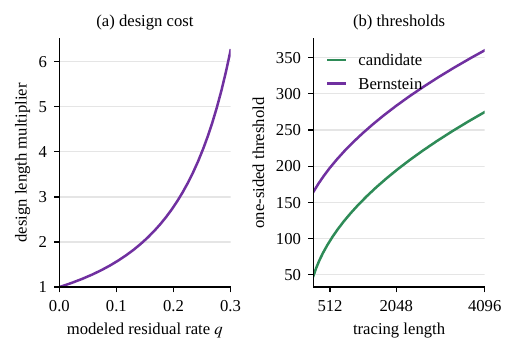}
\caption{Tracing design and certification are separate. (a)~Inverse-square length multiplier for a hypothetical independent residual flip channel. (b)~Candidate and Bernstein-certified one-sided thresholds versus tracing length, at $c=2$, $N=10$, and $\varepsilon_1=10^{-3}$. The certified threshold does not assert completeness.}
\label{fig:channel-price}
\end{figure}

\textbf{Feature stability under distillation.} Let $f_T,f_S$ use the same feature coordinates and let $\eKD^2=\nt^{-1}\sum_b\|f_S(p_b)-f_T(p_b)\|_2^2$ on the watermark probes.

\begin{proposition}[Empirical probe-margin stability]
\label{prop:kd-robust}
For the unit projections in Definition~\ref{def:feature-carrier}, suppose all but a fraction $q_{\mathrm{bad}}$ of teacher probes have the correct signed margin at least $\mu>0$. The student's disagreement with the embedded binary row satisfies
\begin{equation}
\label{eq:qkd}
q_{\mathrm{KD}}\le\min\left\{1,q_{\mathrm{bad}}+\frac{\eKD^2}{\mu^2}\right\}.
\end{equation}
\end{proposition}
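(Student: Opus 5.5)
The plan is a deterministic counting argument: split the probe bits into teacher-bad probes and teacher-good probes, then bound the flipped good probes with a Markov/Chebyshev-type count driven by $\eKD$. By Definition~\ref{def:feature-carrier}, bit $b$ is read as the sign of $\inner{v_b}{f(p_b)}$ with $\|v_b\|_2=1$. The embedded target sign is $t_b=2X_b-1$. Let $\mathcal{G}$ be the set of probes where the teacher has $t_b\inner{v_b}{f_T(p_b)}\ge\mu$. By hypothesis $|\mathcal{G}^c|\le q_{\mathrm{bad}}\nt$. On these bad probes we make no claim, and each may count as a disagreement.

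For $b\in\mathcal{G}$, a student disagreement (counting a zero projection as an error, to match the strict threshold in the readout) means $t_b\inner{v_b}{f_S(p_b)}\le 0$. Writing $f_S=f_T+(f_S-f_T)$ gives $t_b\inner{v_b}{f_S(p_b)-f_T(p_b)}\le -\mu$. Cauchy--Schwarz with the unit projection then yields $\|f_S(p_b)-f_T(p_b)\|_2\ge\mu$. Each flipped good probe therefore contributes at least $\mu^2$ to $\sum_b\|f_S(p_b)-f_T(p_b)\|_2^2=\nt\eKD^2$. Summing only over the flipped good probes, their number is at most $\nt\eKD^2/\mu^2$.

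Combining the two counts, the disagreement fraction is $q_{\mathrm{KD}}\le q_{\mathrm{bad}}+\eKD^2/\mu^2$. Since it is a fraction, it is also at most $1$, which gives \eqref{eq:qkd}. No probabilistic assumption is needed, because the bound is a pigeonhole/Markov count over the deterministic probe set.

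The main obstacle is bookkeeping rather than analysis. One point is that the student must use the same feature coordinates, as the proposition assumes. The other is that the sign convention at zero and the precise definition of $q_{\mathrm{KD}}$ must match Definition~\ref{def:feature-carrier}. If the readout uses $\mathbb{1}[z>0]$, ties at zero should be counted as possible errors, and the strict-versus-nonstrict margin inequality handled accordingly. Neither point changes the bound.
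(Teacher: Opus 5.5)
Your proposal is correct and follows the paper's proof: Cauchy--Schwarz shows that a teacher probe with margin at least $\mu$ can flip only if $r_b=\|f_S(p_b)-f_T(p_b)\|_2\ge\mu$, the bad probes are charged in full, and the count $\#\{b:r_b\ge\mu\}\,\mu^2\le\sum_b r_b^2=\nt\eKD^2$ gives the deterministic bound. Counting ties $t_b\,m_b=0$ as errors is a conservative detail that the paper's proof leaves implicit.
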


This bound counts wrong and insufficient-margin teacher probes in $q_{\mathrm{bad}}$. A stronger score-level result handles arbitrary correlated errors: if at most $k$ decoded bits change, then $S_i^S\ge S_i^T-A_i(k)$, where $A_i(k)$ sums the $k$ largest weights $2|X_{i,b}-p_b|/\sqrt{p_b(1-p_b)}$. Thus $S_i^T-A_i(k)>z$ certifies threshold survival without a binary-symmetric channel (Theorem~\ref{thm:score-stability}). Feature matching can preserve probe margins. Function-only matching, however, can change feature coordinates and violate the margin condition. Weight projections are not fixed by a feature-matching objective. Section~\ref{sec:eval_robustness} reports the observed outcomes for both carriers.

%% file: 06_experimental_setup.tex
\section{EXPERIMENTAL SETUP}
\label{label_experiments}

\textbf{Datasets.} We re-render the GNSS recordings of Heublein et al.~\cite{heublein_feigl_crpa} as $100{,}000$ four-channel $4{\times}32{\times}32$ antenna-array tensors. The GNSS benchmark contains six interference classes and excludes the interference-free class. We use a sample-level $80/10/10$ split (not a held-out-class split) into training, validation, and test sets. Class imbalance reaches approximately $178{\times}$. A Dirichlet partition with $\alpha{=}2.0$ gives approximately $8{,}000$ samples per client and moderate label skew, without guaranteeing class coverage. CIFAR-10 ($3{\times}32{\times}32$) provides the transfer check.

All ten stations use partitions of one measurement campaign. These partitions vary class availability and sample count, but do not reproduce separately sited receivers with distinct calibration, multipath, and local interference. The receiver-shift experiment is a proxy for these effects. Appendix~\ref{sub:config} discusses the unmodeled variation and the need to recheck tracing certificates on decodes from new sites.

\textbf{Implementation.} A ResNet-18 backbone~\cite{he_zhang}, with its final fully connected layer removed, feeds a prototypical head~\cite{snell_swersky}. We pre-train each dataset for $200$ cross-entropy epochs, then run $100$ FedAvg rounds~\cite{Fedavg} with ten clients. Each client trains on $25$ local five-way five-shot episodes per round, with $15$ query examples per class. Both phases use stochastic gradient descent (SGD), with optimizer settings and seeds listed in Appendix~\ref{sub:config}.

\textbf{Watermark configuration.} Each client has $n{=}128$ identity bits. We evaluate the deployed $N_c{=}10$ setting ($\rho{\approx}0.27$) and a higher-load setting with $N_c{=}40$ ($\rho{=}1.07$). The presence calibration uses $p_r{=}2^{-128}$, giving $\mathsf{err}_n{=}0$ (Prop.~\ref{prop:presence-honest}). The identity and presence criteria are distinguished below. Credential proofs use $r{=}331$ parallel repetitions. On dispatched copies, the weight overlay uses $n_{\mathrm{t}}{=}512$ BN-$\gamma$ projections with design target $c{=}2$. The feature carrier uses $n_{\mathrm{t}}{=}2048$ in-distribution probes with design target $c{=}3$.

\textbf{Metrics.} We measure classification accuracy over $200$ five-way five-shot test episodes. Each GNSS episode samples five of the six classes. On the aggregate, \emph{attribution accuracy} is the fraction of clients identified by maximum codeword agreement. \emph{Self bit-accuracy} measures agreement with the client's own codeword, while \emph{cross-talk} measures mean agreement with other codewords. \emph{Presence} is the stricter Hamming-distance test in \eqref{eq_acceptance}.

On dispatched copies, \emph{single-leaker isolation} requires naming the owner without accusing an innocent. Testing every recipient over ten seeds gives $100$ trials at $N_c{=}10$ and $400$ at $N_c{=}40$, per dataset. \emph{Collusion traceability} requires accusing at least one colluder and no innocent, for sampled coalition sizes $k\in\{2,3,5,8\}$. We also test offline tracing and resistance to codeword copying and evidence substitution (Sec.~\ref{label_method_tracing}). Distillation counts pool ten runs per dataset, giving twenty runs per distiller class. Coalition sampling counts are in Appendix~\ref{sub:config}. The certificate-based evaluation uses $N_c=10$ weight-carrier models from ten CIFAR-10 and six GNSS seeds. For each seed, it tests all ten single copies and all 45 pair averages with the adjudication rule in Proposition~\ref{prop:adjudicate}.

Feature geometry is measured by silhouette score, intra- and inter-class distance, and leave-one-out $1$-NN accuracy. We report ten-seed means and standard deviations or pooled counts, as indicated. Matched watermark-off/on models differ only in $\lambda_{\mathrm{wm}}$. Two one-sided tests (TOST) assess global-model equivalence at the stated margins ($0.5$\,pp on CIFAR-10 and $1.0$\,pp on GNSS, App.~\ref{sub:utility-ext}).

\textbf{Attacks.} Each attack is evaluated on $100$ few-shot test episodes. We report task accuracy together with presence and attribution, so successful erasure can be distinguished from destruction of the classifier. Model-modification attacks include BN-scale pruning, Gaussian noise, quantization, and combined pruning and quantization. Targeted attacks use projected gradient descent (PGD) against a bit-flip objective or reset selected BN layers. Training-based attacks use fine-tuning without the watermark loss or knowledge distillation (KD)~\cite{hinton_vinyals_dean}. The latter includes feature matching~\cite{romero2015fitnets}, cross-architecture transfer, logit-only distillation, and feature isometry.

Additional tests cover insider own-row erasure (negation, fresh re-randomization, and their per-carrier hybrid), robust aggregation~\cite{blanchard2017byzantine} using the coordinate-wise median~\cite{yin2018byzantine}, and GNSS receiver covariate shift. Appendix~\ref{sub:attack-protocol} gives the grids and budgets.

\textbf{Baselines.} Five federated-watermarking methods use the same learning pipeline: FedZKP~\cite{yang_yin_zhu}, FedTracker~\cite{shao_yang_fedtracker}, DUW~\cite{yu_hong_duw}, FedIPR~\cite{li_fan_gu}, and WAFFLE~\cite{tekgul_xia_marchal}. FedZKP supplies the group-ownership reference, while FedTracker and DUW supply per-client fingerprints. FedIPR uses BN scales, while WAFFLE uses its native trigger carrier. Each method is evaluated under a shared seven-attack suite using its own mark metric. We additionally test a DeepMarks-style balanced incomplete block design (BIBD) comparator~\cite{chen_deepmarks} under the coalition sweep, because the five federated baselines lack collusion-secure codes.

%% file: 07_results.tex
\section{RESULTS}
\label{label_evaluation}

We evaluate identity attribution on the aggregate, recipient tracing on dispatched copies, and robustness after attacks. The results distinguish the two carriers' tracing performance and copy-accuracy costs from the utility of the deployed global model.

% ======================================================================
\subsection{Attribution and Survival}
\label{sec:eval_attribution}

\renewcommand{\arraystretch}{0.82}
\begin{table}[t]
    \centering
    \caption{Per-client attribution on the aggregated model ($N_c{=}10$, $n{=}128$, $\rho{=}0.27$; 10-seed mean$\pm$std). Attr.: clients named correctly. Self, Cross: bit-agreement with own and with other codewords. Acc: few-shot accuracy.}
    \label{tab:attribution}
    \resizebox{\columnwidth}{!}{%
    \begin{tabular}{llcccc}
    \toprule
    \textbf{Dataset} & \textbf{Method} & \textbf{Attr.\,(\%)} & \textbf{Self} & \textbf{Cross} & \textbf{Acc\,(\%)} \\
    \midrule
    \multirow{2}{*}{GNSS} & Ours & \textbf{\textcolor{ForestGreen}{$98.0_{\pm4.0}$}} & $0.96_{\pm.02}$ & $0.50_{\pm.01}$ & $93.4_{\pm1.0}$ \\
    & FedZKP (group) & $10.0^{\dagger}$ & -- & -- & $93.9_{\pm0.4}$ \\
    \midrule
    \multirow{2}{*}{CIFAR-10} & Ours & \textbf{\textcolor{ForestGreen}{$100.0_{\pm0.0}$}} & $0.94_{\pm.01}$ & $0.50_{\pm.01}$ & $84.9_{\pm0.3}$ \\
    & FedZKP (group) & $10.0^{\dagger}$ & -- & -- & $84.7_{\pm0.2}$ \\
    \bottomrule
    \end{tabular}}

    \vspace{2pt}
    {\footnotesize $^{\dagger}$~Chance level $1/N_c$ (group mark, no per-client component).}
\end{table}

The maximum-agreement rule identifies $98\%$ of contributors on GNSS and $100\%$ on CIFAR-10 (Table~\ref{tab:attribution}). Mean cross-talk is $0.50$, consistent with the analysis of non-matching codewords. FedZKP's group mark contains no per-client identifier, so its reference attribution rate is chance ($1/N_c{=}10\%$). These results identify contributors to the shared aggregate. Recipient tracing is evaluated separately in Sec.~\ref{sec:eval_tracing}.

\textbf{Survival under aggregation.} Self-agreement decreases as the load ratio $\rho$ grows, but remains above the quadrature diagnostic $\hat p$ in every cell (Table~\ref{tab:survival}, Fig.~\ref{fig:survival}). CIFAR-10 attribution remains exact through $\rho{=}1.07$ ($N_c{=}40$). GNSS attribution falls to $0.87$ there despite similar self-agreement ($0.77$ versus $0.76$). Thus the per-bit floor alone does not explain the dataset difference. In particular, GNSS's smaller baseline scale $G$ raises the quadrature diagnostic and cannot explain its earlier attribution decline. Mean cross-talk remains near $0.50$, but this average does not determine the largest competing score. Appendix~\ref{sub:tightness} discusses the bound's limits near capacity.

\begin{table}[t]
\centering
\footnotesize
\setlength{\tabcolsep}{4pt}
\renewcommand{\arraystretch}{1.05}
\caption{Predicted survival vs.\ ten-seed measurements under FedAvg ($R{=}100$, $n{=}128$, $\mu{=}1$, $\omega{=}4800$; means, with std shown where ${>}0.01$). $\rho{=}N_c n/\omega$: identity load. $\hat p{=}\Phi(\SNR_{\mathrm{quad}})$: quadrature diagnostic \eqref{eq:snr-scaling}. Self, Cross, Attr.: measured per-bit self-agreement, cross-talk, and attribution. Acc: few-shot accuracy. $G{=}\|\gamma_0\|$: backbone baseline scale.}
\label{tab:survival}
\resizebox{\columnwidth}{!}{%
\begin{tabular}{@{}clcccccc@{}}
\toprule[1pt]
\textbf{Dataset} & \textbf{Setting} & $\boldsymbol{\rho}$ & $\boldsymbol{\hat p}$ & \textbf{Self} & \textbf{Cross} & \textbf{Attr.} & \textbf{Acc} \\
\midrule[0.8pt]
\multirow{6}{*}{\rotatebox{90}{\shortstack{\textbf{CIFAR-10}\\($G{=}7.32$)}}} & $N_c{=}5$   & 0.13 & 0.97 & 1.00 & 0.50 & $1.00$            & 0.85 \\
 & $N_c{=}10$  & 0.27 & 0.83 & 0.94 & 0.50 & $1.00$            & 0.85 \\
 & $N_c{=}20$  & 0.53 & 0.68 & 0.86 & 0.50 & $1.00$            & 0.85 \\
 & $N_c{=}40$  & 1.07 & 0.59 & 0.76 & 0.50 & $1.00$            & 0.86 \\
 & $N_c{=}80$  & 2.13 & 0.55 & 0.66 & 0.50 & $0.88_{\pm 0.04}$ & 0.86 \\
 & $N_c{=}120$ & 3.20 & 0.53 & 0.63 & 0.50 & $0.61_{\pm 0.05}$ & 0.86 \\
\midrule
\multirow{4}{*}{\rotatebox{90}{\shortstack{\textbf{GNSS}\\($G{=}4.73$)}}} & $N_c{=}5$   & 0.13 & 0.99 & 1.00 & 0.50 & $1.00$            & 0.92 \\
 & $N_c{=}10$  & 0.27 & 0.90 & 0.96 & 0.50 & $0.98_{\pm 0.04}$ & 0.93 \\
 & $N_c{=}20$  & 0.53 & 0.75 & 0.88 & 0.50 & $0.96_{\pm 0.03}$ & 0.92 \\
 & $N_c{=}40$  & 1.07 & 0.64 & 0.77 & 0.50 & $0.87_{\pm 0.04}$ & 0.86 \\
\bottomrule[0.8pt]
\end{tabular}}
\end{table}

% This figure plots tab:survival and must share its page. Both floats are
% therefore declared HERE, after the paragraph that references them, rather than
% before it: Sec. VII opens low on the preceding page, whose second column is
% already two-thirds tab:attribution, so a third float cannot fit there and
% LaTeX would strand the figure a page behind its table. Declaring the pair
% together after the text sends them to the next page as a unit. Specifier and
% source position both matter -- [!t]/[!tb]/[!ht] alone do not fix it.
\begin{figure}[!t]
    \centering
    \includegraphics[width=\columnwidth]{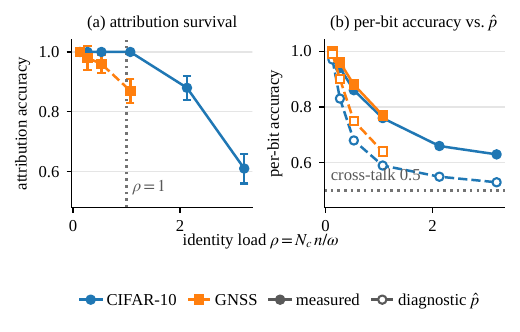}
    \caption{Attribution versus identity load from Table~\ref{tab:survival} ($n{=}128$, $\omega{=}4800$, $R{=}100$, 10-seed). (a)~Attribution accuracy against $\rho{=}N_c n/\omega$, error bars one std. Dotted: rank reference $\rho{=}1$. (b)~Measured per-bit self-agreement (filled, solid) and the diagnostic $\hat p{=}\Phi(\SNR_{\mathrm{quad}})$ of \eqref{eq:snr-scaling} (open, dashed). Dotted: chance agreement $0.5$.}
    \label{fig:survival}
\end{figure}

\textbf{Checked attribution margins.} On the $N_c=10$ models used in the certificate-based evaluation, we also check the deterministic radius condition of Theorem~\ref{thm:attr} on the shared aggregates before dispatch. It holds for $100/100$ CIFAR-10 and $59/60$ GNSS client components, exactly the correctly attributed components in those sets. Thus each correct decision has a verified Hamming-separation margin, without invoking the idealized conditional sphere model.

\textbf{Heterogeneity and partial participation.} Attribution is $100\%$ when evaluated only over stations that participate in the non-IID and partial-participation sweeps. Across the full roster, sampling $30\%$ of clients per round gives $98\%$ attribution on GNSS and $100\%$ on CIFAR-10. Strong label skew has a different effect: some clients lack the classes needed for a five-way episode and therefore cannot train a mark. At $\alpha{=}0.5$, roster-wide attribution is $100\%$ on CIFAR-10 and $79\%$ on GNSS. GNSS falls to $16\%$ at $\alpha{=}0.1$. Appendix~\ref{sub:noniid} reports both denominators.

% ======================================================================
\subsection{Tracing, Collusion, and the Two Carriers}
\label{sec:eval_tracing}

\renewcommand{\arraystretch}{0.82}
\begin{table}[t]
\setlength{\tabcolsep}{4pt}
    \centering
    \footnotesize
    \caption{Single-leaker isolation of the dispatched copy (weight-space overlay), 10-seed. \emph{Isolated}: trials whose owner the Tardos accusation names exactly with no innocent accused ($N_c{\times}10$ seeds). \emph{Aggr.\ attr.}: the shared-model argmax attribution from Table~\ref{tab:survival}. $q_{\mathrm{owner}}$: the owner's per-bit flip rate.}
    \label{tab:isolation}
    \begin{tabular}{lccccc}
    \toprule
    \textbf{Dataset} & $N_c$ & $\rho$ & \textbf{Isolated} & $q_{\mathrm{owner}}$ & \textbf{Aggr.\,attr.} \\
    \midrule
    \multirow{2}{*}{GNSS} & 10 & 0.27 & \textcolor{ForestGreen}{$100/100$} & $0.12$ & $0.98$ \\
     & 40 & 1.07 & \textcolor{ForestGreen}{$400/400$} & $0.05$ & $0.87$ \\
    \midrule
    \multirow{2}{*}{CIFAR-10} & 10 & 0.27 & \textcolor{ForestGreen}{$100/100$} & $0.15$ & $1.00$ \\
     & 40 & 1.07 & \textcolor{ForestGreen}{$400/400$} & $0.06$ & $1.00$ \\
    \bottomrule
    \end{tabular}

    \vspace{2pt}
    {\footnotesize Innocent baseline $q{\approx}0.29$--$0.31$ at $N_c{=}10$ and $0.25$--$0.26$ at $N_c{=}40$ (the minimum over the $N_c{-}1$ innocents is an order statistic that falls with $N_c$). The owner sits far below it.}
\end{table}

\textbf{Isolating the leaked copy.} The tracer decodes the dispatched copy's overlay and compares it with the registered secret rows, without the holder's cooperation. Every single-leaker trial identifies exactly the owner: $100/100$ per dataset at $N_c{=}10$ and $400/400$ at $N_c{=}40$ (Table~\ref{tab:isolation}). Isolation therefore persists beyond the identity-layer rank reference $N^{\!*}{=}37.5$. Registry adjudication succeeds in $10/10$ trials. The credential proof accepts the legitimate station and rejects every tested forgery, allowing a third party to check the evidence.

\renewcommand{\arraystretch}{0.82}
\begin{table}[t]
\setlength{\tabcolsep}{4pt}
    \centering
    \footnotesize
    \caption{Collusion traceability under Boneh--Shaw copy averaging, both carriers, $N_c{=}10$, 10-seed pooled (GNSS first). Decisions use score thresholds without a certificate requirement. Traceability: fraction of sampled coalitions with ${\geq}1$ colluder accused and no innocent accused. \emph{Framed}: fraction of coalitions in which any innocent is accused, at every $k$.}
    \label{tab:collusion}
    \begin{tabular}{llccccc}
    \toprule
    \textbf{Carrier} & \textbf{Dataset} & $k{=}2$ & $k{=}3$ & $k{=}5$ & $k{=}8$ & \textbf{Framed} \\
    \midrule
    \multirow{2}{*}{Weight ($c{=}2$)$^{\S}$} & GNSS & $1.00$ & $0.95$ & $0.40$ & $0.15$ & \textcolor{ForestGreen}{$0.000$} \\
     & CIFAR-10 & $1.00$ & $0.75$ & $0.35$ & $0.00$ & \textcolor{ForestGreen}{$0.000$} \\
    \midrule
    \multirow{2}{*}{Feature ($c{=}3$)} & GNSS & $1.00$ & $1.00$ & $1.00$ & $0.76$ & \textcolor{ForestGreen}{$0.000$} \\
     & CIFAR-10 & $\mathbf{1.00}$ & $\mathbf{1.00}$ & $\mathbf{1.00}$ & $\mathbf{1.00}$ & \textcolor{ForestGreen}{$0.000$} \\
    \bottomrule
    \end{tabular}

    \vspace{2pt}
    {\footnotesize $^{\S}$~Weight-carrier traceabilities use the implemented central-limit threshold. At the larger candidate threshold $Z{=}97.1$, the mid-coalition cells fall to $0.65$/$0.50$ at $k{=}3$ and $0.10$/$0.05$ at $k{=}5$ (GNSS/CIFAR-10). The $c{=}2$ design point ($1.00$) and the zero-framing column are invariant to the threshold.}
\end{table}

\begin{figure}[t]
    \centering
    \includegraphics[width=\columnwidth]{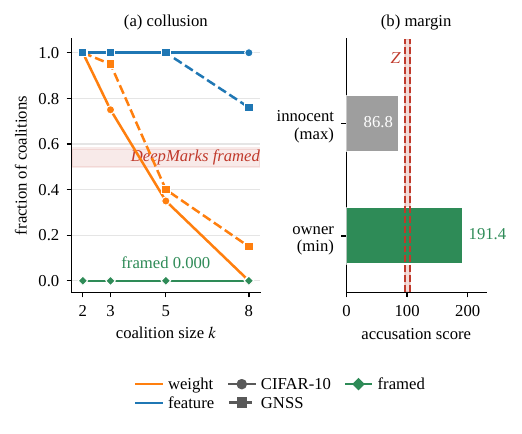}
    \caption{Collusion tracing and the accusation margin. (a)~Fraction of sampled coalitions traced, and the fraction in which any innocent is framed, against coalition size $k$ for both carriers and datasets (Table~\ref{tab:collusion}). Color keys the carrier, marker and line style the dataset. Shaded band: the DeepMarks-BIBD \emph{framing} rate, whose traceability is not plotted. (b)~Largest measured innocent score and smallest owner score over the realized decodes. Shaded band: the accusation threshold $Z{=}\sqrt{2\nt\ln(N/\varepsilon_1)}$ over the deployed federation sizes.}
    \label{fig:tracing}
\end{figure}

\textbf{Collusion tracing and false accusations.} The weight overlay traces every tested two-client coalition, at its design target $c{=}2$. Success declines for larger coalitions (Table~\ref{tab:collusion}, Fig.~\ref{fig:tracing}). The feature carrier is designed for $c{=}3$ and traces every sampled coalition through $k{=}5$ on GNSS and $k{=}8$ on CIFAR-10. These outcomes are empirical. The finite channel guarantees are given in Table~\ref{tab:finite-guarantees}. Neither carrier accuses an innocent station at any tested coalition size, including the $N_c{=}40$ experiments.

\textbf{Certificate-based tracing.} We evaluate $N_c=10$ weight-carrier models over ten CIFAR-10 seeds and six GNSS seeds, testing all ten single copies and all $\binom{10}{2}=45$ pair averages per seed. The certificate-gated rule allocates $10^{-3}$ per investigation across accusation and tamper tails. All $880$ positive-tail certificates pass. The adjudication rule isolates $160/160$ single copies and traces $712/720$ pair mixtures without naming an innocent (Table~\ref{tab:certified-audit}). The eight missed mixtures are CIFAR-10 pairs. This evaluation enumerates every pair within each seed. The coalition sweep in Table~\ref{tab:collusion} samples coalitions and uses threshold-only decisions.

\begin{table}[t]
\centering\footnotesize
\caption{Certificate-based weight-carrier tracing. Each entry counts decisions naming a colluder and no innocent. Positive-tail certificates pass for every tested decode. No innocent is named on either tail.}
\label{tab:certified-audit}
\begin{tabular}{lccc}
\toprule
Dataset & Seeds & Single copies & All pair averages\\
\midrule
GNSS & 6 & 60/60 & 270/270\\
CIFAR-10 & 10 & 100/100 & 442/450\\
\bottomrule
\end{tabular}
\end{table}

Each certificate establishes its numerical bound under A5(a), without requiring a binary-symmetric residual channel. Decoded arrays and interval enclosures accompany the results so that each bound can be checked independently. It does not empirically establish the independence premise. Unanimous-position disagreement averages $0.112$ on GNSS pairs and $0.140$ on CIFAR-10 pairs. These are measured channel diagnostics, not estimates validating the modeled $q$ values in Table~\ref{tab:finite-guarantees}.

\renewcommand{\arraystretch}{0.82}
\begin{table}[t]
\setlength{\tabcolsep}{4pt}
    \centering
    \footnotesize
    \caption{Two-carrier comparison. Where two figures appear they are GNSS\,/\,CIFAR-10. Tracing lengths: $512$ weight and $2048$ feature. Max $k$: largest coalition traced at $100\%$. Copy cost: accuracy drop on the dispatched copy. Global cost: accuracy change on the deployed model.}
    \label{tab:twocarrier}
    \begin{tabular}{lcc}
    \toprule
    \textbf{Property} & \textbf{Weight-space} & \textbf{Feature-space} \\
    \midrule
    Collusion design target $c$ & $2$ & $3$ \\
    Max $k$ traced at $100\%$    & $2$\,/\,$2$ & $5$\,/\,$8$ \\
    KD survival (feature-match)  & \textcolor{red}{$0/10$} & \textbf{\textcolor{ForestGreen}{$10/10$}} \\
    Copy-accuracy cost (pp)      & ${\approx}4.5$ & $4.8$\,/\,$6.1$ \\
    Global-accuracy cost (pp)    & $0$ & $0$ \\
    \bottomrule
    \end{tabular}
\end{table}

\textbf{Carrier tradeoff.} Both carriers encode the secret tracing row and use the same registry and credential-authentication procedure (Table~\ref{tab:twocarrier}). The weight carrier targets two colluders and has a dispatched-copy accuracy cost of approximately $4.5$ points in the reported $300$-episode, $\lambda_t=6$ configuration. Distillation erases this carrier. The feature carrier targets three colluders and survives the tested feature-matching distillation runs, with a sufficient margin certificate given by Theorem~\ref{thm:score-stability}. Its copy-accuracy cost is $4.8$ points on GNSS and $6.1$ on CIFAR-10. Function-only distillation erases this carrier too. Both tracing marks are added at dispatch, so neither changes the global model.

% ======================================================================
\subsection{Comparison, Utility, and Robustness}
\label{sec:eval_robustness}

\renewcommand{\arraystretch}{0.82}
\begin{table}[t]
\setlength{\tabcolsep}{3pt}
    \centering
    \footnotesize
    \caption{Feature-space metrics, unwatermarked (Base) vs.\ watermarked (WM) backbone (10-seed mean$\pm$std, GNSS first). $1$-NN LOO: leave-one-out nearest-neighbor accuracy.}
    \label{tab:featurespace}
    \resizebox{\columnwidth}{!}{%
    \begin{tabular}{lcccc}
    \toprule
    & \multicolumn{2}{c}{\textbf{GNSS}} & \multicolumn{2}{c}{\textbf{CIFAR-10}} \\
    \cmidrule(lr){2-3}\cmidrule(lr){4-5}
    \textbf{Metric} & \textbf{Base} & \textbf{WM} & \textbf{Base} & \textbf{WM} \\
    \midrule
    Silhouette          & $0.605_{\pm.027}$ & $0.556_{\pm.040}$ & $0.340_{\pm.002}$ & $0.318_{\pm.004}$ \\
    $1$-NN LOO acc.     & $0.905_{\pm.010}$ & $0.911_{\pm.013}$ & $0.794_{\pm.002}$ & $0.797_{\pm.003}$ \\
    Intra-class dist.   & $3.36_{\pm.25}$   & $5.01_{\pm.39}$   & $7.31_{\pm.11}$   & $10.16_{\pm.18}$  \\
    Inter-class dist.   & $12.60_{\pm.53}$  & $16.35_{\pm.50}$  & $14.53_{\pm.17}$  & $18.50_{\pm.31}$  \\
    \bottomrule
    \end{tabular}}
\end{table}

\begin{table*}[t]
\setlength{\tabcolsep}{3pt}
\renewcommand{\arraystretch}{0.86}
    \centering
    \footnotesize
    \caption{Federated-watermarking baselines ($N_c{=}10$, 10-seed mean$\pm$std). Capability columns are defined in the text. \textcolor{ForestGreen}{$\checkmark$}/\textcolor{red}{$\times$} denote presence/absence, \emph{partial} a qualified guarantee, and a parenthesis names the party the guarantee rests on. Cost: one-time verification payload. Acc: few-shot accuracy. Mark: each method's native mark metric, DeepMarks-BIBD being scored on its native Boolean tracing only, hence the empty cells.}
    \label{tab:baselines}
    \begin{tabular}{lcccccccccc}
    \toprule
    & & & & & & & \multicolumn{2}{c}{\textbf{GNSS}} & \multicolumn{2}{c}{\textbf{CIFAR-10}} \\
    \cmidrule(lr){8-9}\cmidrule(lr){10-11}
    \textbf{Method} & \textbf{Attribution} & \textbf{Credential} & \textbf{ZK} & \textbf{KD} & \textbf{Coll.} & \textbf{Cost} & \textbf{Acc} & \textbf{Mark} & \textbf{Acc} & \textbf{Mark} \\
    \midrule
    \textbf{Ours} & \textcolor{ForestGreen}{$\checkmark$} & \textcolor{ForestGreen}{$\checkmark$} & \textcolor{ForestGreen}{$\checkmark$} & \textcolor{ForestGreen}{$\checkmark^{\ast}$} & \textcolor{ForestGreen}{$\checkmark$} & 0.6\,MB & $0.934_{\pm.010}$ & $\mathbf{0.98}_{\pm.04}$ & $0.849_{\pm.003}$ & $\mathbf{1.00}_{\pm.00}$ \\
    FedZKP & \textcolor{red}{$\times$}\,(group) & \textcolor{ForestGreen}{$\checkmark$} & \textcolor{ForestGreen}{$\checkmark$} & \textcolor{red}{$\times$} & \textcolor{red}{$\times$} & interactive & $0.939_{\pm.004}$ & $0.999_{\pm.001}$ & $0.847_{\pm.002}$ & $0.986_{\pm.003}$ \\
    FedTracker & \textcolor{ForestGreen}{$\checkmark$} & \textcolor{red}{$\times$}\,(server) & \textcolor{red}{$\times$} & \textcolor{red}{$\times$} & \textcolor{red}{$\times$} & 2.3\,MB & $0.926_{\pm.009}$ & $1.00_{\pm.00}$ & $0.838_{\pm.005}$ & $1.00_{\pm.00}$ \\
    DUW & \textcolor{ForestGreen}{$\checkmark$} & \textcolor{red}{$\times$}\,(server) & \textcolor{red}{$\times$} & partial & \textcolor{red}{$\times$} & 1.5\,MB & $0.922_{\pm.019}$ & $1.00_{\pm.00}$ & $0.825_{\pm.018}$ & $1.00_{\pm.00}$ \\
    FedIPR & partial & partial & \textcolor{red}{$\times$} & \textcolor{red}{$\times$} & \textcolor{red}{$\times$} & 2.3\,MB & $0.925_{\pm.016}$ & $0.997_{\pm.002}$ & $0.849_{\pm.003}$ & $0.980_{\pm.003}$ \\
    WAFFLE & \textcolor{red}{$\times$} & \textcolor{red}{$\times$} & \textcolor{red}{$\times$} & partial & \textcolor{red}{$\times$} & 0.3\,MB & $0.933_{\pm.009}$ & $0.218_{\pm.126}$ & $0.846_{\pm.003}$ & $0.289_{\pm.035}$ \\
    \midrule
    DeepMarks-BIBD & \textcolor{ForestGreen}{$\checkmark$} & \textcolor{red}{$\times$}\,(server) & \textcolor{red}{$\times$} & \textcolor{red}{$\times$} & partial & -- & -- & -- & -- & -- \\
    \bottomrule
    \end{tabular}

    \vspace{2pt}
    {\footnotesize $^{\ast}$~Feature carrier survives feature-matching KD ($10/10$ per dataset), but not function-only KD. All tested weight-space marks fail.}
\end{table*}

Table~\ref{tab:baselines} compares the five baselines using their native mark metrics. The capability columns denote cooperation-free recipient identification (\emph{Attribution}), client-secret proof of knowledge (\emph{Credential}), verification without disclosure of reusable secrets (\emph{ZK}), distillation survival (\emph{KD}), and coded tracing with a conditional false-accusation bound (\emph{Coll.}). ZK-Trace combines these capabilities subject to the carrier, channel, and per-decode qualifications above.

FedZKP and FedIPR verify ownership of a shared model rather than identify its recipient. WAFFLE's group-mark metric is weak on both datasets ($0.22$ GNSS, $0.29$ CIFAR-10). FedTracker and DUW achieve per-client traceability $1.0$, but their fingerprints are server-assigned and lack binding to a client secret or zero-knowledge verification. Their distillation outcomes depend on method and dataset (App.~\ref{sub:kd-leveling}). DeepMarks-BIBD~\cite{chen_deepmarks} addresses collusion under exact extraction, an assumption disrupted by FedAvg bit errors. It frames innocents in $50\%$ of plain coalitions and $58\%$ within its design resilience. ZK-Trace frames none in the same sweep (Fig.~\ref{fig:tracing}).

\textbf{Global-model utility.} Dispatch fingerprinting does not modify the global model. For the identity watermark, paired ten-seed watermark-off/on comparisons establish equivalence at the stated margins on both datasets: $\Delta=-0.04$\,pp on GNSS (margin $1.0$\,pp, TOST $p=0.0052$) and $+0.24$\,pp on CIFAR-10 (margin $0.5$\,pp, $p=0.0489$). Appendix~\ref{sub:utility-ext} gives confidence intervals and the matched-artifact protocol.

\textbf{Feature-space geometry.} Table~\ref{tab:featurespace} evaluates the representation learned with the identity watermark. Leave-one-out $1$-NN accuracy, a diagnostic of local class separation, shows no significant change on either dataset. Silhouette scores decrease by $0.048$ on GNSS ($p{=}0.009$) and $0.023$ on CIFAR-10 ($p{<}10^{-4}$). Both intra- and inter-class distances increase. The observed pattern indicates altered cluster compactness alongside similar nearest-neighbor classification. It does not imply that watermarking leaves feature geometry unchanged.

\renewcommand{\arraystretch}{0.80}
\begin{table}[t]
\setlength{\tabcolsep}{3pt}
    \centering
    \caption{Robustness of \emph{attribution} for the weight-space carrier under the parameter-space and training attacks ($N_c{=}10$, 100 test episodes, 10-seed mean$\pm$std, GNSS first). Self: self bit-accuracy (${\approx}0.5$ = codeword erased). Acc: few-shot accuracy (\%). Attr.: attribution accuracy. \textcolor{ForestGreen}{Green}/\textcolor{red}{red}: attribution preserved / broken.}
    \label{tab:attacks}
    \resizebox{\columnwidth}{!}{%
    \begin{tabular}{lcccccc}
    \toprule
    & \multicolumn{3}{c}{\textbf{GNSS}} & \multicolumn{3}{c}{\textbf{CIFAR-10}} \\
    \cmidrule(lr){2-4} \cmidrule(lr){5-7}
    \textbf{Attack} & \textbf{Self} & \textbf{Acc} & \textbf{Attr.} & \textbf{Self} & \textbf{Acc} & \textbf{Attr.} \\
    \midrule
    None (clean)               & $0.96_{\pm.02}$ & $93.4_{\pm1.0}$ & \textcolor{ForestGreen}{$0.98_{\pm.04}$} & $0.94_{\pm.01}$ & $84.9_{\pm.3}$ & \textcolor{ForestGreen}{$1.00_{\pm.00}$} \\
    Structured prune 50\%      & $0.95_{\pm.02}$ & $31.2_{\pm2.8}$ & \textcolor{ForestGreen}{$0.98_{\pm.04}$} & $0.94_{\pm.01}$ & $52.0_{\pm3.6}$ & \textcolor{ForestGreen}{$1.00_{\pm.00}$} \\
    Noise $\sigma{=}2.0$         & $0.78_{\pm.01}$ & $26.7_{\pm1.9}$ & \textcolor{ForestGreen}{$0.98_{\pm.04}$} & $0.76_{\pm.01}$ & $26.0_{\pm.8}$ & \textcolor{ForestGreen}{$1.00_{\pm.00}$} \\
    Quantize 2-bit             & $0.93_{\pm.02}$ & $29.4_{\pm5.0}$ & \textcolor{ForestGreen}{$0.98_{\pm.04}$} & $0.92_{\pm.01}$ & $29.4_{\pm5.4}$ & \textcolor{ForestGreen}{$1.00_{\pm.00}$} \\
    PGD ($\epsilon{=}0.1$)       & $0.95_{\pm.02}$ & $82.4_{\pm9.1}$ & \textcolor{ForestGreen}{$0.98_{\pm.04}$} & $0.94_{\pm.01}$ & $80.4_{\pm.8}$ & \textcolor{ForestGreen}{$1.00_{\pm.00}$} \\
    Reset layer3+4             & $0.59_{\pm.02}$ & $41.9_{\pm4.6}$ & $0.63_{\pm.23}$ & $0.56_{\pm.02}$ & $26.3_{\pm2.2}$ & $0.48_{\pm.17}$ \\
    Fine-tune 100\,ep          & $0.96_{\pm.02}$ & $82.2_{\pm8.4}$ & \textcolor{ForestGreen}{$0.98_{\pm.04}$} & $0.94_{\pm.01}$ & $79.1_{\pm.7}$ & \textcolor{ForestGreen}{$1.00_{\pm.00}$} \\
    KD 80\,ep                  & $0.50_{\pm.01}$ & $80.4_{\pm10.8}$ & \textcolor{red}{$0.15_{\pm.11}$} & $0.51_{\pm.01}$ & $80.6_{\pm1.9}$ & \textcolor{red}{$0.15_{\pm.08}$} \\
    \bottomrule
    \end{tabular}}
\end{table}

\textbf{Removal attacks.} Pruning, noise, quantization, PGD, and fine-tuning preserve weight-carrier attribution at its clean level in Table~\ref{tab:attacks}. Layer reset reduces attribution, but also severely degrades task accuracy. Distillation is the tested attack that erases the weight mark while retaining approximately $80\%$ task accuracy (Fig.~\ref{fig:tradeoff}). All tested weight-space baselines also lose their marks under the same $80$-epoch distillation.

\textbf{Feature-carrier distillation.} Feature-matching distillation preserves the feature mark in $20/20$ runs, ten per dataset. Cross-architecture transfer from ResNet-18 to ResNet-34 preserves it in $19/20$ runs. The single loss is GNSS seed $271$, whose flip rate rises to $0.4976$. Across seeds, the architecture change produces larger and more variable flip-rate increases on GNSS than on CIFAR-10. Sparse four-channel inputs or subtle class differences may contribute, but these experiments do not isolate either cause. The baseline scale $G$ governs the weight-carrier bound, not this feature-carrier failure (App.~\ref{sub:kd-leveling}). Function-only distillation erases the feature mark in all twenty runs. Survival therefore depends on the feature-matching condition in Assumption~\ref{ass:kd}.

% Declared BEFORE the paragraph that references it, unlike fig:survival above.
% Sec. VII-C ends close to a page boundary, so a [!t] figure declared after this
% paragraph has no top slot left on the current page and LaTeX defers it a page,
% stranding it past Sec. VIII and costing a whole page of the 13-page budget.
% Moving the declaration earlier gives it a top slot on the same page as
% tab:attacks. Source position matters here -- [!t]/[!tb] alone do not fix it.
\begin{figure}[!t]
    \centering
    \includegraphics[width=\columnwidth]{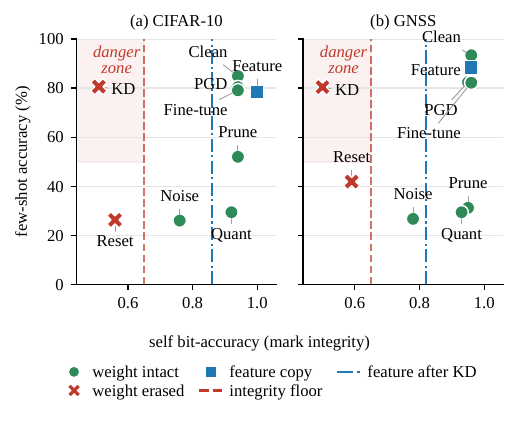}
    \caption{Accuracy--integrity tradeoff under the weight-carrier attacks of Table~\ref{tab:attacks}: (a)~CIFAR-10, (b)~GNSS. $x$-axis: self bit-accuracy (${\approx}0.5$ = codeword erased). $y$-axis: few-shot accuracy. Shaded: high accuracy with the codeword at chance. Square: clean feature-carrier copy. Dash-dot: its bit-accuracy after the same $80$-epoch distillation.}
    \label{fig:tradeoff}
\end{figure}

\textbf{Own-row erasure and hybrid escape.} The threshold-only tests detect complementary single-strategy erasures (Prop.~\ref{prop:adjudicate}, Lem.~\ref{lem:tamper-feature}). A hybrid attack defeats that complementarity by negating the weight overlay while independently re-randomizing the feature mark. At matched compute, it escapes the threshold-only tests on both carriers in $5/20$ runs.

The hybrid attack moves the weight score along a line with two decision thresholds. Scores above $Z$ trigger accusation, while sufficiently negative scores trigger the reflected tamper test. Moderate negation moves the score into the interval between these regions, while stronger negation can push it through the interval and into tamper detection. At the matched $300$-episode split, $15/20$ copies remain above the positive threshold. All twenty copies escape at twice that budget, but all twenty trigger tamper detection at four times the budget. Under the stated custody model, the attacker cannot directly monitor the tracer-secret score, so the tested attack selects compute without observing its stopping point. These results characterize an unresolved evasion route. They do not establish robustness to an adaptively chosen budget. No innocent is flagged at any tested budget (App.~\ref{ssub:adjudicate}).

\textbf{Deployment stressors.} Coordinate-wise median aggregation preserves attribution ($1.00$ CIFAR-10, $0.98$ GNSS), although task accuracy declines. Under GNSS receiver covariate shift, tracing succeeds in all $100$ trials and aggregate attribution remains $0.98$, while exact single-leaker isolation degrades toward chance.

\textbf{Per-class copy cost.} On the dispatched copies, mean per-class recall falls by $4.4$ points on CIFAR-10 and $10.9$ on GNSS, with larger losses on the harder interference classes. These values use the original label space rather than five-way episodes and therefore differ from the episode-accuracy costs in Table~\ref{tab:twocarrier} (App.~\ref{sub:stressors}).

%% file: 08_conclusion.tex
\section{CONCLUSION}
\label{label_conclusion}

ZK-Trace combines recipient tracing, client-secret credentials, and independently checkable false-accusation certificates for federated GNSS models. The analysis provides finite completeness bounds for whole-codeword coalition strategies under a specified residual channel, together with deterministic score-stability bounds for correlated feature errors. At the two tracing lengths, verified finite examples give missed-coalition bounds below $0.0096$ and $0.00201$ with false-accusation budget $10^{-3}$. These modeled-channel results complement the certificate-based evaluation: all $880$ positive-tail certificates pass, all $160$ single copies are isolated, and $712/720$ pair mixtures are traced without naming an innocent.

Carrier choice controls the robustness and utility tradeoff. Feature-space tracing survives the tested feature-matching distillation runs, while function-only distillation and hybrid erasure remain evasion routes. Dispatch fingerprints leave the shared model unchanged. Matched controls establish global-model utility equivalence on both datasets at the stated margins. The remaining deployment work is to validate the conditional independence and probe-margin premises across independently sited GNSS stations and to address the demonstrated evasion routes.

\section*{Data and Code Availability}
Code, per-seed artifacts, and configuration files covering both carriers, the Tardos code, the registry and verifier, the attack suite, and the table-reproduction scripts will be released upon publication. The GNSS tensors are re-rendered from the interference recordings released by their originators~\cite{heublein_feigl_crpa}.

%% file: 09_appendix.tex
\clearpage
\appendices
\section{Tracing Construction and Analysis}
\label{app:method-detail}

The construction separates credential authentication, identity decoding, and recipient tracing. The bounds below specify their different probability spaces: identity code generation, an innocent row conditional on the recovered artifact, and a coalition channel with hidden biases. Experiments use fixed reproducible code seeds. Their success counts are distinct from these probability bounds.

\subsection{Identity Codebook and Tardos Overlay}
\label{label_method_collusion}

The identity codewords $w_i$ occupy separate column blocks $E_i$. Every block is a set of dense directions in the same $\omega$-dimensional scale vector. They do not occupy disjoint physical coordinates. An additional shared block $E_T$ carries the dispatched recipient's tracing row. The identity layer detects agreement with public credentials. It does not establish who trained the model.

\begin{definition}[Tardos overlay on a shared block]
\label{def:overlay}
Fix a design target $c\ge2$, tracing length $\nt$, and $\delta_c=1/(300c)$. Draw biases independently with density
\[
f_{\delta_c}(p)=\frac{1}{(\pi-4\arcsin\sqrt{\delta_c})\sqrt{p(1-p)}}
\]
on $[\delta_c,1-\delta_c]$. Conditional on the biases, draw all entries $X_{i,b}\sim\mathrm{Bernoulli}(p_b)$ independently. At dispatch, embed row $X_i$ in recipient $i$'s copy using the shared unit columns $E_{T,b}$. Decode $y_b=\ind\{\inner{\gamma_{\mathrm{pir}}}{E_{T,b}}>0\}$ and score
\begin{equation}
\label{eq:U-def}
S_i=\sum_{b=1}^{\nt}U(X_{i,b},y_b,p_b),\quad
U(x,y,p)=\frac{(2y-1)(x-p)}{\sqrt{p(1-p)}}.
\end{equation}
\end{definition}

The score is the symmetric Tardos score~\cite{skoric2008symmetric,tardos2008optimal}. It satisfies
\begin{equation}
\label{eq:antisym}
\begin{split}U(x,1-y,p)&=-U(x,y,p),\\ |U|&\le B:=\sqrt{(1-\delta_c)/\delta_c}.\end{split}
\end{equation}
The carrier uses $\nt$ shared columns rather than $N\nt$ client-specific columns. The full code matrix and biases are held by the tracer. Clients can estimate their own rows from their copies and may receive their rows in the insider-attack model. No secrecy from one's own recipient is needed for the innocent-row analysis.

\subsection{Dispatch Fingerprinting}
\label{label_method_dispatch}

Federated training embeds the public identity layer. The tracer subsequently fine-tunes a separate copy for each recipient on a server-held proxy pool, with
\begin{equation}
\label{eq_dispatch_hinge}
\mathcal L_i=\mathcal L_{\mathrm{task}}+\frac{\lambda_t}{\nt}
\sum_b\max\{0,\mu-(2X_{i,b}-1)\inner{\gamma}{E_{T,b}}\}.
\end{equation}
The reported weight configuration uses roughly $300$ episodes and $\lambda_t=6$. The task term preserves classifier utility during embedding. The copy is then evaluated separately from the shared model. The tracing operation does not change that shared model. Table~\ref{tab:twocarrier} reports copy costs for the evaluated configurations, not a universal cost of embedding.

The codebook and tracing overlay have different purposes. The global model contains multiple identity marks, so those marks cannot isolate a recipient. The dispatched copy contains the recipient's tracing row, allowing offline comparison with the enrollment registry. Neither layer detects poisoned updates or prevents voluntary credential sharing. These threats require separate controls, such as robust aggregation and revocation~\cite{yin2018byzantine,naor2001revocation}.

\subsection{Finite Tracing Bounds and the Residual Channel}
\label{label_method_flipchannel}

\begin{lemma}[Innocent-score moments]
\label{lem:innoc}
Under A5(a), for an innocent $i$ and any realized $(y,p)$,
\begin{equation}
\label{eq:innoc-moments}
\E[S_i\mid y,p]=0,\qquad \operatorname{Var}(S_i\mid y,p)=\nt.
\end{equation}
The individual increments are independent and bounded by $B$. Their conditional distribution and upper tails may depend on $y$.
\end{lemma}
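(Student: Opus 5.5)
The argument works increment by increment, conditioning throughout on the realized pair $(y,p)$. By A5(a), conditional on $(y,p)$ the entries $X_{i,b}$ of an innocent row are independent with $X_{i,b}\sim\mathrm{Bernoulli}(p_b)$. This is the precise content of innocent-row independence: the recovered word $y$ may depend arbitrarily on the coalition's rows and on the channel, but it carries no information about $X_i$ beyond the biases. The score $S_i=\sum_b U(X_{i,b},y_b,p_b)$ is therefore a sum of conditionally independent increments, since each $U(X_{i,b},y_b,p_b)$ is a fixed function of $X_{i,b}$ once $(y_b,p_b)$ is fixed. This already gives the independence claim.

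For the first moment, write $U(X_{i,b},y_b,p_b)=(2y_b-1)(X_{i,b}-p_b)/\sqrt{p_b(1-p_b)}$. The factor $(2y_b-1)\in\{\pm1\}$ is deterministic given $y$, and $\E[X_{i,b}-p_b\mid y,p]=0$, so each increment has conditional mean zero. Summing over $b$ gives $\E[S_i\mid y,p]=0$.

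For the second moment, use $(2y_b-1)^2=1$ and $\operatorname{Var}(X_{i,b}\mid y,p)=p_b(1-p_b)$. Each increment then has conditional variance exactly $1$. Conditional independence makes the variances add, so $\operatorname{Var}(S_i\mid y,p)=\nt$.

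For boundedness, invoke \eqref{eq:antisym}. It suffices to note that $|x-p|/\sqrt{p(1-p)}$ equals $\sqrt{p/(1-p)}$ at $x=0$ and $\sqrt{(1-p)/p}$ at $x=1$. Both quantities are at most $\sqrt{(1-\delta_c)/\delta_c}=B$ because Definition~\ref{def:overlay} restricts $p\in[\delta_c,1-\delta_c]$.

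The last sentence of the lemma is a caveat rather than a bound, and I would justify it by exhibiting the dependence. Given $(y_b,p_b)$, the increment takes the value $(2y_b-1)\sqrt{(1-p_b)/p_b}$ with probability $p_b$ and $-(2y_b-1)\sqrt{p_b/(1-p_b)}$ with probability $1-p_b$. Flipping $y_b$ reflects this law, which is skewed whenever $p_b\ne1/2$. The third moment and the moment generating function $M_b$ in \eqref{eq:mgf-chernoff} therefore depend on $y_b$, and so do the upper tails. The first two moments do not depend on $y_b$.

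The only delicate step is the first one. One must state carefully that conditioning on $y$ does not disturb the Bernoulli law of an innocent row. This holds by assumption under A5(a), not as a consequence of the construction, and the rest is direct computation.
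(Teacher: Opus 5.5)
Your proof is correct and follows essentially the same route as the paper. In both, A5(a) gives per-increment mean zero and unit second moment from the conditional Bernoulli law, conditional independence across positions makes the variances add to $\nt$, and the reflected asymmetric two-point law (the paper's $p=0.1$ example) shows that the full distribution depends on $y$; you additionally derive the bound by $B$ from $p_b\in[\delta_c,1-\delta_c]$, which the paper leaves implicit in \eqref{eq:antisym}.
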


\begin{lemma}[Flip-scaled moments]
\label{lem:coalmean}
Let $T_b=\sum_{i\in\Coll}U(X_{i,b},y_b^*,p_b)$ and let $F_b\sim\mathrm{Bernoulli}(q)$ be independent of $T_b$. Then $T_b'=(1-2F_b)T_b$ obeys
\begin{equation}
\label{eq:coal-mean}
\begin{split}
\E[T_b']&=(1-2q)\E[T_b],\\
\E[(T_b')^2]&=\E[T_b^2],\\
\operatorname{Var}(T_b')&=\operatorname{Var}(T_b)+4q(1-q)(\E[T_b])^2.
\end{split}
\end{equation}
\end{lemma}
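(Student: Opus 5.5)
Lemma~\ref{lem:coalmean} follows from a direct moment computation that uses only $F_b\in\{0,1\}$ and the independence of $F_b$ from $T_b$. Set $\sigma_b=1-2F_b\in\{-1,+1\}$. Then $\E[\sigma_b]=1-2q$ and $\sigma_b^2=1$ identically, so $(T_b')^2=T_b^2$ pointwise. The second identity in \eqref{eq:coal-mean} is therefore immediate and needs no integrability beyond $\E[T_b^2]<\infty$. That condition holds because $T_b$ is a sum of $|\Coll|$ terms each bounded by $B$ from \eqref{eq:antisym}, so $|T_b|\le |\Coll|B$.

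For the mean, I use independence and factor $\E[\sigma_bT_b]=\E[\sigma_b]\E[T_b]=(1-2q)\E[T_b]$. Boundedness of $T_b$ guarantees that every expectation here is finite. I will state this explicitly, since $y_b^*$ may depend on the whole coalition matrix. That dependence is harmless because only the joint law of $T_b$ matters, and $F_b$ is independent of it.

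The variance follows by combining the two. We have $\operatorname{Var}(T_b')=\E[T_b^2]-(1-2q)^2(\E[T_b])^2$. Writing $\E[T_b^2]=\operatorname{Var}(T_b)+(\E[T_b])^2$ turns this into $\operatorname{Var}(T_b)+\bigl(1-(1-2q)^2\bigr)(\E[T_b])^2$. The identity $1-(1-2q)^2=4q(1-q)$ then completes the computation.

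I do not expect any step to be a real obstacle. The only point requiring care is the independence hypothesis: it must be independence of $F_b$ from $T_b$ jointly, not merely zero correlation. The second-moment identity does not need independence at all, because $\sigma_b^2=1$. The mean identity needs only $\E[\sigma_bT_b]=\E[\sigma_b]\E[T_b]$. This is consistent with how the residual channel enters Theorem~\ref{thm:finite-completeness}, where flips act independently of the intended word.
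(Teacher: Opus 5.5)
Your proof is correct and is the same computation the paper gives: $\E[1-2F_b]=1-2q$, $(1-2F_b)^2=1$ so the second moment is unchanged, the mean factors by independence, and the variance comes from subtracting the squared new mean from that second moment. One small correction to your closing remark: zero correlation would in fact suffice. Every step uses only $\E[\sigma_bT_b]=\E[\sigma_b]\E[T_b]$, which is exactly $\operatorname{Cov}(F_b,T_b)=0$, so there is no need to insist on full independence.
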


Here $q$ denotes disagreement with the intended coalition output $y^*$, not with each colluder's row. On positions where colluders disagree, a noiseless marking-consistent output already differs from some rows. Consequently, the reported per-row disagreement rates do not identify a residual BSC parameter.

\begin{lemma}[A deterministic weight-carrier perturbation bound]
\label{lem:flip}
For a fixed noiseless mixture $\gamma^*=\sum_{i\in\Coll}\lambda_i\gamma^{(i)}$ and perturbation $\eta$, let $a_b=\inner{\gamma^*}{E_{T,b}}$. Any position with $|a_b|>\|\eta\|_2$ keeps its sign after adding $\eta$.
\end{lemma}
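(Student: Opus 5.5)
The statement to prove is Lemma~\ref{lem:flip}, and the plan is a one-line linearity argument followed by a strict-inequality bound. Since the decode is $y_b=\ind\{\inner{\gamma}{E_{T,b}}>0\}$, adding $\eta$ changes the projection to
\[
\inner{\gamma^*+\eta}{E_{T,b}}=a_b+\inner{\eta}{E_{T,b}}.
\]
It therefore suffices to show that the perturbation term is strictly smaller in magnitude than $|a_b|$ whenever $|a_b|>\|\eta\|_2$.

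First, Cauchy--Schwarz bounds the perturbation term. Because the overlay columns $E_{T,b}$ are unit vectors (Definition~\ref{def:overlay}),
\[
|\inner{\eta}{E_{T,b}}|\le\|\eta\|_2\|E_{T,b}\|_2=\|\eta\|_2<|a_b|.
\]

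Second, split into the two sign cases of $a_b$.
\begin{itemize}
\item If $a_b>0$, then $a_b+\inner{\eta}{E_{T,b}}\ge a_b-\|\eta\|_2>0$.
\item If $a_b<0$, then $a_b+\inner{\eta}{E_{T,b}}\le a_b+\|\eta\|_2=-(|a_b|-\|\eta\|_2)<0$.
\end{itemize}
The case $a_b=0$ is excluded by the hypothesis $|a_b|>\|\eta\|_2\ge0$. In both cases the strict sign of the projection, and hence the decoded bit, is preserved.

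The only point needing care is that the columns are exactly unit-norm. If they were unnormalized Gaussian columns, the condition would become $|a_b|>\|\eta\|_2\|E_{T,b}\|_2$. I would state this as a remark, so that the lemma applies verbatim to the deployed normalized columns. Nothing is probabilistic here: $\eta$ may be adversarial and may depend on $\gamma^*$, which is why the bound is deterministic.

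Finally, as a corollary, the number of positions that can flip is at most $\#\{b:|a_b|\le\|\eta\|_2\}$. This count can feed Theorem~\ref{thm:score-stability} with $k$ equal to that count.
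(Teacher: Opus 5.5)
Your proof is correct and matches the paper's argument: Cauchy--Schwarz with the unit overlay columns gives $|\inner{\eta}{E_{T,b}}|\le\|\eta\|_2<|a_b|$, so the strict sign of the projection, and hence the decoded bit, is preserved. Your two-case sign split and the remark on unnormalized columns spell out what the paper's one-line proof leaves implicit.
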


The unit-column Cauchy--Schwarz bound proves this statement without a Gaussian or independence assumption. Equal and opposite colluder margins may cancel, so local margins do not guarantee a positive pooled margin on every bit. Because averaging neural-network weights does not imply averaging feature responses, the feature-carrier analysis uses probe margins directly.

\begin{remark}[A certificate is an evaluated inequality]
\label{rem:deployed-vs-provable}
Theorem~\ref{thm:tardos} gives a conditional bound for each $(y,p)$ at any positive $\alpha$. The checker selects an $\alpha$ numerically and encloses the resulting exponent using interval arithmetic. A passing certificate requires its lower endpoint to exceed the upper endpoint of $\ln(N/\varepsilon)$ for the allocated tail budget $\varepsilon$. This avoids assuming that an optimizer has found the exact supremum. An uncertified threshold exceedance remains an investigative lead. A valid certificate does not prove that the innocent-row independence premise holds in a deployment.
\end{remark}

The finite completeness bound integrates over the hidden biases before maximizing a coalition's allowed output. This order is essential: a pirate knows its rows but not the biases.

\begin{theorem}[Finite completeness against row-dependent marking strategies]
\label{thm:finite-completeness}
Fix a coalition of $s\le c$ rows from Definition~\ref{def:overlay}. It may choose its full intended word as any randomized function of those rows and side information independent of the biases conditional on the rows. At unanimous positions it must output the common bit. Apply independent residual BSC flips with common rate $q$. For $k\in\{0,\ldots,s\}$ set
\[
\begin{gathered}
a_k(p)=\frac{k-sp}{\sqrt{p(1-p)}},\qquad \mathcal Y_0=\{0\},\quad \mathcal Y_s=\{1\},\\
\mathcal Y_k=\{0,1\}\quad(0<k<s).
\end{gathered}
\]
For $t>0$ define
\begin{equation}
\label{eq:finite-coal-mgf}
\begin{split}
J_s(t,q)=\sum_{k=0}^{s}\binom{s}{k}\max_{v\in\mathcal Y_k}
\int_{\delta_c}^{1-\delta_c}&f_{\delta_c}(p)p^k(1-p)^{s-k}\\[-2pt]
{}\cdot\big[(1-q)e^{-t(2v-1)a_k(p)}&+q e^{t(2v-1)a_k(p)}\big]\,dp.
\end{split}
\end{equation}
At a fixed threshold $z$, the probability of missing every colluder satisfies
\begin{equation}
\label{eq:finite-completeness}
\Prob[\max_{i\in\Coll}S_i\le z]\le
\min\{1,e^{tsz}J_s(t,q)^{\nt}\}.
\end{equation}
This holds even when the intended symbols are dependent across positions.
\end{theorem}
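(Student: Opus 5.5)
The plan is a Chernoff bound on the coalition's \emph{summed} score, with the adversary's dependence across positions removed by conditioning on the coalition matrix. The summed score decomposes position by position, and the hidden biases remain independent across positions given the rows. First, if $\max_{i\in\Coll}S_i\le z$ then $\sum_{i\in\Coll}S_i\le sz$. For any $t>0$, Markov's inequality applied to $e^{-t\sum_i S_i}$ then gives
\[
\Prob\Big[\max_{i\in\Coll}S_i\le z\Big]\le e^{tsz}\,\E\Big[e^{-t\sum_{i\in\Coll}S_i}\Big].
\]
The $\min\{1,\cdot\}$ in \eqref{eq:finite-completeness} is trivial. By \eqref{eq:U-def}, summing over colluders at position $b$ gives $\sum_{i\in\Coll}U(X_{i,b},y_b,p_b)=(2y_b-1)a_{k_b}(p_b)$, where $k_b$ is the number of ones in the coalition's column $b$. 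Hence $\sum_i S_i=\sum_b(2y_b-1)a_{k_b}(p_b)$.

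Second, we integrate out the residual channel. Write $y_b=y^*_b\oplus F_b$, with $F_b$ i.i.d.\ $\mathrm{Bernoulli}(q)$ and independent of the rows, the biases, the side information, and the strategy's internal randomness. Then $2y_b-1=(1-2F_b)(2y^*_b-1)$. Conditioning on everything except the flips, the expectation factorizes over positions, and each factor equals
\[
g(y^*_b,p_b):=(1-q)e^{-t(2y^*_b-1)a_{k_b}(p_b)}+q\,e^{t(2y^*_b-1)a_{k_b}(p_b)}.
\]
This is the step where Lemma~\ref{lem:coalmean}'s flip-independence premise is used.

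Third, and this is the main step, we condition on the coalition matrix $X_\Coll$, the side information, and the strategy's randomness. Together these determine the whole intended word $y^*$, however the symbols depend on one another across positions. By Definition~\ref{def:overlay}, the biases are a priori independent across $b$, and column $b$ depends only on $p_b$. The side information is independent of the biases given the rows. So the posterior of $(p_b)_b$ given this conditioning is a product of per-position laws with density proportional to $f_{\delta_c}(p)p^{k_b}(1-p)^{s-k_b}$. The conditional expectation is therefore
\[
\prod_b \E\big[g(y^*_b,p_b)\mid k_b\big].
\]
Each factor is at most $\max_{v\in\mathcal Y_{k_b}}\E[g(v,p_b)\mid k_b]$. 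The marking condition restricts $v$ to $\mathcal Y_0=\{0\}$ and $\mathcal Y_s=\{1\}$ at unanimous positions. The resulting bound depends on $X_\Coll$ only through the column weights $(k_b)_b$, and it no longer involves the strategy. The point to check carefully here is the order of operations: the maximum is taken \emph{after} integrating over the posterior bias. This ordering is legitimate because $y^*$ is measurable with respect to information that leaves the biases with exactly this product posterior. A maximum taken inside the integral over $p$ would be invalid and is not what $J_s$ computes.

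Finally, we average over $X_\Coll$. Marginally the columns are independent, since the biases are independent, and $\Prob[k_b=k]=\binom{s}{k}\int f_{\delta_c}(p)p^k(1-p)^{s-k}\,dp$. Multiplying this probability by the posterior expectation cancels the normalizing constant. Each position therefore contributes $\sum_k\binom{s}{k}\max_{v\in\mathcal Y_k}\int f_{\delta_c}(p)p^k(1-p)^{s-k}[\cdots]\,dp=J_s(t,q)$, and the expectation of the product is $J_s(t,q)^{\nt}$. Combining this with the first step yields \eqref{eq:finite-completeness}. The dependence of the intended symbols across positions was absorbed in the third step, which proves the final claim of the theorem.
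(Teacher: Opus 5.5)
Your proposal is correct and follows the paper's proof essentially step for step: a Chernoff bound on the summed colluder score, then conditioning on the coalition rows (with side information and strategy randomness) so the intended word is fixed while the biases keep the product posterior, then bounding each flip factor by the best allowed symbol after integrating against that posterior, and finally averaging over the i.i.d.\ column weights so the normalization cancels and $J_s(t,q)^{\nt}$ results. One correction to a side remark: placing the maximum inside the integral would not be invalid, only looser (it is the bound for a bias-aware pirate, since $\max_v\int\le\int\max_v$); what would be invalid is pulling the maximum outside the integral when $y^*$ may depend on the biases.
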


The bound can be checked for every $s\le c$ and optimized over $t$. It is finite at the implemented cutoff and specifies its own completeness level. For a compatible fixed threshold, define
\begin{equation}
\label{eq:finite-innoc-mgf}
I(a)=\int_{\delta_c}^{1-\delta_c}f_{\delta_c}(p)
\max_{v\in\{0,1\}}M(a;v,p)\,dp.
\end{equation}
Under A5(a), $N e^{-az}I(a)^{\nt}\le\varepsilon_1$ suffices for a-priori soundness. This bound averages over independently generated biases. Theorem~\ref{thm:tardos}, by comparison, conditions on the realized biases. Neither bound requires the innocent score distribution to be invariant under changes of output.

\begin{table}[t]
\centering\footnotesize
\caption{Finite guarantees for the specified residual channels, $N=10$, $\varepsilon_1=10^{-3}$, $\delta_c=1/(300c)$. The completeness bound covers every coalition size $1\le s\le c$. Displayed thresholds are rounded. The calculation uses the full-precision values in the accompanying artifact.}
\label{tab:finite-guarantees}
\begin{tabular}{ccccc}
\toprule
$c$ & $\nt$ & $q$ & $z$ & Upper bound on $\varepsilon_2$\\
\midrule
2 & 512 & 0.04 & 111.0004 & 0.0096\\
3 & 2048 & 0.15 & 214.9809 & 0.00201\\
\bottomrule
\end{tabular}
\end{table}

\textbf{Verified finite examples.} Table~\ref{tab:finite-guarantees} evaluates both inequalities using 4096 interval panels in the arcsine angle coordinate and 30-decimal-digit interval arithmetic. Numerical optimization selects candidate values of $(a,t)$. Interval integration then encloses the integrals from above to establish the displayed bounds. All $s=1,\ldots,c$ are checked. The residual rates are stipulated model parameters, not fitted neural extraction rates. The script and full-precision enclosures are supplied with the reproducibility artifacts.

\textbf{Relation to the design equation.} The common planning rule
\begin{equation}
\label{eq:design}
\nt^{\mathrm{design}}=\left\lceil d c^2(1-2q)^{-2}\ln(N/\varepsilon_1)\right\rceil,
\qquad d=\pi^2/2,
\end{equation}
is an asymptotic estimate~\cite{skoric2008symmetric}. Laarhoven--de~Weger~\cite{laarhoven2014optimal} obtain a finite constant $23.79$ jointly with threshold coefficient $8.06$ and cutoff coefficient $28.31$. Those constants jointly specify a different code configuration. Equations~\eqref{eq:finite-coal-mgf}--\eqref{eq:finite-innoc-mgf} give soundness and completeness bounds for cutoff coefficient $300$ and the thresholds used here. Alternative codes and decoders offer other tradeoffs~\cite{nuida2009improvement,meerwald2012joint,oosterwijk2013optimal,hollmann1998ipp}.

\subsection{Registry, Authentication, and Disputes}
\label{label_method_registry}

The registry binds the identity and tracing assignments before a dispute:
\begin{equation}
\label{eq_enroll}
C_i=\Com(w_i\|X_i\|A_i\|y_i;\rho_i).
\end{equation}
The tracer creates the enrollment tuple, and the judge holds its opening. Tracing needs no response from the suspect. A judge checks the opening, registered decoder setup, and score evidence. A commitment prevents substitution of a different row, but a tracer that knows the original row can embed it again. As in the evaluated protocol, provenance therefore assumes an honest tracer. A signed delivery receipt and authenticated setup would be needed to prove issuance independently. A client nonce alone does not prove delivery.

The credential proof authenticates a claimant in a dispute. Both an innocent recipient and a genuine leaker can possess a valid witness, so witness knowledge alone is not exculpatory evidence. Exculpation requires rejection of an invalid accusation, such as a mismatched enrollment opening or a failed tracing certificate. A credential-only authentication proof can omit model presence. Algorithm~\ref{alg_verify} checks both credential knowledge and presence. The implementation's registry helper checks the opening. A deployment's judge must independently reconstruct the score and setup from the evidence package.

\begin{table}[t]
\centering\footnotesize
\caption{Custody and scope of the guarantees.}
\label{tab:secrecy}
\begin{tabular}{p{.25\columnwidth}p{.25\columnwidth}p{.32\columnwidth}}
\toprule
Artifact & Custody & Consequence of disclosure\\\midrule
xLPN witness & client & credential impersonation\\
Recipient row & tracer; recipient can estimate it & own-row removal is possible\\
Full code and biases & tracer & innocent-row independence may fail\\
Projection directions & public & targeted carrier edits are possible\\
Registry opening & judge/tracer & integrity still depends on binding\\
\bottomrule
\end{tabular}
\end{table}

\subsection{Certified Decisions and Residual Erasure}
\label{ssub:adjudicate}

\begin{corollary}[Budgeted two-tail decisions]
\label{cor:secret-row}
For $K$ declared carriers and per-investigation budget $\varepsilon$, allocate $\varepsilon/(2K)$ to each tail. For each carrier, certify the positive tail using $y$ and the negative tail using $1-y$. Under A5(a), a union bound gives probability at most $\varepsilon$ that any certified decision names an innocent, across all $K$ carriers and both tails. Independence between carriers is unnecessary.
\end{corollary}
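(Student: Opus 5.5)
The plan is to reduce the corollary to Theorem~\ref{thm:tardos} applied once per (carrier, tail) pair, followed by a plain union bound over the $2K$ events. Fix one investigation with $K$ declared carriers indexed by $j$. Each carrier has its own decoded word $y^{(j)}$, biases $p^{(j)}$, and registered rows. A certified decision is one of two kinds. A positive-tail decision names $i$ when $S_i^{(j)}(y^{(j)})>z$ and the interval certificate of Remark~\ref{rem:deployed-vs-provable} passes at budget $\varepsilon/(2K)$. A negative-tail (tamper) decision names $i$ when the score computed against $1-y^{(j)}$ exceeds its threshold and that certificate also passes.

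First I will show that the negative tail is an instance of the same theorem. By the antisymmetry \eqref{eq:antisym}, the score of row $i$ against $1-y$ equals $-S_i(y)$, so a large negative score is an upper-tail event for the word $1-y$. Under A5(a), an innocent row is independent Bernoulli$(p_b)$ conditional on $(y,p)$. It is therefore also independent conditional on $(1-y,p)$, since $1-y$ is a deterministic function of $y$. Theorem~\ref{thm:tardos} then applies verbatim to $1-y$, with the moment generating function \eqref{eq:mgf-chernoff} evaluated at $2(1-y_b)-1$. A passing certificate means the certified lower endpoint of $E_\alpha$ exceeds the upper endpoint of $\ln(2KN/\varepsilon)$. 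This gives $\Prob[\exists\text{ innocent } i \text{ named on this tail}\mid y^{(j)},p^{(j)}]\le \varepsilon/(2K)$.

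The same statement holds for the positive tail of every carrier. Let $A_{j,\pm}$ denote the event that tail $\pm$ of carrier $j$ issues a certified decision naming an innocent. The event of interest is the union of the $A_{j,\pm}$, and subadditivity gives a total probability of at most $2K\cdot\varepsilon/(2K)=\varepsilon$. Subadditivity holds for arbitrarily dependent events, so no independence between carriers is needed. The only requirement is that each marginal bound holds under A5(a) for that carrier's own decode.

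The step I expect to be the main obstacle is the conditioning. The certificate is computed from the realized $(y,p)$, so the tail budget must be read as a conditional bound that holds for every realization. Averaging then yields the unconditional bound. A further point is that the decision to issue a certificate depends only on $(y,p)$ and not on the innocent row. Because of this, the event "certified and $S_i>z$" is contained in an event whose conditional probability the theorem bounds. I will state explicitly that the pair $(\alpha,z)$ and the budget split are fixed functions of $(y,p)$, declared before any innocent row is consulted. If a single investigation used data-dependent thresholds chosen after inspecting rows, this containment would fail.
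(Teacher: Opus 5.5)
Your proposal is correct and follows the paper's proof. Antisymmetry turns the negative tail into an upper-tail certificate for the complemented decode $1-y$, each passing check bounds its allocated tail event conditionally on the fixed decode (a failed check issues no decision), and a union bound over the $2K$ events gives $\varepsilon$ without any independence between carriers. Your explicit argument that the certificate depends only on $(y,p)$, so the conditional bound survives averaging, usefully spells out the paper's terse \emph{for each fixed decode} step.
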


\begin{proposition}[Certificate-gated adjudication]
\label{prop:adjudicate}
Use a candidate threshold $z=\sqrt{2\nt\ln(2KN/\varepsilon)}$ on each tail. Return \textsc{certified-attribute} if a positive score exceeds $z$ and its tail certificate passes. Otherwise, return \textsc{certified-tamper} for a negative score below $-z$ with a passing reflected certificate. A threshold exceedance without its certificate is an \textsc{uncertified-lead}. If neither threshold is exceeded, return \textsc{no-certified-evidence}. Positive decisions take precedence, with the largest positive score selected. Negative decisions select the smallest score. The probability bound is that of Corollary~\ref{cor:secret-row}.
\end{proposition}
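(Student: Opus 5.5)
The claim to establish is that the decision procedure in Proposition~\ref{prop:adjudicate} inherits the bound of Corollary~\ref{cor:secret-row}: with probability at most $\varepsilon$, any \textsc{certified-attribute} or \textsc{certified-tamper} output names an innocent. The plan is a reduction. First, show that every event in which the rule issues a certified decision against an innocent $i$ is contained in an event already controlled by the conditional bound of Theorem~\ref{thm:tardos}, evaluated at the tail budget $\varepsilon/(2K)$. Then sum over the $2K$ (carrier, tail) pairs.

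\textbf{Positive tail.} Fix a carrier and condition on the decoded word $y$ and the biases $p$. A \textsc{certified-attribute} output naming $i$ requires two things: $S_i>z$, and a passing certificate. By Remark~\ref{rem:deployed-vs-provable}, a passing certificate at the evaluated $\alpha$ means $E_\alpha(y,p;z)>\ln(2KN/\varepsilon)$. Theorem~\ref{thm:tardos} then gives
\[
\Prob[\exists\text{ innocent }i:S_i>z\mid y,p]\le N e^{-E_\alpha(y,p;z)}<\varepsilon/(2K).
\]
The certificate is a deterministic function of $(y,p)$. So the event ``certificate passes and some innocent exceeds $z$'' has conditional probability either $0$ (certificate fails) or below $\varepsilon/(2K)$. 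Integrating over $(y,p)$ keeps the bound $\varepsilon/(2K)$ unconditionally.

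The precedence rules do not enlarge the bad event. Selecting the largest positive score, or outputting \textsc{certified-tamper} only when no positive certified decision exists, names a particular innocent only if that innocent's score crossed the relevant threshold with a passing certificate. The bad event is therefore a subset of the union over innocents of these exceedance events.

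\textbf{Negative tail.} Use the antisymmetry \eqref{eq:antisym}: $U(x,1-y,p)=-U(x,y,p)$. Hence $S_i(y)<-z$ if and only if $S_i(1-y)>z$. The reflected certificate is exactly the positive-tail certificate evaluated at $1-y$. A5(a) makes innocent rows independent Bernoulli$(p_b)$ given $(y,p)$, so they are equally independent given $(1-y,p)$, because $1-y$ is a deterministic function of $y$. Theorem~\ref{thm:tardos} therefore applies verbatim at the reflected word, and this tail is also bounded by $\varepsilon/(2K)$.

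\textbf{Union bound.} Summing over both tails and all $K$ carriers gives $2K\cdot\varepsilon/(2K)=\varepsilon$. No independence between carriers is needed. The two tails on one carrier share $(y,p)$, but the union bound does not require their events to be disjoint or independent.

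\textbf{Main obstacle.} The delicate step is justifying that conditioning on $(y,p)$ is legitimate when the candidate threshold $z=\sqrt{2\nt\ln(2KN/\varepsilon)}$ and the numerically chosen $\alpha$ are themselves selected after seeing the data. For $z$ this is harmless: it is fixed in advance. For $\alpha$, Theorem~\ref{thm:tardos} holds for every $\alpha>0$ simultaneously given $(y,p)$, so a data-dependent $\alpha$ that is a function of $(y,p)$ alone, and not of innocent rows, preserves the bound. I would state this measurability point explicitly and require that the optimizer see only $(y,p)$ and public data.

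Interval enclosures then ensure that the strict inequality on the exponent is a true inequality, not a floating-point artifact.
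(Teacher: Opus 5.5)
Your proposal is correct and follows the paper's own argument: antisymmetry reduces the negative tail to the positive-tail certificate evaluated at $1-y$, a passing per-decode check bounds each allocated tail event by $\varepsilon/(2K)$ while a failing check blocks any certified decision, precedence can only remove decisions, and a union over the $2K$ (carrier, tail) events gives $\varepsilon$ without carrier independence. Your explicit requirement that the numerically chosen $\alpha$ depend only on $(y,p)$ and public data, and not on innocent rows, is a useful point that the paper leaves implicit.
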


The lead carries no calibrated accusation guarantee. No-certified-evidence does not establish that a copy was never leaked. Repeated investigations need a separately allocated total budget. Adaptive feedback also requires maintaining the conditional independence premise. The attack sweeps measure threshold exceedances. The certificate-based evaluation additionally requires the corresponding tail bound to pass.

\begin{lemma}[Perfect negation reverses the score]
\label{lem:tamper-feature}
For $Y=1-X_i$, $S_i=-\sum_b |X_{i,b}-p_b|/\sqrt{p_b(1-p_b)}$. Its expectation over row $i$, conditional on $p$, is $-2\sum_b\sqrt{p_b(1-p_b)}$. A negative verdict still requires its threshold and certificate.
\end{lemma}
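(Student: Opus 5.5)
Under $Y=1-X_i$ we have $y_b=1-X_{i,b}$, so $2y_b-1=1-2X_{i,b}$. The plan is to substitute this into the score \eqref{eq:U-def} term by term, as a purely deterministic identity. For $x\in\{0,1\}$ I will check both cases of $(1-2x)(x-p)$. When $x=0$ it equals $-p=-|x-p|$, and when $x=1$ it equals $-(1-p)=-|x-p|$. Hence $U(X_{i,b},1-X_{i,b},p_b)=-|X_{i,b}-p_b|/\sqrt{p_b(1-p_b)}$, and summing over $b$ gives the closed form for $S_i$. No assumption on the channel or the rows is needed for this step.

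For the expectation I will condition on $p$ and use Definition~\ref{def:overlay}, under which $X_{i,b}\sim\mathrm{Bernoulli}(p_b)$. The step then rests on a single computation:
\[
\E|X_{i,b}-p_b| = p_b(1-p_b)+(1-p_b)p_b = 2p_b(1-p_b).
\]
Dividing by $\sqrt{p_b(1-p_b)}$ gives $2\sqrt{p_b(1-p_b)}$, and linearity of expectation over the $\nt$ positions yields $-2\sum_b\sqrt{p_b(1-p_b)}$. Only the marginal law of each entry is used, so dependence across positions does not matter here.

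The final sentence of the statement is interpretive rather than a claim to prove. I will note that the negative score is only evidence, and that a \textsc{certified-tamper} decision in Proposition~\ref{prop:adjudicate} still requires $S_i<-z$ together with a passing reflected certificate. By \eqref{eq:antisym}, that reflected certificate is the positive-tail certificate evaluated at the decoded word $1-y$.

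No step is a real obstacle. The only care point is keeping the two notions separate: the deterministic identity holds for every realized row, while the stated expectation is taken over the row given $p$. I will also remark that, by Lemma~\ref{lem:innoc}, the mean $-2\sum_b\sqrt{p_b(1-p_b)}$ is of order $\nt$, whereas innocent scores have standard deviation $\sqrt{\nt}$. This is why negation is typically detectable, although that separation is not part of the claim.
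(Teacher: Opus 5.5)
Your proposal is correct and follows the paper's proof. The paper likewise substitutes $y_b=1-X_{i,b}$ into \eqref{eq:U-def} to get the negative absolute increment, uses $\E|X_{i,b}-p_b|=2p_b(1-p_b)$ for the mean, and handles the reflected certificate via the antisymmetry \eqref{eq:antisym}; your case check and the remark that only marginals are needed spell out what the paper compresses into one line.
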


The experiments show different erasure responses on the two carriers. Negating the weight mark can leave a positive score below its threshold, whereas negating feature marks produces large negative scores. Re-randomizing feature signs can bring their scores near zero. These are measured attack outcomes, not a guarantee that every negative-training objective reaches perfect inversion.

\textbf{The escape window, geometrically.} In the hybrid sweep, the weight score determines the observed decision because the feature score remains inside its non-triggering interval. At the matched $300$-episode split, $15/20$ weight scores remain above the accusation threshold. At $600$ episodes, all twenty lie between the positive and negative thresholds. At $1200$, all twenty cross the negative threshold. Moderate negation moves the score into this interval, while stronger negation can move it through the interval. The corresponding score ranges are $82$--$155$, $-6$--$73$, and $-236$--$-163$, respectively. No innocent is flagged in the sweep. The attacker does not observe the biases or exact score in this experiment, but may estimate useful attack budgets by other means. The sweep does not establish resistance to adaptive budget selection.

\subsection{The Feature-Space Carrier}
\label{label_method_feature}

\begin{definition}[Feature-space margin carrier]
\label{def:feature-carrier}
For a backbone $f_\theta:\mathcal X\to\mathbb R^D$, fixed probes $p_b$, and unit directions $v_b\in\mathbb S^{D-1}$, define
\begin{equation}
\label{eq_feature_bit}
m_b(\theta)=\inner{v_b}{f_\theta(p_b)},\qquad y_b=\ind\{m_b(\theta)>0\}.
\end{equation}
Embed row $X_i$ by minimizing
\begin{equation}
\label{eq_feature_hinge}
\mathcal L_i^f=\mathcal L_{\mathrm{task}}+\frac{\lambda_t}{\nt}
\sum_b\max\{0,\mu-(2X_{i,b}-1)m_b(\theta)\}.
\end{equation}
\end{definition}

Here $D=512$, smaller than $\omega=4800$. The ability to fit many probe constraints comes from training a nonlinear function on different inputs, not from feature width exceeding the scale dimension. The experiments demonstrate that the carrier can fit $2048$ or $4096$ probes.

\begin{remark}[Carrier-independent scoring]
\label{rem:instantiation}
The conditional soundness bound uses only the decoded bits, biases, and innocent-row independence, so it applies to both readouts. Completeness additionally depends on the attack and channel. Inverting the asymptotic design equation gives the planning index
\begin{equation}
\label{eq:c-from-carrier}
c_{\mathrm{plan}}=(1-2q)\sqrt{\nt/(d\ln(N/\varepsilon_1))}.
\end{equation}
It is not a finite certified coalition size. Theorem~\ref{thm:finite-completeness} provides a finite channel analysis instead.
\end{remark}

Feature matching optimizes
\begin{equation}
\label{eq:kd-obj}
\min_{\theta_S}\E_{x\sim\mathcal D}\|f_S(x)-f_T(x)\|_2^2.
\end{equation}
Its relation to a finite probe set requires control on that set, not merely a small training loss. Proposition~\ref{prop:kd-robust} is stated directly for the probe residual and counts all incorrect or insufficient-margin teacher probes.

A stronger result connects feature stability directly to the weighted tracing score, without independent flips.
\begin{theorem}[Tracing-score stability under arbitrary probe errors]
\label{thm:score-stability}
For client $i$, define $a_{i,b}=2|X_{i,b}-p_b|/\sqrt{p_b(1-p_b)}$. If teacher and student decodes differ on a set $J$, then
\begin{equation}
\label{eq:score-stability}
|S_i^S-S_i^T|\le\sum_{b\in J}a_{i,b}.
\end{equation}
If at most $k$ bits differ, the right side is bounded by the sum $A_i(k)$ of the $k$ largest $a_{i,b}$. Consequently $S_i^T-A_i(k)>z$ guarantees that client $i$ still exceeds threshold $z$, regardless of error dependence. For feature readouts, let $r_b=\|f_S(p_b)-f_T(p_b)\|_2$ and $J_\mu=\{b:|m_b(\theta_T)|<\mu\}$. One may take
\begin{equation}
\label{eq:score-k}
k=\min\{\nt,\ |J_\mu|+\lfloor\nt\eKD^2/\mu^2\rfloor\}.
\end{equation}
Alternatively the directly measured set $\{b:r_b\ge|m_b(\theta_T)|\}$ supplies a sharper bound.
\end{theorem}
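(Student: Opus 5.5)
The plan has three parts: a per-position comparison of the symmetric score, a count of the flipped positions for the feature readout, and a remark on the sharper measured set.

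\textbf{Per-position difference.} Fix client $i$ and the realized biases. Both scores are sums over positions of $U(X_{i,b},y_b,p_b)$ from Definition~\ref{def:overlay}, so they differ only on $J$. By the antisymmetry \eqref{eq:antisym}, a position $b\in J$ with $y_b^S=1-y_b^T$ contributes
\[
U(X_{i,b},y^S_b,p_b)-U(X_{i,b},y^T_b,p_b)=\mp\frac{2(X_{i,b}-p_b)}{\sqrt{p_b(1-p_b)}},
\]
whose absolute value is $a_{i,b}$. The triangle inequality over $b\in J$ then gives \eqref{eq:score-stability}. This step uses no probabilistic assumption, so dependence among errors is irrelevant.

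\textbf{Worst-case $k$ errors and threshold survival.} If $|J|\le k$, then $\sum_{b\in J}a_{i,b}$ is at most the sum of the $k$ largest $a_{i,b}$, because every $a_{i,b}\ge0$. That sum is $A_i(k)$. Hence $S_i^S\ge S_i^T-A_i(k)>z$, which proves the threshold-survival claim.

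\textbf{Counting flipped probes.} This is the main step: the feature readout has to be turned into an explicit $k$. By Cauchy--Schwarz with unit $v_b$,
\[
|m_b(\theta_S)-m_b(\theta_T)|\le r_b .
\]
So a decode can change only if $r_b\ge|m_b(\theta_T)|$. This gives $J\subseteq\{b:r_b\ge|m_b(\theta_T)|\}$ directly, which is the alternative sharper bound stated at the end of the theorem. To get \eqref{eq:score-k}, split $J$ into two parts:
\begin{itemize}
\item positions in $J_\mu$, where $|m_b(\theta_T)|<\mu$;
\item positions outside $J_\mu$, where $|m_b(\theta_T)|\ge\mu$.
\end{itemize}
A position of the second kind is in $J$ only if $r_b\ge\mu$. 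By Markov's inequality on the squared residuals,
\[
\#\{b:r_b\ge\mu\}\le\frac{\sum_b r_b^2}{\mu^2}=\frac{\nt\eKD^2}{\mu^2}.
\]
Since the left side is an integer, we may take the floor. Adding $|J_\mu|$ and capping at $\nt$ gives \eqref{eq:score-k}.

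\textbf{Points to handle carefully.} The decoding rule uses a strict inequality $>0$. A sign change therefore needs $m_b(\theta_S)\le0<m_b(\theta_T)$ or the reverse, and either case forces $r_b\ge|m_b(\theta_T)|$. This is why the inequalities on $r_b$ are non-strict. The Markov bound is this theorem's analogue of the $\eKD^2/\mu^2$ term in Proposition~\ref{prop:kd-robust}.
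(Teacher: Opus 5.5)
Your proposal is correct and follows essentially the same route as the paper's proof: antisymmetry makes each changed position move $S_i$ by exactly $a_{i,b}$, nonnegativity of the weights gives $A_i(k)$, and the Cauchy--Schwarz margin comparison combined with the counting inequality $\#\{b:r_b\ge\mu\}\,\mu^2\le\sum_b r_b^2$ yields \eqref{eq:score-k}. Your explicit check that the strict decoding rule forces $r_b\ge|m_b(\theta_T)|$ on every changed position is a small addition that the paper leaves implicit.
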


This yields a sufficient tracing-survival certificate for correlated or targeted errors. It certifies threshold survival, while Theorem~\ref{thm:tardos} separately certifies the false-accusation risk of the resulting decode. When the bound is too loose, direct score evaluation is still possible. Function-only distillation can rotate feature coordinates, so it need not obey a small probe-residual condition. The observed $20/20$ feature-matching and $19/20$ cross-architecture outcomes, and $0/20$ function-only outcomes, remain empirical evidence rather than substitutes for a margin check.

\textbf{Carrier tradeoff.} The weight carrier is inexpensive to read and has a reported episodic copy cost of about $4.5$ points in the $300$-episode, $\lambda_t=6$ configuration, but distillation removes it. The feature carrier costs $4.8$ and $6.1$ accuracy points per copy on GNSS and CIFAR-10, and supports the longer tracing rows used in the experiments. The design targets are two and three colluders. Measured tracing extends beyond those targets. The score-stability theorem explains a sufficient mechanism for survival without claiming that feature matching always preserves the mark.

\section{Assumptions and Supporting Lemmas}
\label{app:theory-detail}
\subsection{Assumptions}
\label{sub:assumptions}

\begin{assumption}[Unconditioned codebook geometry]
\label{ass:geom}
Identity columns are independent uniform directions on $\mathbb S^{\omega-1}$. The implementation normalizes one Gaussian draw without a global rejection test.
\end{assumption}
\begin{assumption}[Achieved local margins]
\label{ass:margin}
At the round analyzed, each participating client satisfies $t_{i,b}\inner{\gamma^{(i)}}{E_{i,b}}\ge\mu$ for every bit. A hinge objective alone does not establish this premise, so the achieved margins must be checked or treated as idealized.
\end{assumption}
\begin{assumption}[Conditional signed-direction model]
\label{ass:indep}
Conditional on $(u_i,t_i)$, the signed directions $\{t_{i,b}E_{i,b}\}_b$ remain independent uniform sphere directions. This is an idealized single-round model. Repeated federated training is not asserted to satisfy it.
\end{assumption}
\begin{assumption}[Identity codeword model]
\label{ass:codewords}
Identity codewords are independent uniform binary vectors. They need not be independent of the trained decoded aggregate.
\end{assumption}
\begin{assumption}[Tracing probability models]
\label{ass:leak}
(a) Conditional on the recovered word and biases, each innocent row retains independent $\mathrm{Bernoulli}(p_b)$ entries. This holds when the artifact and its selection expose no information about that row beyond the biases. (b) For the finite completeness theorem only, the coalition's side information is independent of biases conditional on its own rows, its intended word satisfies marking, and residual flips are independent of all rows and biases and mutually independent with common rate $q$. Part (a) does not require part (b).
\end{assumption}
\begin{assumption}[Comparable feature coordinates]
\label{ass:kd}
Teacher and student features use the same coordinates and unit projection vectors. The residual is evaluated on the actual fixed probes. Inferring that residual from a population or training objective needs a separate generalization argument.
\end{assumption}

A3 is not derived from the fact that clients train on separate blocks: earlier aggregates carry other clients' marks. Similarly, uniform model averaging does not prove the BSC premise in A5(b). Innocent-score soundness, finite BSC completeness, and deterministic score stability have different assumptions and should be applied separately.

\subsection{Projection and Geometry}
\label{sub:sphere}
The match score is
\begin{equation}
\label{eq:match}
M(i,j)=1-\HD(\hat h_i,w_j)/n.
\end{equation}
The projection decomposition is exact:
\begin{equation}
\label{eq:split}
\inner{\gagg}{E_{i,b}}=\lambda_i\inner{\gamma^{(i)}}{E_{i,b}}+\inner{u_i}{E_{i,b}}.
\end{equation}
For $\gamma^{(k)}=\gzero+\Delta_k$, the norm inequality is
\begin{equation}
\label{eq:unorm}
\|u_i\|\le(1-\lambda_i)G+\sum_{k\ne i}\lambda_k\|\Delta_k\|.
\end{equation}
The quadrature approximation \eqref{eq:snr-scaling} drops baseline-update and update-update cross terms. It is not implied by A1 or by a Gram norm below $2\sqrt\rho+\rho$.

\begin{lemma}[Rank and random-code separation]
\label{lem:capacity}
If $N_cn>\omega$, the identity Gram matrix is singular and $\|E^\top E-I\|_{\mathrm{op}}\ge1$. This does not preclude useful sign decoding. Under A4, for $0<\zeta<1/2$,
\begin{equation}
\label{eq:near-collision}
\Prob[\exists i<j:\HD(w_i,w_j)\le(1/2-\zeta)n]
\le\binom{N_c}{2}e^{-2\zeta^2n}.
\end{equation}
\end{lemma}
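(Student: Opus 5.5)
The lemma has two independent parts, and I would prove them separately: a linear-algebra rank statement, and a Hoeffding plus union-bound estimate for random binary codewords.

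\textbf{Rank and operator-norm part.} The matrix $E\in\mathbb R^{\omega\times N_cn}$ has rank at most $\omega$. So when $N_cn>\omega$, the Gram matrix $E^\top E\in\mathbb R^{N_cn\times N_cn}$ has rank at most $\omega<N_cn$ and is singular. Pick a unit vector $x$ in its kernel. Then $(E^\top E-I)x=-x$, which gives $\|E^\top E-I\|_{\mathrm{op}}\ge\|x\|=1$. The claim that singularity does not preclude useful sign decoding needs no proof beyond a remark. Decoding only uses the signs of projections $\inner{\gagg}{E_{i,b}}$, and Theorem~\ref{thm:perbit} bounds per-bit recovery through the signal-to-cross-vector ratio $s_i$ without any invertibility of $E^\top E$. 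Near-orthogonality of the columns is therefore not a premise of that result.

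\textbf{Near-collision bound.} Fix a pair $i<j$. Under Assumption~\ref{ass:codewords}, the codewords $w_i,w_j$ are independent and uniform on $\{0,1\}^n$. Hence the XOR $w_i\oplus w_j$ is uniform, and $\HD(w_i,w_j)=\sum_{b=1}^n Z_b$ with $Z_b$ i.i.d.\ Bernoulli$(1/2)$. The event $\HD\le(1/2-\zeta)n$ is the lower-tail deviation $\sum_b Z_b-n/2\le-\zeta n$ of a sum of $n$ independent $[0,1]$-valued variables. Hoeffding's inequality bounds its probability by $e^{-2(\zeta n)^2/n}=e^{-2\zeta^2 n}$. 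A union bound over the $\binom{N_c}{2}$ unordered pairs then yields \eqref{eq:near-collision}.

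\textbf{Main point of care.} The only step needing attention is justifying pairwise uniformity of the XOR. Only independence and uniformity of the two codewords in each pair is used, not joint independence of all of them or independence from the trained aggregate. This is consistent with the remark after Assumption~\ref{ass:codewords}. The rest is routine. The bound is non-vacuous only for $0<\zeta<1/2$, which is exactly the stated range.
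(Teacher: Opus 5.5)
Your proof is correct and matches the paper's argument: a unit kernel vector of $E^\top E$ gives the eigenvalue $-1$ of $E^\top E-I$. Each pair distance is $\mathrm{Binomial}(n,1/2)$ under A4, so the lower Hoeffding tail plus a union over the $\binom{N_c}{2}$ pairs gives \eqref{eq:near-collision}.
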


For instance, $E=[I\ I]$ has load two and Gram deviation one, and passes the inequality with right side $2\sqrt\rho+\rho$. Thus $N^*=\omega/n=37.5$ is a dimensional reference, not an impossibility theorem. For unit-normalized Gaussian columns the limiting nonzero singular-value edges use $1\pm\sqrt\rho$, without another division by $\sqrt\omega$~\cite{vershynin2018hdp}. The empirical failure point also depends on codeword separation and decoding errors.

\section{Proofs}
\label{app:proofs}
\subsection{Identity Recovery and Attribution}
\label{sub:proofs-survival}

\begin{lemma}[Sphere projection]
\label{lem:sphere}
For $X$ uniform on $\mathbb S^{\omega-1}$ and fixed $v\ne0$, the normalized projection $\sqrt\omega\inner{X}{v}/\|v\|$ is symmetric with variance one and, for $\omega>4$,
\begin{equation}
\label{eq:be}
\sup_a|F_\omega(a)-\Phi(a)|\le8/(\omega-4).
\end{equation}
\end{lemma}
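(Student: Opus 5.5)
By rotation invariance we may take $v/\|v\|=e_1$, so the normalized projection is $\sqrt\omega X_1$ with $X$ uniform on $\mathbb S^{\omega-1}$.

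\textbf{Symmetry and variance.} Symmetry holds because $X\mapsto -X$ preserves the uniform law. For the variance, exchangeability of the coordinates together with $\sum_j X_j^2=1$ gives $\E[X_1^2]=1/\omega$. Hence $\sqrt\omega X_1$ has variance one.

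\textbf{Representation.} For the Kolmogorov bound we write $X_1=G_1/\|G\|$ with $G\sim\mathcal N(0,I_\omega)$. The density of $X_1$ is explicit: it is proportional to $(1-x^2)^{(\omega-3)/2}$ on $[-1,1]$. Therefore the density of $Y=\sqrt\omega X_1$ is
\[
g_\omega(y)=c_\omega\,(1-y^2/\omega)^{(\omega-3)/2}\ind\{|y|<\sqrt\omega\}.
\]

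\textbf{Main route: total variation.} I would bound the CDF difference by the total-variation distance, $\sup_a|F_\omega(a)-\Phi(a)|\le\frac12\int|g_\omega-\varphi|$. Known quantitative results on the Poincar\'e--Borel phenomenon (Diaconis--Freedman: the first $k$ coordinates of a uniform sphere point scaled by $\sqrt\omega$ are within total variation $2(k+3)/(\omega-k-3)$ of Gaussian) give, for $k=1$, a bound of $8/(\omega-4)$. This is exactly the stated constant. So the cleanest plan is to invoke Diaconis--Freedman with $k=1$ and note that the Kolmogorov distance is at most the total-variation distance.

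\textbf{Self-contained fallback.} If a self-contained argument is wanted, compare $\log g_\omega(y)$ with $\log\varphi(y)$ using $\log(1-u)\ge -u-u^2/(2(1-u))$, together with Gamma-function ratio bounds for $c_\omega/(2\pi)^{-1/2}$. This yields $g_\omega\ge(1-O(1/\omega))\varphi$ on a central region. Then $\int(\varphi-g_\omega)_+$, which equals half the $L^1$ distance, is controlled by that pointwise lower bound plus the Gaussian tail mass beyond the region. Tracking constants to land at $8/(\omega-4)$ is the main obstacle, and the reason I would cite the known bound rather than re-derive it. The requirement $\omega>4$ matches the denominator $\omega-k-3$ with $k=1$.
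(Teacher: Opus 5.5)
Your proposal is correct and follows the paper's proof essentially step for step: you rotate to the first coordinate, get symmetry by reflection, obtain variance $1/\omega$ from the equal expectations of the squared coordinates, and apply the Diaconis--Freedman total-variation bound $2(k+3)/(\omega-k-3)$ at $k=1$. Your explicit remark that the Kolmogorov distance is dominated by total variation, under either normalization convention for the latter, fills a step the paper leaves implicit, so the self-contained fallback is not needed.
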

\begin{proof}
Rotation reduces the projection to the first coordinate. Symmetry follows by reflection. The variance is $1/\omega$ before normalization because the squared coordinates sum to one and have equal expectations. The first-coordinate normal approximation follows from the finite-sphere bound of Diaconis--Freedman~\cite{diaconis1987finetti}. For $v=0$ the unnormalized projection is identically zero and normalization is unnecessary.
\end{proof}

\begin{proof}[Proof of Theorem~\ref{thm:perbit}]
A2 and \eqref{eq:split} imply \eqref{eq:identity-proxy-main}. Conditional on $(u_i,t_i)$, A3 makes the signed noise projection a fixed vector projected onto an independent random unit direction. Its symmetry gives proxy success probability $F_\omega(s_i)$. The true success event contains the proxy event, so it has at least that probability. Different proxy bits use independent signed directions. Lemma~\ref{lem:sphere} supplies the normal approximation. If $u_i=0$, the local positive margin survives multiplication by $\lambda_i>0$.
\end{proof}

\begin{proof}[Proof of Theorem~\ref{thm:attr}]
For any rival $j$, the triangle inequality gives $\HD(\hat h_i,w_j)\ge\HD(w_i,w_j)-d_i$. Thus $2d_i<d_{\min}$ makes every rival farther away than the true codeword. For the probabilistic statement, an error requires either $\max_i d_i>rn$ or a pair of codewords at distance at most $2rn$. Under A4, each pair distance is $\mathrm{Binomial}(n,1/2)$, so Hoeffding's inequality bounds the latter event by the second term of \eqref{eq:attr-thresh}. The union bound requires no independence between decoding errors and codewords.
\end{proof}

\begin{proof}[Proof of Corollary~\ref{cor:envelope}]
The true error count is bounded above by the number of failed proxy bits. Conditional on $(u_i,t_i)$, these bits are independent with mean at most $1-p$. Hoeffding gives $\Prob[d_i>(1-p+\xi)n]\le e^{-2n\xi^2}$ after removing the conditioning. Union over clients and apply Theorem~\ref{thm:attr}. This gives \eqref{eq:master}.
\end{proof}

\begin{proof}[Proof of Lemma~\ref{lem:capacity}]
More columns than rows imply a zero eigenvalue of $E^\top E$, hence an eigenvalue $-1$ of $E^\top E-I$. This proves only the rank assertion. Each independent codeword-pair distance is binomial. Applying the lower Hoeffding tail and taking a union over pairs proves \eqref{eq:near-collision}.
\end{proof}

\begin{proposition}[Exact presence calibration]
\label{prop:presence-honest}
Against an independent uniform binary codeword, a Hamming-radius-$t$ test has false-accept probability
\begin{equation}
\label{eq:presence-band}
P_{\mathrm{FA}}(n,t)=2^{-n}\sum_{j=0}^t\binom nj.
\end{equation}
At $n=128$, the largest radius satisfying $P_{\mathrm{FA}}\le2^{-128}$ is zero. Under an additional i.i.d. true-bit model with accuracy $p$, completeness is
\begin{equation}
\label{eq:presence-gap}
P_{\mathrm{accept}}=\sum_{j=0}^t\binom nj(1-p)^j p^{n-j}.
\end{equation}
For completeness at least $.95$, exact enumeration gives minimum lengths $202$ at $p=.95$ ($t=15$) and $233$ at $p=.93$ ($t=23$).
\end{proposition}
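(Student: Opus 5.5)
The false-accept formula follows by counting. Let $w$ be uniform on $\{0,1\}^n$ and independent of the extracted word $\hat h$. Conditional on $\hat h$, the vector $\hat h\oplus w$ is again uniform. Hence $\HD(\hat h,w)\sim\mathrm{Binomial}(n,1/2)$, and $\Prob[\HD\le t]=2^{-n}\sum_{j\le t}\binom nj$. Averaging over $\hat h$ leaves this value unchanged, so independence from the codeword is the only premise needed; no assumption about the extractor is required. For the radius claim at $n=128$, take $t=0$. This gives exactly $2^{-128}$. Any $t\ge1$ adds at least $\binom{128}{1}2^{-128}>0$, so the bound fails. Monotonicity of the partial sum in $t$ then shows that zero is the largest admissible radius.

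For completeness, assume the true-bit indicators are i.i.d.\ with success probability $p$. The number of errors is then $\mathrm{Binomial}(n,1-p)$, and acceptance is the event that this count is at most $t$. Summing the binomial pmf gives \eqref{eq:presence-gap} directly.

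The minimum-length claims need exact enumeration. Soundness stays at the $2^{-128}$ level, so for each $n$ the admissible radius is $t^*(n)=\max\{t:\sum_{j\le t}\binom nj\le2^{n-128}\}$. I would compute $t^*(n)$ with exact big-integer arithmetic, evaluate $P_{\mathrm{accept}}(n,t^*(n))$ exactly as a rational for $p\in\{.95,.93\}$, and scan $n$ upward from $128$. The reported values are the first $n$ with $P_{\mathrm{accept}}\ge.95$: $n=202$ with $t=15$, and $n=233$ with $t=23$.

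The main obstacle is a subtlety in this scan. $P_{\mathrm{accept}}$ is not monotone in $n$, because $t^*$ jumps in steps. So "minimum length" has to mean the first $n$ in the scan that succeeds, and every smaller $n$ must be checked, not just the neighbouring values. I would also confirm at the boundary that soundness holds with non-strict inequality and that no floating-point rounding affects the outcome, which is why the evaluation is done in exact rational arithmetic. Finally, the completeness part relies only on the stated i.i.d.\ model, not on Theorem~\ref{thm:perbit}, and I would say so explicitly.
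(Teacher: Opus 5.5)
Your proposal is correct and follows the paper's proof: it counts the Hamming ball (one vector at radius zero, already $129$ at radius one for $n=128$), reads completeness off the binomial error count under the i.i.d.\ model, and obtains $202$ and $233$ by evaluating the two binomial tails together at the largest radius admissible under $2^{-128}$. Your remarks on scanning every $n$, needed because $P_{\mathrm{accept}}(n,t^*(n))$ is sawtooth in $n$, and on using exact arithmetic spell out what the paper's ``exact enumeration'' leaves implicit; the stated values are consistent with that scan, since $t^*$ first reaches $15$ at $n=202$ and $23$ at $n=233$.
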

\begin{proof}
Count the binary vectors in the Hamming ball for the false-accept law. Under the separate i.i.d. error model the true Hamming distance is binomial. At $n=128$, radius zero contains one vector and radius one contains $129$, proving the exact calibration. Evaluating the two binomial tails jointly gives the stated example lengths. At $n=128$, perfect extraction has completeness one, while $p=.9999$ gives $.9873$. Dependent decoded bits require their own completeness analysis. The false-accept calculation concerns an independent codeword, not deliberate copying of a public mark.
\end{proof}

\subsection{Tracing and Feature-Stability Proofs}
\label{app:tardos-proofs}
\begin{proof}[Proof of Lemma~\ref{lem:innoc} and Theorem~\ref{thm:tardos}]
For $X\sim\mathrm{Bernoulli}(p)$ independent of the decoded bit, $\E[X-p]=0$ and $\E[(X-p)^2]=p(1-p)$. Substitution into \eqref{eq:U-def} gives zero mean and unit second moment per increment. A5(a) gives independence across positions, so the score variance is $\nt$. The two-point MGF is exactly \eqref{eq:mgf-chernoff}. Markov's inequality applied to $e^{\alpha S_i}$ and a union over innocent rows prove \eqref{eq:tardos-sound}. For the uniform alternative, Bernstein's inequality gives
\[
\Prob[S_i>z\mid y,p]\le\exp\{-z^2/(2\nt+2Bz/3)\}.
\]
Solving $z^2/(2\nt+2Bz/3)=L$ yields \eqref{eq:tardos-sound-proved}. Equal moments do not imply equal MGFs: at $p=.1$, the laws for $y=0$ and $y=1$ are reflected asymmetric two-point distributions.
\end{proof}

\begin{proof}[Proof of Theorem~\ref{thm:finite-completeness}]
Condition on all coalition rows. Biases remain independent across positions under this conditioning. A position containing $k$ ones has posterior bias density proportional to $f_{\delta_c}(p)p^k(1-p)^{s-k}$. Conditional on the rows and an intended output word, the independent BSC flips give the exponential factor in \eqref{eq:finite-coal-mgf}. A randomized strategy is a mixture of such words. Its conditional product of factors is at most the product of the largest allowed factor at each position, even when it chooses its symbols jointly. Now average over the independent coalition columns. The posterior normalization cancels the column probability, and summing over the $\binom{s}{k}$ columns with $k$ ones gives $J_s(t,q)$ per position. Hence $\E[e^{-t\sum_{i\in\Coll}S_i}]\le J_s(t,q)^{\nt}$. If all colluder scores are at most $z$, their sum is at most $sz$. Exponential Markov bounds that event by \eqref{eq:finite-completeness}.

For soundness with \eqref{eq:finite-innoc-mgf}, first condition on biases and the output. Each innocent MGF is bounded by the maximum over the two output symbols. The resulting product depends only on the biases, whose independent draws give $I(a)^{\nt}$ after averaging. Markov and a union over at most $N$ innocents give $N e^{-az}I(a)^{\nt}$. This is an a-priori guarantee over code generation. It does not condition on selecting favorable realized codebooks.
\end{proof}

\begin{proof}[Proof of Lemmas~\ref{lem:coalmean} and~\ref{lem:flip}]
For an independent sign flip, $\E[1-2F]=1-2q$ and $(1-2F)^2=1$. Subtracting the squared new mean from the unchanged second moment gives the variance increase $4q(1-q)(\E[T])^2$. For a unit carrier direction, $|\inner{\eta}{E_{T,b}}|\le\|\eta\|_2$, so a larger absolute noiseless margin cannot change sign.
\end{proof}

\begin{proof}[Proof of Corollary~\ref{cor:secret-row}, Proposition~\ref{prop:adjudicate}, and Lemma~\ref{lem:tamper-feature}]
Antisymmetry makes the score of the complemented decode equal $-S_i$, allowing the same upper-tail certificate to treat negative scores. For each fixed decode, a passing check bounds the allocated tail event. A failed check prevents a certified decision. Union over the $2K$ allocated events proves the total budget without requiring carrier independence. Precedence can only remove decisions. For perfect negation, direct substitution into the score gives the negative absolute increment, whose expectation is $-2\sqrt{p_b(1-p_b)}$.
\end{proof}

\begin{proof}[Proof of Proposition~\ref{prop:kd-robust} and Theorem~\ref{thm:score-stability}]
Cauchy--Schwarz bounds the probe-margin change by $r_b=\|f_S(p_b)-f_T(p_b)\|_2$. A correctly embedded teacher margin at least $\mu$ is preserved if $r_b<\mu$. Charge the $q_{\mathrm{bad}}\nt$ remaining probes in full and use $\#\{b:r_b\ge\mu\}\mu^2\le\sum_b r_b^2$ to prove \eqref{eq:qkd}. This is a deterministic finite-sample inequality.

Changing one binary output reverses its score increment, changing $S_i$ by absolute amount $a_{i,b}$. Summing over changed positions proves \eqref{eq:score-stability}. Maximizing a sum of $k$ nonnegative weights selects the largest $k$. Only probes with small teacher margins or $r_b\ge\mu$ can change, giving \eqref{eq:score-k}. The direct residual-to-margin comparison is valid without in-distribution sampling or independence of the errors. It does not imply that the upper bound is tight.
\end{proof}

\subsection{Credential Proof and Security Accounting}
\label{app:prop-security}
\label{sub:soundness-layers}

\begin{assumption}[Random-oracle and commitment model]
\label{ass:rom}\label{ass:commit}
Fiat--Shamir is modeled with a classical random oracle. Commitments use a fresh random salt and SHAKE-256 with a 256-bit output, modeled as hiding and binding. The ideal query-limited collision bound is $\binom Q2 2^{-256}$. An extraction reduction must account for all of its oracle queries when applying this bound.
\end{assumption}
\begin{assumption}[Computational witness recovery]
\label{ass:xlpn}
Recovering a weight-$w_\tau$ error $e$ with $y\oplus e\in\Img(A)$ from a random registered instance is assumed computationally hard. The experimental parameters are $m=1024$, $l=512$, $\tau=.125$, $w_\tau=128$. Information-set-decoding estimates quantify computational work~\cite{esser2022syndrome}, not forgery probabilities.
\end{assumption}

\begin{definition}[Credential knowledge relation]
\label{def:rel}
The relation is $\mathcal R=\{((A,y),e):\wt(e)=w_\tau,\ y\oplus e\in\Img(A)\}$. A witness $(s,e)$ with $y=As\oplus e$ satisfies it.
\end{definition}

\begin{lemma}[Three-transcript extraction]
\label{lem:special}
Three accepting Stern transcripts with the same commitments and all three challenges reveal a witness for $\mathcal R$, provided the commitments bind and the encoded permutation is valid.
\end{lemma}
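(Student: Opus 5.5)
I would use the standard special-soundness argument for Stern's identification scheme, applied to Algorithm~\ref{alg_verify}. Fix one round with commitments $C_0=\mathrm{com}(\pi,t_0)$, $C_1=\mathrm{com}(t_1)$, $C_2=\mathrm{com}(t_2)$, and suppose accepting responses are given for challenges $0$, $1$ and $2$.

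\textbf{Step 1: consistent openings.} Challenge $0$ opens $C_0$ and $C_1$. Challenge $1$ opens $C_0$ and $C_2$. Challenge $2$ opens $C_1$ and $C_2$. By binding, every opening of the same commitment yields the same value. So a single tuple $(\pi,t_0,t_1,t_2)$ is shared by all three transcripts, and $\pi$ is a valid permutation by hypothesis.

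\textbf{Step 2: define the candidate witness.} Set $e:=\pi^{-1}(t_1)\oplus\pi^{-1}(t_2)=\pi^{-1}(t_1\oplus t_2)$. Permutations preserve Hamming weight, so the challenge-$2$ check $\wt(t_1\oplus t_2)=w_\tau$ gives $\wt(e)=w_\tau$.

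\textbf{Step 3: the image condition.} The challenge-$0$ check gives $t_0\oplus\pi^{-1}(t_1)=Au$ for some $u$. The challenge-$1$ check gives $t_0\oplus\pi^{-1}(t_2)\oplus y=Au'$ for some $u'$. XOR-ing the two eliminates $t_0$:
\[
y\oplus e = A(u\oplus u').
\]
Hence $y\oplus e\in\Img(A)$, so $((A,y),e)\in\mathcal R$ by Definition~\ref{def:rel}. The vector $s=u\oplus u'$ is computable if the verifier's image checks produce preimages; otherwise $s$ can be recovered by linear algebra over $\mathbb F_2$. Either way, the pair $(s,e)$ is a full witness.

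\textbf{Main obstacle.} The delicate step is Step 1, that is, making the binding appeal precise. $C_0$ commits to the pair $(\pi,t_0)$. One must note that the two openings of $C_0$, under challenges $0$ and $1$, agree on both components, and likewise for $C_1$ and $C_2$. Any disagreement would exhibit a commitment collision. Under Assumption~\ref{ass:commit}, such a collision is exactly the failure event that the lemma's binding hypothesis excludes. The reduction's collision accounting $\binom Q2 2^{-256}$ belongs to Proposition~\ref{prop:security}, not to this lemma.

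I would also point out that the extracted $e$ need not equal the prover's original $e_i$. It is nonetheless a valid witness for $\mathcal R$, which is all the lemma claims.
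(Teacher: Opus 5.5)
Your proposal is correct and follows essentially the same route as the paper: binding fixes a common tuple $(\pi,t_0,t_1,t_2)$, XOR-ing the challenge-$0$ and challenge-$1$ image checks gives $y\oplus\pi^{-1}(t_1\oplus t_2)\in\Img(A)$, and the challenge-$2$ check together with permutation invariance of Hamming weight gives $\wt(\pi^{-1}(t_1\oplus t_2))=w_\tau$. Your remarks on recovering $s$ and on the extracted $e$ possibly differing from $e_i$ are harmless additions, since $\mathcal R$ only requires $e$.
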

\begin{proof}
The openings jointly determine $\pi,t_0,t_1,t_2$. The first two checks imply $t_0\oplus\pi^{-1}(t_1)\in\Img(A)$ and $t_0\oplus\pi^{-1}(t_2)\oplus y\in\Img(A)$. XOR gives $y\oplus\pi^{-1}(t_1\oplus t_2)\in\Img(A)$. The third check and permutation invariance give weight $w_\tau$, so $e'=\pi^{-1}(t_1\oplus t_2)$ is a witness. This is the Stern-type extraction used in the xLPN protocol~\cite{jain_krenn_pietrzak,veron,yang_yin_zhu}.
\end{proof}

\begin{lemma}[Ideal Fiat--Shamir knowledge error]
\label{lem:afk}
With ideal binding commitments and uniformly sampled ternary challenges, the $r$-fold parallel protocol has knowledge error $(2/3)^r$. Its single-challenge-phase Fiat--Shamir transform has knowledge error at most $(Q+1)(2/3)^r$ under generalized special-soundness extraction~\cite{attema2026generalized}.
\end{lemma}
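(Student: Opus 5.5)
The plan has two parts. First, analyze the uniformly challenged $r$-fold parallel protocol through special soundness. Then transfer that result to the Fiat--Shamir transform with a generic reduction.

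\textbf{Interactive protocol.} Take a single repetition with commitments $(C_0,C_1,C_2)$ fixed.

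\begin{itemize}
\item By Lemma~\ref{lem:special}, accepting answers to all three challenges yield a witness for $\mathcal R$, provided the commitments bind. So a prover that cannot extract answers at most two of the three challenges, and succeeds with probability at most $2/3$ in that repetition.
\item In the $r$-fold parallel version with independent uniform ternary challenges, call a commitment vector bad if no coordinate admits all three accepting responses.
\item For a bad commitment vector, each coordinate has an accepting challenge set $S_k$ of size at most two. Acceptance therefore requires the challenge vector to lie in a set of density at most $(2/3)^r$.
\item More carefully, this is $(2,\dots,2)$-out-of-$(3,\dots,3)$ special soundness in the sense of Attema et al. The accepting set need not be a product set. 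However, if its density exceeds $(2/3)^r$, a tree-extraction or Schwartz--Zippel-type counting argument produces three accepting transcripts that differ in one coordinate and agree elsewhere.
\item The knowledge-error bound for such $\Gamma$-out-of-$N$ special-sound protocols is exactly $1-\prod_k(1-(\Gamma_k)/N_k)$ with $\Gamma_k=2$, $N_k=3$. This gives $(2/3)^r$. The standard expected-polynomial-time extractor then rewinds on fresh challenges to collect the three transcripts.
\end{itemize}

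\textbf{Fiat--Shamir transform.} The transform has a single challenge phase. The generalized special-soundness result of Attema--Fehr--Klooß~\cite{attema2026generalized} states that a $Q$-query random-oracle prover against a $\kappa$-knowledge-sound protocol has knowledge error at most $(Q+1)\kappa$. Intuitively, each oracle query is a fresh attempt at a good challenge, and the $+1$ accounts for the final output. Plugging in $\kappa=(2/3)^r$ gives $(Q+1)(2/3)^r$. The extractor reprograms the oracle at the query used for the output, so the context $\mathsf{ctx}$ is simply part of the hashed input and is harmless.

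\textbf{Main obstacle.} The delicate step is justifying that the parallel composition has knowledge error $(2/3)^r$ rather than a looser bound. The difficulty is that the accepting challenge set need not be a product set. I would rely on the cited $\Gamma$-out-of-$N$ special-soundness theorem rather than reprove it.

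A secondary point is that commitment binding is idealized here. Any binding failure, bounded by the collision term $\binom Q2 2^{-256}$ of Assumption~\ref{ass:commit}, lies outside this lemma and is accounted for separately.
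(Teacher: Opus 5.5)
Your first counting bullet is correct, and it is essentially the paper's computation. Fix a first message. If no coordinate of the accepting set $A\subseteq\{0,1,2\}^r$ shows all three values, then $A\subseteq\prod_k S_k$ with $|S_k|\le 2$. Hence $|A|\le 2^r$ and $\kappa_\Gamma=(2/3)^r$. That $A$ is not itself a product set is therefore no obstacle.

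The gap is the ``more carefully'' step, where you recast the protocol as $(2,\dots,2)$-out-of-$(3,\dots,3)$ special sound in the multi-round tree sense. That framework models $r$ sequential challenge rounds, not one parallel challenge phase, and it does not give your number. The formula $1-\prod_k(1-\Gamma_k/N_k)$ with $\Gamma_k=2$, $N_k=3$ equals $1-(1/3)^r$, not $(2/3)^r$. The accompanying combinatorial claim is also false. For $r\ge2$, density above $(2/3)^r$ does not force three accepting vectors that differ in one coordinate and agree elsewhere. The set $\{x:\sum_k x_k\not\equiv 0 \pmod 3\}$ has density $2/3$ and contains no such line. Yet it shows all three values in every coordinate, and that is all Lemma~\ref{lem:special} needs. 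Extraction reads only the coordinate-$k$ openings under the common commitments; nothing has to agree elsewhere.

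The Fiat--Shamir step is under-justified in a related way. The cited theorem is not a black box that turns an arbitrary knowledge error $\kappa$ into $(Q+1)\kappa$. Its hypotheses are the $\Gamma$-special-soundness parameters, and you must verify them. Your phrase ``rewinds \dots\ to collect the three transcripts'' also hides a real issue: the extractor cannot know in advance which transcripts will complete a coordinate, so it may need more than three.

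The paper supplies the missing parameters:
\begin{itemize}
\item A \emph{useful} challenge adds a previously unseen value in some coordinate, so $t_\Gamma\le 2r+1$.
\item Useful challenges can be sampled by rejection outside the Cartesian product of already-seen values. A uniform challenge lands in that product with probability at most $(2/3)^r$ before extraction, so the rejection sampling is efficient.
\item Theorem~5 of the generalized special-soundness paper then applies with $T_\Gamma\le 2r+2$. It gives the $(Q+1)\kappa_\Gamma$ bound with at most $(Q+1)(2r+2)/(1-\kappa_\Gamma)$ expected prover calls.
\end{itemize}
Keep your first bullet, drop the multi-round citation, and add these checks; that closes the argument for both claims of the lemma.
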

\begin{proof}
Let $\Gamma$ contain challenge sets exhibiting all three values in some coordinate. Such a set extracts by Lemma~\ref{lem:special}. A non-extracting set has at most $2^r$ vectors, giving $\kappa_\Gamma=(2/3)^r$. Each useful challenge adds an unseen coordinate value, so $t_\Gamma\le2r+1$. Useful challenges can be sampled by rejection outside the Cartesian product of previously seen values. Until extraction its probability is at most $(2/3)^r$. Theorem~5 of~\cite{attema2026generalized} therefore applies with polynomial $T_\Gamma\le2r+2$. Its extractor uses at most $(Q+1)(2r+2)/(1-\kappa_\Gamma)$ expected prover calls and succeeds with probability at least $(\epsilon-(Q+1)\kappa_\Gamma)/(1-\kappa_\Gamma)$. The linear bound on useful challenges establishes efficient knowledge extraction. Commitment failures must be added at the reduction's actual query budget.
\end{proof}

\begin{corollary}[Round-count calibration]
\label{cor:rounds}
For the ideal knowledge-error term to be at most $2^{-129}$, it suffices that
\begin{equation}
\label{eq:rderiv}
r\ge\frac{129+\log_2(Q+1)}{\log_2(3/2)}.
\end{equation}
At $Q=2^{64}$, $330$ rounds suffice and the implementation uses $331$. The ideal term is approximately $2^{-129.623}$.
\end{corollary}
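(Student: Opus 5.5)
The condition in Corollary~\ref{cor:rounds} comes straight from Lemma~\ref{lem:afk}. We require the ideal knowledge-error term $(Q+1)(2/3)^r$ to be at most $2^{-129}$ and solve for $r$. Taking base-2 logarithms, the requirement is
\[
\log_2(Q+1)+r\log_2(2/3)\le-129 .
\]
Since $\log_2(2/3)=-\log_2(3/2)<0$, this rearranges to
\[
r\log_2(3/2)\ge129+\log_2(Q+1).
\]
Dividing by the positive constant $\log_2(3/2)$ gives \eqref{eq:rderiv}. Every step is an equivalence, so \eqref{eq:rderiv} is both necessary and sufficient for the ideal term to meet $2^{-129}$. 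It is in particular sufficient, which is all the statement claims.

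For the numerical part we substitute $Q=2^{64}$. Then $\log_2(Q+1)$ lies strictly between $64$ and $64+2^{-64}/\ln2$, so the right side of \eqref{eq:rderiv} is $(193+\theta)/\log_2(3/2)$ with $0<\theta<10^{-19}$.

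We use $\log_2(3/2)=\log_2 3-1\approx0.5849625$. This gives
\[
193/0.5849625\approx329.936 .
\]
The smallest integer satisfying the bound is therefore $330$, and the perturbation $\theta$ cannot move the value across $330$. To make this rigorous without relying on a calculator, we check the two integer endpoints with explicit bounds on $\log_2 3$:
\[
330\log_2(3/2)\ge 193.0377>193+10^{-19},
\qquad
329\log_2(3/2)<192.5<193 .
\]
The first inequality uses $\log_2 3>1.58496$ and the second uses $\log_2 3<1.585$. These confirm that $330$ suffices and that $329$ does not. Since the left side of the condition increases with $r$, $r=331$ also satisfies it.

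Finally, we evaluate the ideal term at the deployed value:
\[
\log_2\!\big((Q+1)(2/3)^{331}\big)=\log_2(2^{64}+1)-331\log_2(3/2)\approx64-193.623=-129.623 .
\]
The only care needed is the precision of $\log_2(3/2)$: an error of $10^{-6}$ shifts $331\log_2(3/2)$ by about $3\times10^{-4}$, which is below the stated three decimals. The main obstacle is therefore only bookkeeping, namely keeping the negligible $+1$ in $Q+1$ and the rounding of $\log_2 3$ controlled by interval endpoints rather than floating-point output. No cryptographic argument is needed beyond invoking Lemma~\ref{lem:afk}. Commitment failures are excluded by the word "ideal" in the corollary.
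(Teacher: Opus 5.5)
Your proposal is correct and follows the paper's own argument: take base-2 logarithms of $(Q+1)(2/3)^r\le2^{-129}$, solve for $r$, round up, and note that the $+1$ in $\log_2(2^{64}+1)$ cannot change the integer result $330$. One harmless arithmetic slip: from $\log_2 3>1.58496$ you only get $330\log_2(3/2)>193.0368$, not ${\ge}\,193.0377$ (the true value is about $193.03763$, just below what you wrote), but this still exceeds $193+10^{-19}$, so your conclusion stands.
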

\begin{proof}
Take logarithms of $(Q+1)(2/3)^r\le2^{-129}$ and round upward. Here $\log_2(2^{64}+1)>64$. Keeping that term does not change the integer result. The experimental mapping of 16-bit words modulo three has maximum two-answer probability $43691/65536$, giving approximately $2^{-129.619}$ instead. The verifier uses this mapping, with its bias included in the calibration. Rejection sampling provides an exactly uniform alternative.
\end{proof}

\begin{proof}[Proof of Proposition~\ref{prop:security}]
\emph{Completeness.} Honest openings satisfy the three algebraic checks for their respective challenges. If the required presence condition also holds, verification accepts. This is conditional completeness. Noisy extraction can fail presence even for an honest credential holder.

\emph{Knowledge and zero knowledge.} The extraction statement follows from Lemmas~\ref{lem:special} and~\ref{lem:afk} in the ideal model. A simulator chooses challenges first, samples the corresponding accepting masked openings, commits arbitrary hidden values in unopened slots, and programs the challenge oracle at the complete commitment/context input. Hiding and fresh commitment entropy bound the distinguishing effects of unopened values and prior oracle queries. This yields the usual computational random-oracle simulation, subject to the commitment assumptions~\cite{damgard,attema2022fiatshamir}.

\emph{Binding.} The challenge context contains the extracted component, registered credential, and a canonical digest of the complete model state. An unchanged transcript then transfers to different model bytes only through a digest collision or a new-context challenge coincidence. Binding only the extracted component would not distinguish models with equal extracted bits. Model-state binding authenticates the credential statement rather than authorship: a public watermark can be copied, and a legitimate witness holder can authenticate after that copying.
\end{proof}

\begin{table}[t]
\centering\footnotesize
\caption{Security quantities and their separate meanings.}
\label{tab:soundness-split}
\begin{tabular}{p{.36\columnwidth}p{.52\columnwidth}}
\toprule
Quantity & Interpretation\\\midrule
Ideal FS knowledge error & $(Q+1)(2/3)^r$; extraction statement\\
Experimental ternary mapping & replace $2/3$ by $43691/65536$ for grinding calibration\\
Commitment failure & bounded at the reduction's total query budget\\
xLPN recovery work & estimated attack cost; not $2^{-128}$ forgery probability\\
Tracing false accusation & certified tail budget under A5(a)\\
Tracing completeness & finite bound \eqref{eq:finite-completeness} under A5(b)\\
\bottomrule
\end{tabular}
\end{table}

An end-to-end credential-forgery reduction must additionally bound witness-recovery advantage and extraction cost. The implementation validates encodings, uses operating-system randomness for proof masks and permutations, and binds transcripts to the full model state. Tracing-code generation uses a domain-separated SHAKE-256 stream with a 32-byte secret key. The experiments use fixed codebooks generated from a fixed experimental key and 64-bit NumPy seeds for reproducibility. These reproducible codebooks support experimental comparisons. Operational secrecy requires secret cryptographic generation, authenticated decoder setup, and the row-independence premise.

\section{Extended Results}
\label{app:extended-results}

\subsection{Configuration Details}
\label{sub:config}

The ten fixed seeds behind every reported mean~$\pm$~standard deviation are $\{42, 137, 271, 314, 1729, 2718, 3141, 5772, 6561, 9999\}$. Collusion traceability averages $20$ random coalitions' copies per size for the weight carrier and $50$ for the feature carrier, pooled over the ten seeds. Both training phases of Sec.~\ref{label_experiments} use SGD with momentum $0.9$, weight decay $5{\times}10^{-4}$, and cosine annealing, at learning rate $0.01$ for the 200-epoch cross-entropy pre-training and $\eta{=}0.001$ annealed over the $R$ federated rounds.

\textbf{Deployment scope of the simulated federation.} All stations use partitions of one GNSS recording campaign~\cite{heublein_feigl_crpa}. The Dirichlet partition varies class coverage and sample count, but does not measure physical differences between sites. We sweep label skew to $\alpha{=}0.1$ and participation to $30\%$ per round (Table~\ref{tab:noniid}). Extreme skew leaves some clients unable to form a five-way episode. Partial participation models absence for an entire round. The receiver-shift experiment adds a $+3$\,dB gain offset and per-station SNRs of $5$--$20$\,dB (App.~\ref{sub:stressors}).

These tests do not cover distinct antenna and front-end calibrations, independent multipath, or local interference at separately sited receivers. The shift sweep is a proxy for receiver variation. Compute heterogeneity is also untested: all clients use the same local-episode budget, so the evidence does not cover stragglers, mid-round dropout, or unequal training progress.

Physical deployment may change extraction errors, the required length in \eqref{eq:design}, and tracing completeness. Each deployment decode requires its own interval certificate (Remark~\ref{rem:deployed-vs-provable}). Under A5(a), innocent scores have mean zero and variance $\nt$, but their tail bounds depend on the recovered bits and biases. In the proxy experiment, tracing persists in all $100$ trials while exact single-leaker isolation degrades toward chance.

\subsection{Attribution under Non-IID Data and Partial Participation}
\label{sub:noniid}

Table~\ref{tab:noniid} separates attribution over the full roster from attribution over clients that can train. With $30\%$ participation per round, roster-wide attribution remains $98\%$ on GNSS and $100\%$ on CIFAR-10. Under strong label skew, some clients have too few classes for a five-way episode. They do not embed a mark, even though no shard is empty. At $\alpha{=}0.5$, GNSS roster-wide attribution is $79\%$. Participating-only attribution is $100\%$ in every reported cell on both datasets. The sweep therefore identifies episode formation as the source of the roster-wide losses in these configurations.

\renewcommand{\arraystretch}{0.95}
\begin{table}[t]
\setlength{\tabcolsep}{4pt}
    \centering
    \caption{Attribution and few-shot accuracy under label non-IID (Dirichlet concentration $\alpha$) and partial participation (per-round client fraction $q$), $N_c{=}10$, 10-seed mean$\pm$std, GNSS first. \emph{Raw} denotes attribution over all ten clients. Participated-only attribution is $100\%$ in every cell (footnote). $^{\ast}$~The deployed configuration is evaluated in an independent run of the sweep. Its GNSS accuracy sits within one standard deviation of Table~\ref{tab:attribution}, whose wider-spread GNSS partition draws are the source of the difference.}
    \label{tab:noniid}
    \begin{tabular}{lcccc}
    \toprule
    & \multicolumn{2}{c}{\textbf{GNSS}} & \multicolumn{2}{c}{\textbf{CIFAR-10}} \\
    \cmidrule(lr){2-3}\cmidrule(lr){4-5}
    \textbf{Setting} & \textbf{Attr.\,(\%)} & \textbf{Acc\,(\%)} & \textbf{Attr.\,(\%)} & \textbf{Acc\,(\%)} \\
    \midrule
    $\alpha{=}0.1,\ q{=}1.0$ & $16.0_{\pm12.0}$ & $58.2_{\pm9.4}$ & $51.0_{\pm14.5}$ & $74.6_{\pm4.5}$ \\
    $\alpha{=}0.25,\ q{=}1.0$ & $32.0_{\pm16.6}$ & $84.1_{\pm8.3}$ & $91.0_{\pm7.0}$ & $81.7_{\pm0.6}$ \\
    $\alpha{=}0.5,\ q{=}1.0$ & $79.0_{\pm13.0}$ & $92.0_{\pm0.9}$ & $100.0_{\pm0.0}$ & $83.3_{\pm0.4}$ \\
    $\alpha{=}2.0,\ q{=}1.0^{\ast}$ & $98.0_{\pm4.0}$ & $92.3_{\pm2.6}$ & $100.0_{\pm0.0}$ & $84.9_{\pm0.2}$ \\
    $\alpha{=}2.0,\ q{=}0.5$ & $98.0_{\pm4.0}$ & $91.4_{\pm2.2}$ & $100.0_{\pm0.0}$ & $84.5_{\pm0.3}$ \\
    $\alpha{=}2.0,\ q{=}0.3$ & $98.0_{\pm4.0}$ & $90.7_{\pm2.2}$ & $100.0_{\pm0.0}$ & $84.0_{\pm0.4}$ \\
    \bottomrule
    \end{tabular}

    \vspace{2pt}
    {\footnotesize Participated-only attribution (restricted to clients that form at least one episode) is $100\%$ in every cell on both datasets, and no shard is ever empty. Under $\alpha{=}0.1$ a mean of $4.1$ (CIFAR-10) and $0.5$ (GNSS) of ten clients can assemble a $5$-way $5$-shot episode, so the raw rate tracks episode-formability rather than attribution loss. On GNSS no client forms one in six of the ten seeds, leaving that cell's participated-only figure resting on the remaining four.}
\end{table}

\subsection{Tracing, Collusion, and Anti-Framing}
\label{sub:tracing-ext}

\textbf{Collusion for removal, extended settings.} Section~\ref{sec:eval_tracing} reports traceability against coalition size $k$ at the operating point. Two extensions complete the picture. The weight-space overlay repeats its profile at $N_c{=}40$, tracing all $k{=}2$ and $k{=}3$ coalitions ($10/10$) and $0.70$ of $k{=}5$ on GNSS ($0.20$ on CIFAR-10), again with zero framing. In the adaptive-steering experiment, the simulator edits the decoded word toward the coalition majority. It does not optimize model parameters toward a chosen innocent. At $k{=}2$, every tested steered coalition is traced. Stronger steering lowers the colluders' own flip rate from $0.28$ to $0.11$, increasing traceability in these configurations. No innocent is framed across $400$ steered trials. These observations characterize the tested attack. The formal soundness guarantee remains subject to Theorem~\ref{thm:tardos} and Assumption~\ref{ass:leak}. The mean disagreement with individual colluder rows $\bar q$ on the weight carrier rises with $k$, measuring $0.22/0.24/0.26/0.26$ (GNSS) and $0.25/0.26/0.27/0.28$ (CIFAR-10) at $k{=}2/3/5/8$.

\textbf{Cooperation-free tracing and registry adjudication.} Tracing an unknown-origin model needs no help from the leaker. Rebuilding the carriers from the public seeds and decoding the copy offline names the exact leaker in all ten seeds ($10/10$),\footnote{The released per-seed logs record these two outcomes under the metric identifiers \texttt{trace\_exact} and \texttt{frame\_tardos\_hit}.} and the registry adjudicates the accusation against the enrolled commitments in all ten ($10/10$), so a disputed model is resolved without the suspect ever participating. This closes the gap left by proof-of-ownership schemes that confirm only a cooperating owner.

\textbf{Anti-framing and credential verification.} Copying a victim's public identity codeword raises its self bit-accuracy to $1.00$, showing that public-codeword presence alone can be forged. The secret overlay accuses the framed victim in $0/10$ trials. The credential proof accepts the legitimate victim in $10/10$ trials and accepts the tested forgeries in $0/10$. Registry adjudication also rejects evidence substitution in all ten trials.

The credential acceptance and forgery benchmarks use transcripts bound to the extracted component and public credential. To evaluate full-state binding, we change the model state while holding the extracted component fixed. The verifier rejects the transferred transcript. These empirical rejection counts are distinct from the proof knowledge-error bound below $2^{-128}$ at $r=331$ under the stated query budget (Proposition~\ref{prop:security}). Wrong-client, server-side, and replay attempts are included in the forgery tests. A credential proof costs approximately $0.6$\,MB and $37$\,s per client. The repetitions can be verified in parallel. This is an offline dispute procedure, not part of each training round.

\textbf{Dispatch without client data.} The two carriers share every downstream step, the code, the registry, credential authentication, and the conditional score bound (Remark~\ref{rem:instantiation}, Theorem~\ref{thm:tardos}). Section~\ref{sec:eval_tracing} sets their operational trade side by side. Neither requires the operator to hold client data. Writing the feature carrier from a server-held proxy set reproduces the same profile, $100/100$ isolation and $10/10$ distillation survival at a $4$--$5$-point copy cost, with tracing unaffected by adaptive steering. The deployed global model is untouched by either carrier, since both are written only into dispatched copies (App.~\ref{label_method_dispatch}). The global-accuracy cost is zero.

\subsection{Distillation Leveling}
\label{sub:kd-leveling}

\textbf{The single cross-architecture failure.} ResNet-18-to-ResNet-34 distillation preserves the feature mark in $19/20$ runs. The failed run is GNSS seed $271$: its post-distillation flip rate is $0.4976$ and its score is $54.8$. Its same-architecture flip rate was $0.147$, below the GNSS median. The increase of $0.351$ is the largest among all twenty runs, so the failure reflects a large architecture-induced change rather than a mark already close to chance.

The sweep holds the probes, code, and $80$-epoch budget fixed. ResNet-34 is chosen because its penultimate layer has the same width, $512$, as the teacher's. On GNSS, the mean flip rate increases from $0.195$ under same-architecture distillation to $0.291$ under cross-architecture distillation, a rise of $0.096$. On CIFAR-10 it increases from $0.119$ to $0.165$, a rise of $0.046$. Cross-architecture rates also span a wider range on GNSS ($0.165$--$0.498$) than on CIFAR-10 ($0.132$--$0.211$). The feature carrier's clean flip rates are below $0.09$ and $0.002$, respectively.

Equation~\eqref{eq:qkd} provides an interpretation: feature-matching errors that exceed the teacher's probe margins can flip decoded bits. It bounds the fraction of potentially changed bits. Theorem~\ref{thm:score-stability} translates those changes into a score bound by weighting each position by its score impact. The measured flip rates are outcomes, however, and do not directly measure $\eKD$ or establish that this upper bound is tight. GNSS's sparse four-channel inputs and subtler inter-class differences are plausible sources of transfer difficulty, but the experiment does not isolate them. The larger per-class recall cost on GNSS ($10.9$ versus $4.4$ points) is consistent with a more demanding representation problem. The baseline scale $G$ appears only in the weight-carrier bound and does not explain this feature-carrier failure.

The failed run loses the mark without a false accusation. Across the distillation evaluation, no innocent is framed in the $180$ runs returning a framing verdict. The remaining twenty rows are the DeepMarks-BIBD weight-carrier comparator under cross-architecture transfer, for which the BN-$\gamma$ extraction layout is incompatible with ResNet-34 and no framing verdict is returned.

\textbf{Distillation across the benchmark.} Table~\ref{tab:kd} evaluates each native mark after the same $80$-epoch distillation. FedIPR, FedZKP, and our weight carrier lose their marks on both datasets ($0/10$ per dataset). FedTracker's attribution is at chance ($0.10$). Output-space methods vary by dataset: WAFFLE retains its mark in $3/10$ GNSS runs and $0/10$ CIFAR-10 runs, while DUW retains its per-client key in $3/10$ and $10/10$, respectively. WAFFLE's clean GNSS mark is already near its chance level.

DUW's CIFAR-10 survival shows that a key in the teacher's function can transfer to the student, but DUW lacks a collusion-secure code and certified false-accusation bound. Our feature carrier combines that transfer mechanism with coded tracing. It survives feature matching in $20/20$ runs and cross-architecture transfer in $19/20$. Logit-only distillation and feature isometry erase it ($0/20$ each; Table~\ref{tab:kdclass}). Thus the observed advantage is specific to distillers that reproduce the feature geometry closely enough, as required by Assumption~\ref{ass:kd}.

\renewcommand{\arraystretch}{0.95}
\begin{table}[t]
\setlength{\tabcolsep}{5pt}
    \centering
    \caption{Feature-carrier survival by distiller class, 20 seeds (10 GNSS $+$ 10 CIFAR-10). Survival counts are shown for each tested class. Weight-space marks fail where extraction is defined. The incompatible cross-architecture DeepMarks-BIBD runs return no framing verdict.}
    \label{tab:kdclass}
    \begin{tabular}{llc}
    \toprule
    \textbf{Distiller} & \textbf{Student copies} & \textbf{Survival} \\
    \midrule
    Feature-matching (FitNets)       & feature geometry & \textbf{\textcolor{ForestGreen}{$20/20$}} \\
    Cross-architecture (R18$\to$R34) & feature geometry & \textcolor{ForestGreen}{$19/20$} \\
    Logit-only KD                    & output function  & \textcolor{red}{$0/20$} \\
    Feature isometry                 & output function  & \textcolor{red}{$0/20$} \\
    \bottomrule
    \end{tabular}
\end{table}

\renewcommand{\arraystretch}{0.95}
\begin{table}[t]
\setlength{\tabcolsep}{5pt}
    \centering
    \caption{Distillation leveling: native mark retained after $80$-epoch KD, as a fraction of $10$ seeds surviving (GNSS first). Under this feature-matching protocol, weight-space marks fail, output-space results depend on the dataset, and the feature carrier survives on both datasets.}
    \label{tab:kd}
    \begin{tabular}{llcc}
    \toprule
    \textbf{Method} & \textbf{Carrier} & \textbf{GNSS} & \textbf{CIFAR-10} \\
    \midrule
    \textbf{Ours (feature)} & feature-space & \textbf{\textcolor{ForestGreen}{$10/10$}} & \textbf{\textcolor{ForestGreen}{$10/10$}} \\
    Ours (weight) & weight-space & \textcolor{red}{$0/10^{\ddagger}$} & \textcolor{red}{$0/10^{\ddagger}$} \\
    FedIPR & weight-space & \textcolor{red}{$0/10$} & \textcolor{red}{$0/10$} \\
    FedZKP & weight-space & \textcolor{red}{$0/10$} & \textcolor{red}{$0/10$} \\
    FedTracker & weight-space & $0.10^{\dagger}$ & $0.10^{\dagger}$ \\
    DUW & output-space & $0.30^{\dagger}$ & $1.00$ \\
    WAFFLE & output-space & $0.30$ & $0.00$ \\
    \bottomrule
    \end{tabular}

    \vspace{2pt}
    {\footnotesize $^{\dagger}$~Consistent with per-seed Bernoulli outcomes at the $1/N_c{=}0.1$ chance level (upper-tail $p{=}0.65$ for FedTracker, $p{=}0.07$ for DUW on GNSS). DUW on CIFAR-10 is genuine survival, not chance ($10/10$, $p{\approx}10^{-10}$ under the chance model), and is discussed in the text. $^{\ddagger}$~Retention criterion for our weight carrier: self bit-accuracy, which collapses to its $0.5$ chance level in every seed. Under a twice-chance attribution criterion (attribution ${>}0.2$ against chance $0.1$) the cells read $0/10$ on GNSS and $2/10$ on CIFAR-10.}
\end{table}

\subsection{Insider Attacks and Deployment Stressors}
\label{sub:stressors}

\textbf{Own-row erasure: scores and budgets.} Negated erasure moves the feature score from approximately $+1317$ to $-1265$. The two-sided tamper test ($Z_{\mathrm{two}}{=}201.4$) detects it in all twenty seeds, at a task-accuracy cost of three to four points. On the weight overlay, the same strategy leaves only an uncertified forensic lead. Fresh erasure has the complementary effect: it drives the feature score near zero, while the weight overlay still traces the true row in all twenty runs. Neither strategy frames an innocent.

The hybrid combines weight negation with fresh feature erasure. It escapes the threshold-only tests on both carriers in $5/20$ runs at matched compute, $20/20$ at twice that budget, and $0/20$ at four times the budget, where the weight score crosses the negative tamper threshold. The budgets are $300$, $600$, and $1200$ episodes, with twenty runs per level. Appendix~\ref{ssub:adjudicate} gives the score ranges and geometric explanation. These observations characterize the tested budgets rather than establish an upper bound on an adaptive hybrid's success. Framing remains $0/20$ at each level.

\textbf{Robust aggregation.} With coordinate-wise median aggregation, attribution is $1.00$ on CIFAR-10 and $0.98$ on GNSS, matching the reported FedAvg rates. Self bit-accuracies are $0.92$ and $0.95$, respectively. Task accuracy falls to $80.2\%$ on CIFAR-10 and $63.4\%$ on GNSS, with high GNSS variance. These results show that the tested median configuration retains identity attribution. They do not establish invariance to arbitrary robust aggregators.

\textbf{Receiver covariate shift (GNSS).} We apply a $+3$\,dB gain offset and per-station SNRs from $5$ to $20$\,dB to model receiver variation. Tracing succeeds in all $100$ shifted trials, and aggregate attribution remains $0.98$. Exact single-leaker isolation degrades toward chance under the same perturbations. Thus retaining a tracing accusation does not imply retaining the stronger isolation outcome.

\textbf{Per-class cost on dispatched copies.} We evaluate class recall against a clean control over ten seeds in the original label space. Mean recall falls by $4.4$ points on CIFAR-10 ($0.844\to0.799$), approximately uniformly across classes, and by $10.9$ points on GNSS ($0.932\to0.823$). GNSS losses range from $0.9$ to $18.0$ points, with the largest losses on the harder interference types. These class-level differences provide information that episode-averaged accuracy does not capture.

The benchmark excludes the interference-free class, so it cannot measure false alarms on clean signals. The reported \emph{far\_proxy} is instead inter-class confusion, $P(\mathrm{pred}\ne\mathrm{true}\mid\mathrm{true}=c)$. Open-set abstention is not evaluated. These costs concern the dispatched copies, not the global model tested in App.~\ref{sub:utility-ext}.

\textbf{Embedding channel.} The weight-space carrier is held almost entirely by the BN~$\gamma$ scales, whose standard deviation broadens by $6.7$--$7.1\times$ while $\beta$ is nearly untouched, consistent with a hinge loss that steers each projection to the correct sign with margin. The representation changes are consistent with, but do not isolate, the effect of the identity hinge on the feature-space geometry reported in Section~\ref{sec:eval_robustness}.

\subsection{Global-Model Utility Equivalence}
\label{sub:utility-ext}

\begin{table}[t]
\centering\footnotesize
\caption{Paired global-model equivalence at $N_c=10$, ten seeds. $\Delta$ is watermark-on minus watermark-off accuracy in percentage points. Confidence intervals are paired 90\% intervals. Both TOST tests pass at their stated margins.}
\label{tab:utility}
\begin{tabular}{lrrrr}
\toprule
Dataset & $\Delta$ & 90\% CI & Margin & TOST $p$\\
\midrule
GNSS & $-0.04$ & $[-0.585,0.505]$ & 1.0 & 0.0052\\
CIFAR-10 & $+0.24$ & $[-0.018,0.498]$ & 0.5 & 0.0489\\
\bottomrule
\end{tabular}
\end{table}

\textbf{Matched global-model comparison.} We pair the clean rows of the ten per-seed attribution-attack tables with the corresponding watermark-off controls, using the same dataset and seed and differing only in $\lambda_{\mathrm{wm}}$. The test uses the sample standard deviation of paired differences and Student's $t$ distribution with nine degrees of freedom. Table~\ref{tab:utility} reports the mean, 90\% interval, and the larger one-sided TOST $p$-value. Both datasets establish equivalence at the specified margins. This result concerns episodic accuracy at $N_c=10$ and does not establish equality of feature geometry or dispatched-copy utility. The reproduction artifact records every input path, file hash, and paired difference.

\subsection{Identity-Survival Diagnostics}
\label{sub:tightness}

The quadrature approximation uses the measured pretrained scale norm $G$ ($7.32$ on CIFAR-10, $4.73$ on GNSS), with no fitted coefficient. Its per-bit prediction and the observed self-agreement both decrease with identity load in Table~\ref{tab:survival}. The observations exceed the approximation in every cell. Repeated embedding may contribute to that gap, but this comparison does not establish the geometry or conditional independence needed by Theorem~\ref{thm:perbit}.

At $N_c=40$, GNSS and CIFAR-10 have similar self-agreement ($0.77$ and $0.76$), but attribution rates of $0.87$ and $1.00$. Mean agreement therefore does not determine attribution: the decoding radius and minimum codeword separation in Theorem~\ref{thm:attr} control the competing decisions. The smaller GNSS scale norm improves the quadrature prediction and cannot explain its earlier attribution loss. Above $\rho=1$, the projection system is overcomplete, but rank alone does not force decoding failure. These data establish the dataset difference without identifying its cause.

\subsection{Adversarial Attack Protocol}
\label{sub:attack-protocol}

\textbf{Model-modification attacks.} Attack~(1) prunes 5--90\% of the smallest-magnitude BN~$\boldsymbol{\gamma}$ values. Its structured variant removes entire channels, including their $\gamma$, $\beta$, and running statistics. Attack~(2) adds Gaussian noise $\mathcal{N}(0,\sigma\,\mathrm{std}(\boldsymbol{\gamma}))$ to all BN~$\boldsymbol{\gamma}$, with $\sigma\in\{0.01,0.05,0.1,0.5,1.0,2.0\}$. Attack~(3) uniformly quantizes BN~$\boldsymbol{\gamma}$ to $b$ bits, where $b\in\{16,8,6,4,3,2\}$. Attack~(4) combines pruning and quantization in six configurations.

\textbf{Targeted attacks.} Attack~(5) uses norm-constrained PGD to maximize a bit-flip objective. Each step is projected into an $\epsilon$-ball around the watermarked $\boldsymbol{\gamma}$, using an $\ell_\infty$ clamp or $\ell_2$ rescaling with $\epsilon\in\{0.05,0.1,0.2\}$. Attack~(6) resets the BN~$\boldsymbol{\gamma}$ values in selected ResNet-18 stages to their default $1.0$. We test resets from a single stage through all BN layers.

\textbf{Training-based attacks.} Attack~(7) fine-tunes the model for $10$--$500$ episodic ProtoNet episodes without the watermark term. Attack~(8) distills a freshly initialized ResNet-18 student for $5$--$100$ epochs using the same data and optimizer. The feature-matching objective combines mean-squared embedding error with a Kullback--Leibler term on temperature-softened embedding coordinates ($T=4$), following the hint-matching approach of~\cite{romero2015fitnets}. Both terms align feature coordinates without matching task logits.

The distillation-class sweep distinguishes this objective from soft-label distillation~\cite{hinton_vinyals_dean}. It tests feature matching, cross-architecture transfer to ResNet-34, logit-only distillation, and feature isometry. The isometry composes a frozen backbone with a random orthogonal map, preserving the output function while changing feature coordinates (Sec.~\ref{sec:eval_robustness}).

\textbf{Insider and deployment stressor budgets.} Own-row erasure targets the recipient's tracing row on each carrier. The negation strategy writes the opposite bits, while fresh erasure writes an independent random row. The hybrid combines weight-overlay negation with fresh feature-carrier erasure. We test $300$, $600$, and $1200$ attack episodes, corresponding to $1$--$4{\times}$ the matched split budget. For GNSS receiver covariate shift, we perturb the GNSS front-end with a $+3$\,dB gain offset and sweep per-station SNR over $5$--$20$\,dB.